\documentclass[
  a4paper,UKenglish,cleveref,autoref,thm-restate
]{lipics-v2021}

\pdfoutput=1 
\hideLIPIcs  

\nolinenumbers

\usepackage{hyperref}  

\usepackage{bm}
\usepackage{cleveref}
\usepackage{amsmath,amssymb,amsfonts}
\usepackage{amsthm}

\usepackage{graphicx}
\usepackage{textcomp}
\usepackage[dvipsnames]{xcolor} 

\usepackage{multirow}     
\usepackage{booktabs}     
\usepackage{array}        

\usepackage{subcaption}
\usepackage{cleveref}

\usepackage{xcolor}
\usepackage{xpatch}  

\usepackage{orcidlink}
\usepackage{microtype}
\usepackage[utf8]{inputenc}

\usepackage{graphicx}
\usepackage{tikz}  
\usepackage{mathtools}
\usepackage{amsmath}
\usepackage{diagbox}
\usepackage{xstring}
\usepackage{algorithm2e}

\def\BibTeX{{\rm B\kern-.05em{\sc i\kern-.025em b}\kern-.08em
    T\kern-.1667em\lower.7ex\hbox{E}\kern-.125emX}}
\usepackage{mathtools}

\newcommand{\bigmlor}{%
  \mathop{%
    \bigvee\nolimits_{\mathrlap{\!m}}\kern-\scriptspace
  }\displaylimits
}
\newcommand{\mlor}{\mathbin{\lor_{\!m}}}

\usepackage{newfloat}
\usepackage{listings}

\title{Short Resolution Refutations for CNFs with Bounded Weighted Incidence Treewidth} 

\author{Shaowei Cai\footnote{Corresponding author.}}
  {Key Laboratory of System Software (Chinese Academy of Sciences), Beijing, China
   \and Institute of Software, Chinese Academy of Sciences, Beijing, China
   \and University of Chinese Academy of Sciences, Beijing, China}
  {caisw@ios.ac.cn}{}{}

\author{Ziqun Li\footnote{First author.}}
  {Key Laboratory of System Software (Chinese Academy of Sciences), Beijing, China
   \and Institute of Software, Chinese Academy of Sciences, Beijing, China
   \and University of Chinese Academy of Sciences, Beijing, China}
  {liziqun@ios.ac.cn}{}{}

\authorrunning{S. Cai and Z. Li} 

\Copyright{Shaowei Cai and Ziqun Li} 

\ccsdesc[500]{Theory of computation~Proof complexity}

\keywords{proof complexity, resolution refutation, incidence treewidth, fixed-parameter tractability}

\category{} 

\relatedversion{} 

\acknowledgements{}

\begin{document}

\hypersetup{
  pdfsubject={}
}

\maketitle

\begin{abstract}
It is an open problem in proof complexity whether every
unsatisfiable CNF formula has an FPT-sized resolution refutation
parameterized by incidence treewidth.
In this paper, we establish several upper bounds on resolution refutation length related to this problem.

Consider an unsatisfiable CNF formula $F$ with $n$ variables,
$m$ clauses, maximum clause width $k$, and incidence treewidth
$\mathrm{tw}^*(F)$.
In this paper, we introduce two variants of incidence treewidth.
Their definitions can be stated informally as follows.
The first is log-weighted incidence treewidth
$\mathrm{tw}_{\log}^*(F)$, which is the treewidth of the weighted incidence graph, in which variables have weight one
and each clause has weight equal to the logarithm of its width.
The second is partially log-weighted incidence treewidth
$\mathrm{tw}^*_{\mathrm{plog}}(F)$, which is a refinement of
log-weighted incidence treewidth.
In this variant, for a nice tree decomposition of the incidence graph, each clause has weight one along a path
selected for that clause and elsewhere has weight equal
to the logarithm of one plus the number of its literals
whose variables do not appear in any bag on that path,
and variables have weight one.

For every unsatisfiable CNF formula $F$, we prove the existence
of (i) an FPT-sized resolution refutation parameterized by
log-weighted incidence treewidth, with width at most
$\mathrm{tw}_{\log}^*(F)+k$;
(ii) a resolution refutation of length
$(n+m)k^{O(\mathrm{tw}^*(F))}$ and width at most
$\mathrm{tw}^*(F)+k$;
(iii) an FPT-sized resolution refutation parameterized by
partially log-weighted incidence treewidth; and
(iv) an FPT-sized regular resolution refutation parameterized
by log-weighted incidence treewidth.

Our main idea is to construct FPT-sized $k$-DNF resolution
refutations parameterized by incidence treewidth, and then
convert them into resolution refutations.
\end{abstract}

\paragraph*{Declaration of AI use}

Most results in this paper were discovered independently by
the authors, with the following exceptions:

The construction in Section~\ref{sec:regular-resolution}
is the result of collaboration between the authors and ChatGPT. For further details on the use of ChatGPT in this construction,
see Remark~\ref{rem:ai-regular-construction}.
Some auxiliary results and arguments, including the footnote in Subsection~\ref{subsec:comparison},
Theorem~\ref{thm:plog-one-sided}, and the argument concerning computability in Theorem~\ref{thm:subpolynomial-clause-width},
are credited to ChatGPT.

We also used ChatGPT to search the literature, check the
correctness of proofs, and assist with writing in English.
The authors have read and revised all AI-generated text to ensure
its correctness and readability. The authors take full responsibility
for the content of this paper.

\setcounter{tocdepth}{2}
\tableofcontents
\clearpage

\section{Introduction}

The resolution proof system is fundamental to propositional proof
complexity and is closely related to conflict-driven clause-learning
(CDCL) algorithms for SAT~\cite{BeameKautzSabharwal2004,
PipatsrisawatDarwiche2011,AtseriasFichteThurley2011}. The study of the length and width of resolution
refutations has been a central topic in proof complexity.
Haken~\cite{Haken1985} established the first exponential lower bounds for general
resolution, using formulas encoding the pigeonhole
principle.
Ben-Sasson and Wigderson~\cite{BenSassonWigderson2001} proved the well-known relations between
refutation length and width, providing a general method for deriving
length lower bounds from width lower
bounds.
Alongside these lower bounds, an important direction is to identify under which structural conditions unsatisfiable CNF formulas have
short refutations. The relationship between the treewidth of graph representations of
CNF formulas and their resolution refutation length has received
attention in previous work.

Two standard graph representations of CNF formulas are the primal
graph and the incidence graph.
Samer and Szeider~\cite{SamerSzeider2010} developed dynamic-programming
algorithms on tree decompositions of these graphs, showing that \#SAT, and hence SAT, is fixed-parameter tractable
when parameterized by primal treewidth and incidence treewidth.

The algorithmic tractability of CNF formulas with small incidence
treewidth leads to a corresponding question about their
resolution complexity.
In the report of Dagstuhl Seminar~19041, Szeider discussed
the question of whether unsatisfiable CNF formulas have
FPT-sized resolution refutations parameterized by incidence
treewidth~\cite[Section~4.14]{FominEtAl2019}.
This asks how the structure captured by an incidence tree decomposition
can be used to construct short resolution refutations.

In this paper, we make partial progress on this question
by establishing new upper bounds on resolution refutation length.

\subsection{Related Work} \label{sec:related-work}

It is known that unsatisfiable CNF formulas have FPT-sized
resolution refutations parameterized by primal
treewidth~\cite{RishDechter2000,AlekhnovichRazborov2011}. And it is also well known that the incidence treewidth of a CNF is not greater than its primal treewidth plus one. However, primal treewidth can be arbitrarily large even when
incidence treewidth is bounded. Therefore, to obtain FPT-sized
resolution refutations parameterized by incidence treewidth,
we need to improve these results.

Imanishi~\cite{Imanishi2017} proved that unsatisfiable CNF formulas
have FPT-sized regular resolution refutations parameterized by incidence
pathwidth.

Kolaitis and Vardi~\cite{KolaitisVardi2000} showed that a CNF formula
of incidence treewidth $w$ and maximum clause width $k$ has primal
treewidth at most $k(w+1)-1$.
Consequently, unsatisfiable CNF formulas of bounded clause width
have FPT-sized resolution refutations parameterized by incidence
treewidth.

Samer and Szeider~\cite{SamerSzeider2010CSP} showed that a CNF formula
$F$ of incidence treewidth $w$ can be transformed into an
equisatisfiable CNF formula $F'$ with at most three literals per
clause and primal treewidth at most $3w+3$.
As pointed out in the report of Dagstuhl Seminar~19041%
~\cite[Section~4.14]{FominEtAl2019}, if $F'$ is unsatisfiable,
the known upper bounds for primal treewidth give an FPT-sized
resolution refutation of $F'$ parameterized by $w$, but $F'$
contains additional variables absent from $F$.

F{\"u}rer~\cite{Furer2012} showed that a CNF formula of incidence
treewidth $w$ can be transformed into an equisatisfiable CNF formula
of primal treewidth at most $3w$.
Using this construction, he obtained FPT-sized refutations
parameterized by $w$ in an extension of resolution.
Actually, this extension can be implemented by introducing
new variables.

In a preprint, Cal\`i and Razgon~\cite{CaliRazgon2022} introduced
one-sided incidence treewidth and claimed that unsatisfiable
CNF formulas have FPT-sized regular resolution refutations
parameterized by $w$ and $p$, provided that one can delete at most $p$ clauses to obtain
a formula of one-sided incidence treewidth at most $w$.

\subsection{Main Results}

Recall that the incidence graph of a CNF formula $F$ is the bipartite graph
whose vertices are the variables and clauses of $F$, with an edge
between a variable and a clause if the variable appears in the clause.
The incidence treewidth $\mathrm{tw}^*(F)$ is the treewidth of
the incidence graph of $F$.

We introduce two variants of incidence treewidth.
The \emph{log-weighted incidence treewidth}
$\mathrm{tw}_{\log}^*(F)$ is the treewidth of the weighted incidence graph
in which each variable vertex has weight one and each clause
vertex $C$ has weight $\max\{\log|C|,1\}$, where $|C|$ denotes
the number of literals in $C$.
More specifically, the width of a tree decomposition of a weighted graph is the maximum total weight
of a bag minus one, and the treewidth of a weighted graph is
the minimum width over all its tree decompositions. By Lemma~\ref{lem:width-comparison}, if $F$ has maximum
clause width $k\geq2$, then
\[
\mathrm{tw}^*(F)+1
\leq \mathrm{tw}_{\log}^*(F)+1
\leq (\log k)(\mathrm{tw}^*(F)+1).
\]

The \emph{partially log-weighted incidence treewidth}
$\mathrm{tw}_{\mathrm{plog}}^*(F)$ refines log-weighted incidence
treewidth and can be defined informally as follows.
Given a nice tree decomposition of the incidence graph, for each
clause $C$, choose a root-to-leaf path in the subtree formed by
the bags containing $C$, or choose the empty path.
The clause $C$ has weight one in bags on the selected path.
In all other bags containing $C$, its weight is the logarithm
of one plus the number of literals of $C$ whose variables
do not appear in any bag on the selected path, with a minimum
weight of one.
The width determined by the decomposition and the selected paths
is the maximum total weight of a bag minus one.
The partially log-weighted incidence treewidth $\mathrm{tw}_{\mathrm{plog}}^*(F)$ is the minimum
of this width over all nice tree decompositions and all such
path selections. By Theorem~\ref{thm:plog-comparison}, partially log-weighted
incidence treewidth is no greater than log-weighted
incidence treewidth.
By Theorem~\ref{thm:plog-one-sided}, it is also no greater
than one-sided incidence treewidth defined
in~\cite{CaliRazgon2022}.

Our first main theorem gives upper bounds on refutation length
and width in terms of incidence treewidth and its weighted
variants.

\begin{restatable}{theorem}{MainBounds}
\label{thm:main-bounds}
Let $F$ be an unsatisfiable CNF formula of maximum clause width
$k\geq1$, with $n$ variables and $m$ clauses.
Then $F$ has the following refutations:
\begin{enumerate}
\renewcommand{\labelenumi}{(\arabic{enumi})}
\item a $k\mathrm{\text{-}DNF\text{-}Res}$ refutation of length
$(n+m)2^{O(\mathrm{tw}^*(F))}$;

\item a resolution refutation of length
$k(n+m)2^{O(\mathrm{tw}_{\log}^*(F))}$
and width at most $\mathrm{tw}_{\log}^*(F)+k$;

\item a resolution refutation of length
$(n+m)k^{O(\mathrm{tw}^*(F))}$
and width at most $\mathrm{tw}^*(F)+k$;

\item a resolution refutation of length
$k^2(n+m)2^{O(\mathrm{tw}_{\mathrm{plog}}^*(F))}$.
\end{enumerate}
\end{restatable}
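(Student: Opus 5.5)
The plan is to build everything from one dynamic-programming construction over a nice tree decomposition $(T,\{B_t\})$ of the incidence graph, and then translate it into the various proof systems. For each node $t$, let $V_t$ and $C_t$ be the variables and clauses in $B_t$, and let $F_t$ be the set of clauses appearing in bags of the subtree rooted at $t$ that have been forgotten below $t$ or are in $C_t$. The invariant is the following. For every assignment $\alpha$ to $V_t$ and every subset $S\subseteq C_t$ ("clauses already satisfied by forgotten variables below $t$"), if no extension of $\alpha$ to the variables introduced below $t$ satisfies all forgotten clauses below $t$ while satisfying exactly the clauses in $S$ of $C_t$ through forgotten variables, we derive a $k$-DNF
\[
D_{t,\alpha,S} \;=\; \neg\alpha \;\vee\; \bigvee_{C\in C_t\setminus S} C|_{\text{forgotten below }t},
\]
where $C|_{\text{forgotten}}$ is the sub-clause of $C$ on variables forgotten below $t$, viewed as a single term-disjunction. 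Conversely, each $C\in S$ contributes a term ``$\neg(C|_{\text{forgotten}})$'', a conjunction of at most $k$ literals. These are $k$-DNF lines: the negated partial clause is a single $\le k$-term.

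We will check introduce, forget and join nodes separately; each step uses $2^{O(|B_t|)}$ lines. A forget-variable step resolves on that variable, moving a literal into the partial clauses. A forget-clause step uses the axiom $C$ and cuts the term structure. A join step combines the two children's DNFs; this is where the split of $S$ between children appears, and it requires $k$-DNF cuts on the conjunction terms. At the root, with empty bag, we obtain the empty DNF. This gives (1) with length $(n+m)2^{O(\mathrm{tw}^*(F))}$.

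For (2)--(4), the lines are converted into clauses. A $k$-DNF whose terms are negations of partial clauses becomes a clause once we branch on which literal of each such term is false: for a term $\neg(C|_{\mathrm{forg}})$ we pick a witness literal. This costs a factor equal to the width of the term per clause in the bag. Rather than naively guessing one of $|C|$ literals, we encode the index of the witness with $\lceil\log |C|\rceil$ bits via a binary decision on variables. Equivalently, we restrict to prefixes: the clause records ``some literal among the first $j$ is true'' in a binary-search fashion. This makes the number of states per clause $2^{O(\log|C|)}$, giving total states $2^{O(\mathrm{tw}_{\log}^*)}$ and clauses of width at most $\mathrm{tw}_{\log}^*+k$ (bag variables plus at most one residual clause fragment of width $\le k$). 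The extra factor $k$ comes from deriving each partial-clause resolution step literal by literal.

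Item (3) follows from (2) by Lemma~\ref{lem:width-comparison}: $2^{O((\log k)(\mathrm{tw}^*+1))}=k^{O(\mathrm{tw}^*)}$. For the width, we use the unweighted bound directly, keeping one literal guess per clause, so that width is $\mathrm{tw}^*+k$.

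For (4), along the selected path of $C$ we never need to split $C$. Because the path is root-to-leaf in $C$'s subtree, we keep the full remaining fragment as literals in the clause, which contributes weight one. Off the path, we branch only on the literals whose variables are outside the path, giving $\log(1+\cdot)$ weight. The additional factor $k$ in $k^2$ accounts for the junctions where we leave the path.

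The main obstacle is the join node in the clausal translation. There, the two children have each guessed witness information for a shared clause, and these guesses must be reconciled consistently without blowing up width. This is especially delicate in (4), where path and off-path parts meet. I would handle it by fixing a canonical order on the literals of each clause, so that the children's guesses combine by a bounded number of resolutions.
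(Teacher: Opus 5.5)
\paragraph{Gap 1: the exact-set invariant breaks the width bounds in (2) and (3).} Your invariant tracks the \emph{exact} set $S$ of bag clauses satisfied through forgotten variables. So $D_{t,\alpha,S}$ must contain the positive fragment $C|_{\mathrm{forg}}$, as singleton terms, for every bag clause $C\notin S$. This is workable for (1): the join then needs extra cuts to eliminate the mixed patterns, but stays within $2^{O(w)}$ lines. It also plausibly works for the length parts of (2) and (3). It does not give the width claims. When a line is expanded into clauses by picking one literal per term, each resulting clause has width equal to the number of terms. Here that is $|V_t|+\sum_{C\in C_t\setminus S}\bigl|C|_{\mathrm{forg}}\bigr|+|S|$, which is not bounded by $\mathrm{tw}^*(F)+k$: a bag may hold about $\mathrm{tw}^*(F)$ clauses of width $k$ whose variables were all forgotten below $t$. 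So neither $\mathrm{tw}^*_{\log}(F)+k$ nor $\mathrm{tw}^*(F)+k$ follows, and your parenthetical ``at most one residual clause fragment'' contradicts your own invariant.

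The missing idea is the paper's monotone state. $(t,\alpha,A)$ is inconsistent if no extension of $\alpha$ to $\mathrm{Var}_t$ satisfies the forgotten clauses and \emph{every} clause of $A$. It is represented by $C_\alpha\lor\bigvee_{c\in A}\neg c^{\mathrm{int}}_t$, which has no positive fragments. With this choice:
\begin{itemize}
\item the join recursion ranges over $A_1\cup A_2=A$ and uses only $\land$-introduction;
\item every line other than an initial clause has at most $|\chi(t)|\le w+1$ terms;
\item the only excess width occurs inside the simulated cut at a forget-clause node, where an expanded clause of width at most $w+1$ temporarily carries at most $k-1$ literals of the forgotten clause.
\end{itemize}
Your binary encoding of the witness index is neither meaningful in resolution (there are no index variables) nor needed. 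The plain expansion already has at most $\prod_{c\in A}|c|\le 2^{w_{\log}+1}$ members. The translation is rule by rule, so the join needs no reconciliation: an expanded clause of an $\land$-introduction conclusion is a weakening of an expanded clause of a premise.

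\paragraph{Gap 2: item (4) is missing two ingredients, and the source of $k^2$ is misattributed.} First, $\mathrm{tw}^*_{\mathrm{plog}}(F)$ is a minimum over nice decompositions of unbounded size. You must show that an optimal decomposition and path family can be pruned to $O(k(n+m)w)$ nodes without increasing the partially log-weighted width. The paper does this as follows:
\begin{itemize}
\item it marks the root, one node per vertex, one node per incidence edge (at most $km$ of them), and the path endpoints;
\item it keeps the spanning subtree, extended down to leaves, and contracts equal-bag chains;
\item it checks that path membership and $\rho_{\mathcal P}$ are preserved.
\end{itemize}
That pruning is where one factor $k$ comes from. The other factor is the simulation of a cut on a term of at most $k$ literals by at most $k$ resolutions. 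Neither comes from ``junctions where we leave the path''.

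Second, the count $\rho_{\mathcal P}(c)+1$ per clause needs a key observation. Let $t_c\notin P_c$ be the current node $t$ or one of its children. If a variable of $c^{\mathrm{int}}_{t_c}$ occurs in a bag on $P_c$, connectivity forces it into $\chi_v(t)$. The literal chosen from the term then either duplicates a literal of $C_\alpha$ (contract it away) or clashes with it (the clause is a weakening of $x\lor\neg x$). Only with this reduced expansion is each line bounded by $2^{3(w_{\mathrm{plog}}+1)}$ expanded clauses, the $3$ accounting for $t$ and its two children.

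Finally, combining the on-path fragment with off-path terms at a join needs no canonical literal order. We have $c^{\mathrm{ext}}_{t_1}=c^{\mathrm{ext}}_{t}\lor H$, where $H$ consists of the literals over $\mathrm{Var}_{t_2}\setminus\chi_v(t)$. $H$ is removed by one cut against $\neg H$, obtained by $\land$-elimination from the other child's term $\neg c^{\mathrm{int}}_{t_2}$.
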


Since $k \leq n$, the length bounds in (1), (2), and (4) are FPT bounds
parameterized by incidence treewidth, log-weighted incidence
treewidth, and partially log-weighted incidence treewidth,
respectively. (3) gives a new upper bound on resolution refutation length
with respect to incidence treewidth and maximum clause width.

Our second theorem gives an upper bound for regular resolution
with respect to log-weighted incidence treewidth.

\begin{restatable}{theorem}{RegularBounds}
\label{thm:regular-main-bounds}
Let $F$ be an unsatisfiable CNF formula with $n$ variables
and $m$ clauses.
Then $F$ has a regular resolution refutation of length
$(n+m)2^{O(\mathrm{tw}_{\log}^*(F))}$.
In particular, if $F$ has maximum clause width $k\geq2$,
then $F$ has a regular resolution refutation of length
$(n+m)k^{O(\mathrm{tw}^*(F))}$.
\end{restatable}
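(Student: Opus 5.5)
The plan is to build the regular refutation directly by dynamic programming over a nice tree decomposition of the incidence graph. We cannot simply convert a general refutation into a regular one, since regularization can blow up length. We then read the result off as a read-once branching program, i.e.\ a decision DAG that is a regular resolution refutation when read bottom-up.

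\textbf{Setup.} Fix a nice tree decomposition of width $w=\mathrm{tw}_{\log}^*(F)$ with $O(n+m)$ nodes. At a node $t$ with bag $B_t$, the relevant state consists of two parts. The first is an assignment to the variables in $B_t$. The second is, for each clause $C\in B_t$, a summary of what the variables of $C$ already forgotten below $t$ contributed. The standard Samer--Szeider state only records one bit per clause (``already satisfied or not''). That bit is not expressible as a clause over existing variables, which is why general resolution seems to need extension.

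\textbf{Key idea: clause states as prefixes.} I would instead record, for each clause $C$ in the bag, \emph{which} literal of $C$ is the designated satisfier, or that none has yet been chosen. This is $|C|+1$ options, encoded in $\lceil\log(|C|+1)\rceil=O(\max\{\log|C|,1\})$ bits. The total number of states per bag is therefore $2^{O(w)}$.

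\textbf{The decision DAG.} Traverse the tree decomposition in a fixed order, e.g.\ a depth-first elimination order. Query each variable at most once along every path: at the moment it is introduced, or, for forgotten variables, at their forget node. Nodes of the DAG correspond to pairs (decomposition node, state). The state is realized as a partial assignment: the relevant queried variables, plus the ``designation'' of a satisfier for $C$, meaning the literal's variable has been queried and falsifies the others recorded so far. Merging is allowed whenever two paths reach the same decomposition node with the same state. The standard regular-resolution correspondence requires that each DAG node be labelled by a clause falsified by every path reaching it, and that no variable be re-queried below. Leaves are labelled by clauses of $F$: a clause all of whose literals have been falsified before its forget node.

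\textbf{Main obstacle: join nodes and read-once-ness.} Two branches of the decomposition are explored sequentially, and the second branch must not re-query variables from the first. Meanwhile the merged states must forget which first-branch variables were set. I would handle this by ordering the DFS so that at a join only the bag's clause states are carried across. The clause label at a merged node must then mention only the literals encoding the state. Since each designated literal contributes just its own falsified-literal set, and we need only that it falsifies the remaining literals of $C$ in the bag's scope, the label consists of at most $O(w)$ literals plus the designated-literal pieces. This is where the log weight is needed: the merged node's clause must be a common weakening consistent across paths. Regularity survives because every variable is queried only on the unique path segment associated with its forget node.

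\textbf{Counting and the second statement.} The size is $\#\text{nodes}\times 2^{O(w)}=(n+m)2^{O(w)}$. The second statement follows from Lemma~\ref{lem:width-comparison}, since $2^{O((\log k)(\mathrm{tw}^*+1))}=k^{O(\mathrm{tw}^*(F))}$.
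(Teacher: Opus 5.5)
\textbf{There is a genuine gap at the join nodes.} Reducing to read-once branching programs is legitimate, but the bound $\#\text{nodes}\times 2^{O(w)}$ does not follow from your construction.

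Suppose the DAG explores $T_{t_1}$ and then $T_{t_2}$ one after the other. Then every DAG node inside $T_{t_1}$ must still separate paths that differ on $\chi_v(t)$, or on the status of the clauses in $\chi_c(t)$. Those variables may occur in clauses of $T_{t_2}$, and read-once-ness forbids querying them again, so such paths cannot be merged. A DAG node is therefore indexed not by a pair (decomposition node, bag state), but by the bag states of \emph{all} pending join ancestors. With nested joins this can reach $2^{\Theta(wp)}$ states for $p$ pending joins, which is $(n+m)^{O(w)}$ even with a smaller-child-first DFS. This is the usual loss in passing from a tree decomposition to a path decomposition, so sequential exploration only gives pathwidth-type bounds.

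A symptom that something is off: in the branching-program-to-regular-resolution conversion the labels are forced, namely the literals falsified on every path into a node. Nothing requires a label to encode the state. So a single satisfied/unsatisfied bit per clause would serve your DAG just as well. If your count were right, you would get $(n+m)2^{O(\mathrm{tw}^*(F))}$ regular refutations and settle the open question. The log weight is not needed where you invoke it.

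\textbf{The missing idea is that the refutation never visits both children of a join.} In the paper, each $M\in\mathcal E(t,\alpha,A)$ records two things.
\begin{itemize}
\item For every $c\in A$, a leading literal $a_c$. In your terms, this is the last true literal of $c_t^{\mathrm{int}}$, found by querying the literals of $c$ as soon as $c$ is met top-down, i.e.\ in the forget-clause step via the resolutions on $\ell_1^c,\ldots,\ell_{q_c}^c$.
\item For every $c\notin A$, a falsified suffix from $\mathcal C_0$.
\end{itemize}
At a join, set $A_i:=\{c\in A: a_c\in c_{t_i}^{\mathrm{int}}\}$. Then $A_1\cup A_2=A$, so by Lemma~\ref{lem:join-zero-recursion} some $(t_i,\alpha,A_i)$ is inconsistent. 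Hence $M$ follows from a single member of $\mathcal E(t_i,\alpha,A_i)$ by weakening. This choice of child is what the $|c|+1$ options per clause, and hence the log weight, pay for.

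The price is that variables of $c$ are queried far above their own forget-variable nodes. Regularity must then be recovered through the variable-eliminating bookkeeping of Lemma~\ref{lem:td-to-mres}: the later resolutions on those variables are not variable-eliminating, and they become weakenings under Lemma~\ref{lem:mres-to-res}.

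Your final counting, and the deduction of the second statement from Lemma~\ref{lem:width-comparison}, are fine once such a construction is in place.
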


The first length bound in the above theorem is an FPT bound
parameterized by log-weighted incidence treewidth.

Our results not only improve previous upper bounds but also
help rule out certain formula families as candidates for
proving resolution length lower bounds with respect to
incidence treewidth, as discussed in
Subsection~\ref{subsec:lower-bounds}.

\subsection{Comparison with previous results}\label{subsec:comparison}

Let $F$ be an unsatisfiable CNF formula with $n$ variables,
$m$ clauses, maximum clause width $k\geq2$, and incidence
treewidth $\mathrm{tw}^*(F)$.
It is known that $F$ has a resolution refutation of length
$n^{O(1)}2^{O(\mathrm{tw}(F))}$%
~\cite{RishDechter2000,AlekhnovichRazborov2011}, where $\mathrm{tw}(F)$ is the primal treewidth of $F$.
Together with $\mathrm{tw}(F)\leq k(\mathrm{tw}^*(F)+1)-1$%
~\cite{KolaitisVardi2000}, this gives an upper bound of
$n^{O(1)}2^{O(k \mathrm{tw}^*(F))}$.
Our bound improves this upper bound.
By Theorem~\ref{thm:main-bounds}(3), $F$ has a resolution
refutation of length $(n+m)k^{O(\mathrm{tw}^*(F))}$.
We can prove that $m\leq n2^{O(\mathrm{tw}^*(F))}$,\footnote{%
Define a hypergraph $H$ whose vertices are the $2n$ literals
over the variables of $F$ and whose hyperedges are the clauses
of $F$, regarded as sets of literals.
Thus $H$ has $2n$ vertices and $m$ distinct hyperedges.
Let $I(H)$ be the incidence graph of the hypergraph $H$.
Replacing each variable vertex in every bag of a width-$\mathrm{tw}^*(F)$
incidence tree decomposition of $F$ by its two literal
vertices gives a tree decomposition of $I(H)$ of width
at most $2\mathrm{tw}^*(F)+1$.
Every minor $J$ of $I(H)$ has treewidth at most $2\mathrm{tw}^*(F)+1$
and hence satisfies
$|E(J)|\leq (2\mathrm{tw}^*(F)+1)|V(J)|$%
~\cite{FominVillanger2012}.
Thus, with $\nabla_1$ as defined in
Fomin, Oum, and Thilikos~\cite[Section~2]{FominOumThilikos2010},
we have $\nabla_1(I(H))\leq 2\mathrm{tw}^*(F)+1$.
Proposition~17 of ~\cite{FominOumThilikos2010} therefore gives
$m\leq 2n\,4^{2\mathrm{tw}^*(F)+1}=n2^{O(\mathrm{tw}^*(F))}$.}
so our upper bound can be written as
$nk^{O(\mathrm{tw}^*(F))}=n2^{O(\mathrm{tw}^*(F)\log k)}$. We improve the previous upper bound
$n^{O(1)}2^{O(\mathrm{tw}^*(F)k)}$ to $n2^{O(\mathrm{tw}^*(F)\log k)}$.

Allowing the introduction of extension variables makes it
possible to obtain FPT-sized resolution refutations
parameterized by incidence treewidth.
Previous constructions first transform the original formula
into an equisatisfiable CNF formula of small primal treewidth
and then apply the known results for primal
treewidth~\cite{SamerSzeider2010CSP,Furer2012,FominEtAl2019}.
In contrast, in Section~\ref{sec:extension-variables},
we directly use the structure of an incidence tree
decomposition of the original formula to construct FPT-sized
resolution refutations with extension variables.

Theorem~\ref{thm:plog-one-sided} shows that the partially
log-weighted incidence treewidth of a CNF $F$ is no greater than
its one-sided incidence treewidth.
Since every path decomposition of the incidence graph is also
a one-sided tree decomposition,
the one-sided incidence treewidth of $F$ is no greater
than its incidence pathwidth.
Furthermore, Remark~\ref{rem:one-sided-regular} shows that
we can use one-sided incidence tree decompositions and the construction in Theorem~\ref{thm:plog-one-sided} to
construct regular resolution refutations.
Thus, our construction gives regular resolution refutations of FPT length
parameterized by either one-sided incidence treewidth or
incidence pathwidth.
This recovers the FPT-length result of
Imanishi~\cite{Imanishi2017} for incidence pathwidth
and part of the results claimed by Cal\`i and Razgon
in their preprint~\cite{CaliRazgon2022}.

\section{Main Ideas and Proof Overview}

We try to explain the concepts used in this section
within the section itself.
For more complete definitions,
we refer the reader to Sections~\ref{sec:preliminaries}
and~\ref{sec:preparations}.

\subsection{Motivation for Our Construction}

Our construction was motivated by the following two facts.

First, introducing extension variables makes it possible
to construct FPT-sized resolution refutations parameterized
by incidence treewidth.
As discussed in Section~\ref{sec:related-work},
previous work introduces extension variables
to transform the original formula into an equisatisfiable
CNF formula of small primal treewidth and then apply
the known results for primal treewidth.
We also observed that, if we introduce variables to represent
whether a subclause of a clause in $F$ is satisfied, then we
might be able to simulate the algorithm of
Samer and Szeider~\cite{SamerSzeider2010}
on an incidence tree decomposition of $F$
directly within the resolution proof system
and obtain refutations of FPT length.

Second, Atserias and Bonet~\cite{AtseriasBonet2004}
established a correspondence between the proof system
$k\mathrm{\text{-}DNF\text{-}Res}$, also known as
$\mathrm{Res}(k)$, and the resolution proof system
with extension variables representing conjunctions
of at most $k$ literals over the original variables.
More precisely, a refutation of length $S$ in either system
can be converted into a refutation of length $O(kS)$
in the other.

These two facts suggested that we could construct a
$k\mathrm{\text{-}DNF\text{-}Res}$ refutation of FPT length parameterized
by incidence treewidth, where $k$ is the maximum clause
width of the original formula.
We then sought to convert this refutation into a resolution
refutation. Our basic idea was to associate a set of clauses with each
DNF appearing in the $k\mathrm{\text{-}DNF\text{-}Res}$ refutation
and derive these clauses within the resolution proof system,
following the order of the DNFs in the original refutation.

Our initial method was to expand each DNF directly.
For a DNF $D=T_1\lor\cdots\lor T_s$, define
\[
\operatorname{Exp}(D):=
\left\{
\ell_1\lor\cdots\lor\ell_s:
\ell_i\in T_i \text{ for every } i\in\{1,\ldots,s\}
\right\}.
\]
This expansion yields the upper bound with respect to
log-weighted incidence treewidth in
Theorem~\ref{thm:main-bounds}(2) and (3).

We then found a way to refine both the construction of the
$k\mathrm{\text{-}DNF\text{-}Res}$ refutation and the definition of expansion. These refinements motivated our definition of partially
log-weighted incidence treewidth and yielded the improved
bound in Theorem~\ref{thm:main-bounds}(4).

\subsection{Multiset proof systems} \label{subsec:introproofsys}

In order to describe our constructions more clearly, in Section~\ref{sec:multiset-proof-systems}, we introduce two new proof systems,
$\mathrm{mRes}$ and $k\mathrm{\text{-}mDNF\text{-}Res}$,
which can be viewed as multiset versions of the resolution proof system and
$k\mathrm{\text{-}DNF\text{-}Res}$, respectively.
A \emph{multiclause} is a finite multiset of literals,
and an \emph{mDNF} is a finite multiset of terms.
In $A\mlor B$, the multiplicity of each literal or term
is the sum of its multiplicities in $A$ and in $B$.
Thus, $\mlor$ has the same Boolean interpretation as $\lor$,
but does not combine repeated literals or terms into one; for example, we view $x\mlor x$ and $x$ as distinct multiclauses.

The inference rules of our new proof systems $\mathrm{mRes}$ and $k\mathrm{\text{-}mDNF\text{-}Res}$ are the multiset versions of the original
rules, together with the following contraction rules for
$\mathrm{mRes}$ and $k\mathrm{\text{-}mDNF\text{-}Res}$,
respectively:
\[
\frac{C\mlor\ell\mlor\ell}{C\mlor\ell}
\qquad\text{and}\qquad
\frac{D\mlor T\mlor T}{D\mlor T},
\]
where $C$ is a multiclause, $\ell$ is a literal,
$D$ is an mDNF, and $T$ is a term.
Contraction serves as an auxiliary rule for adjusting
multiplicities.
Lemmas~\ref{lem:mres-to-res}
and~\ref{lem-mdnfres-to-dnfres} show that refutations
in these multiset systems can be converted, step by step,
into refutations in the resolution proof system and
$k\mathrm{\text{-}DNF\text{-}Res}$, respectively.

We introduce these systems for two main reasons.
First, they allow us to describe our constructions more clearly.
For example, applying the resolution rule to $x_1\lor x_2$
and $\neg x_2\lor x_1$ gives $x_1$
in the resolution proof system.
In our constructions, we sometimes need $x_1\mlor x_1$
and sometimes need $x_1$.
In $\mathrm{mRes}$, the same rule applied to
$x_1\mlor x_2$ and $\neg x_2\mlor x_1$ gives
$x_1\mlor x_1$, which can then be contracted to $x_1$.
The multiset proof system allows us to choose the form we need.
Das~\cite{Das2015} and Bonacina and Bonet~\cite{BonacinaBonet2022}
also used multisets to define proof systems for the same reason,
although we were unaware of these works when introducing
our multiset systems. Most of our constructions are carried out in these two
multiset proof systems.
By Lemmas~\ref{lem:mres-to-res} and~\ref{lem-mdnfres-to-dnfres},
the upper bounds on refutation length obtained in these systems
also hold for the resolution and $k\mathrm{\text{-}DNF\text{-}Res}$ proof systems,
respectively, up to constant factors.

Second, the $\mathrm{mRes}$ proof system plays an important role in our
regularity argument.
Consider the following resolution step in $\mathrm{mRes}$:
\[
\frac{x_1\mlor x_1\qquad \neg x_1\mlor x_2}
     {x_1\mlor x_2}.
\]
In the resolution proof system, this step can be replaced
by a weakening step from $x_1$ to $x_1\lor x_2$.
Thus, a resolution step in $\mathrm{mRes}$ can be replaced
by a weakening step when converting the proof to the
resolution proof system.
Using this observation, in Lemma~\ref{lem:td-to-mres}, we  construct an $\mathrm{mRes}$ refutation which may not
be regular but can be converted into a regular resolution
refutation.

\subsection{Inconsistent States}

Our construction of refutations simulates the dynamic programming of a SAT
algorithm on a nice tree decomposition of the incidence graph.
The basic idea is to define inconsistent states at each node
of the tree decomposition to express the nonexistence of assignments
satisfying certain conditions, represent these states by mDNFs,
and derive these mDNFs in $k\mathrm{\text{-}mDNF\text{-}Res}$
from the leaves to the root of the tree decomposition.

In this subsection, we introduce the definition of inconsistent states.

Let $(T,\chi,r)$ be a nice tree decomposition of the incidence
graph of a CNF formula $F$, where $T$ is a tree rooted at $r$
and $\chi(t)$ is the bag associated with each node $t$ of $T$.
For each node $t$, let $\chi_v(t)$ and $\chi_c(t)$ denote
the variables and clauses in $\chi(t)$, respectively,
and let $T_t$ be the subtree rooted at $t$.
Let $\mathrm{Var}_t$ and $\mathrm{Cls}_t$ be the sets
of variables and clauses appearing in bags of $T_t$,
respectively.

A state is a triple $(t,\alpha,A)$, where
$\alpha:\chi_v(t)\to\{0,1\}$ is an assignment and
$A\subseteq\chi_c(t)$.
The state $(t,\alpha,A)$ is \emph{inconsistent} if
there is no assignment $\tau:\mathrm{Var}_t\to\{0,1\}$
extending $\alpha$ that satisfies every clause in
$A\cup(\mathrm{Cls}_t\setminus\chi_c(t))$.

Lemmas~\ref{lem:join-zero-recursion},
\ref{lem:introduce-var-zero}, \ref{lem:introduce-clause-zero},
\ref{lem:forget-var-zero}, and~\ref{lem:forget-clause-zero}
show how to derive inconsistent states at a node from
inconsistent states at its children for each node type (see Subsection~\ref{subsec:tree-decompositions}
for the node types of nice tree decompositions of incidence graphs).

\subsection{Log-weighted incidence treewidth and the construction}
\label{subsec:log-construction-overview}

Our initial approach was to represent inconsistent states
by the mDNFs defined below.
This representation leads to the bounds in
Theorem~\ref{thm:main-bounds}(1), (2), and~(3).

Given a nice tree decomposition $(T,\chi,r)$ of the incidence
graph of a CNF formula $F$, for each node $t$ and clause $c\in\chi_c(t)$, define
\[
c_t^{\mathrm{int}}
:=
\bigvee_{\substack{\ell\in c\\
\mathrm{var}(\ell)\in\mathrm{Var}_t}}\ell.
\]
Thus, $c_t^{\mathrm{int}}$ consists of the literals of $c$
whose variables appear in bags of the subtree rooted at $t$.

For an assignment $\alpha:\chi_v(t)\to\{0,1\}$, let
\[
C_\alpha
:=
\bigvee_{\substack{x\in\chi_v(t)\\\alpha(x)=0}}x
\;\lor\!
\bigvee_{\substack{x\in\chi_v(t)\\\alpha(x)=1}}\neg x.
\]
The clause $C_\alpha$ contains the literal falsified by
$\alpha$ for each variable in $\chi_v(t)$.

We represent an inconsistent state $(t,\alpha,A)$
by the mDNF
\[
C_\alpha\mlor\bigmlor_{c\in A}\neg c_t^{\mathrm{int}},
\]
where $C_\alpha$ is viewed as an mDNF consisting of singleton terms
and each $\neg c_t^{\mathrm{int}}$ is viewed as the term
$\bigwedge_{\ell\in c_t^{\mathrm{int}}}\neg\ell$. We denote this mDNF by $D(t,\alpha,A,\emptyset)$,
which is a special case of the more general definition
of $D(t,\alpha,A,B)$ given in the next subsection.
This mDNF represents the inconsistency of $(t,\alpha,A)$:
every assignment to $\mathrm{Var}_t$ satisfying all clauses
in $\mathrm{Cls}_t\setminus\chi_c(t)$ either disagrees with
$\alpha$ on some variable in $\chi_v(t)$ or fails to satisfy
at least one clause in $A$.

With $B=\emptyset$, Lemmas~\ref{lem:derive-join},
\ref{lem:derive-introduce-var}, \ref{lem:derive-introduce-clause},
\ref{lem:derive-forget-var}, and~\ref{lem:derive-forget-clause}
show how to derive the mDNF representing an inconsistent state
at a node from the mDNFs representing inconsistent states
at its children.
Theorem~\ref{thm:td-to-dnfres-tight} combines these derivations
from the leaves to the root of the nice tree decomposition
$(T,\chi,r)$ to construct a
$k\mathrm{\text{-}mDNF\text{-}Res}$ refutation of $F$
of length $|V(T)|2^{O(w)}$, where $w$ is the width
of $(T,\chi,r)$, and $|V(T)|$ is the number of nodes in $T$. Using this construction, we can prove Theorem~\ref{thm:main-bounds}(1).

We next consider how to convert the
$k\mathrm{\text{-}mDNF\text{-}Res}$ refutation into
a resolution refutation.
For this purpose, we define the \emph{expansion} of an mDNF
$D=T_1\mlor\cdots\mlor T_s$ by
\[
\operatorname{Exp}(D)
:=
\left\{
\ell_1\mlor\cdots\mlor\ell_s:
\ell_i\in T_i \text{ for every } i\in\{1,\ldots,s\}
\right\}.
\]
The idea is to derive the multiclauses in the expansion of
each mDNF within $\mathrm{mRes}$, following the order in which
the mDNFs appear in the original refutation.

For the mDNF representing an inconsistent state $(t,\alpha,A)$, we have
\[
\left|\operatorname{Exp}\bigl(D(t,\alpha,A,\emptyset)\bigr)\right|
\leq \prod_{c\in A}|c|.
\]
This bound motivated our definition of log-weighted incidence treewidth.
The \emph{log-weighted incidence treewidth}
$\mathrm{tw}_{\log}^*(F)$ is the treewidth of the weighted
incidence graph in which each variable vertex has weight one
and each clause vertex $c$ has weight $\max\{\log|c|,1\}$,
where $|c|$ denotes the number of literals in $c$.
Let $(T,\chi,r)$ be a rooted nice tree decomposition
of this weighted incidence graph of width $\mathrm{tw}_{\log}^*(F)$.
For an inconsistent state $(t,\alpha,A)$, we have
\[
\left|\operatorname{Exp}\bigl(D(t,\alpha,A,\emptyset)\bigr)\right|
\leq \prod_{c\in A}|c|
\leq 2^{\mathrm{tw}_{\log}^*(F)+1}.
\]

Each multiclause in
$\operatorname{Exp}(D(t,\alpha,A,\emptyset))$
has width $|\chi_v(t)|+|A|\leq|\chi(t)|$.
Thus, when $(T,\chi,r)$ has incidence width
$\mathrm{tw}^*(F)$, these multiclauses have width
at most $\mathrm{tw}^*(F)+1$.
The bounds on the size of the expansions and the widths
of multiclauses in expansions form the basis for the length and
width bounds of the resolution refutation.
Theorem~\ref{thm:res-bound} shows how to convert the
$k\mathrm{\text{-}mDNF\text{-}Res}$ refutation into
a resolution refutation using our definition of expansion, yielding the bounds in
Theorem~\ref{thm:main-bounds}(2) and~(3).

\subsection{Partially log-weighted incidence treewidth and the refined construction}
\label{subsec:plog-construction-overview}

A key limitation of the previous construction is the size
of the expansions.
Since we use $\neg c_t^{\mathrm{int}}$ to represent each
clause $c\in A$, the bound on the expansion size involves
the product of the widths of these clauses.
We therefore consider replacing $\neg c_t^{\mathrm{int}}$
with the external part of $c$, defined by
\[
c_t^{\mathrm{ext}}
:=
\bigvee_{\substack{\ell\in c\\
\mathrm{var}(\ell)\notin\mathrm{Var}_t}}\ell.
\]
$c_t^{\mathrm{ext}}$ only contributes a factor of one
to the expansion size.

However, replacing every $\neg c_t^{\mathrm{int}}$
with $c_t^{\mathrm{ext}}$ does not in general allow us
to derive the mDNF representing an inconsistent state
at a join node from the mDNFs representing inconsistent
states at its children.
Nevertheless, at a join node $t$ with children $t_1$ and $t_2$,
using $c_{t_1}^{\mathrm{ext}}$ and $\neg c_{t_2}^{\mathrm{int}}$
in the mDNFs representing inconsistent states at the children
allows us to derive the mDNF representing an inconsistent
state at $t$ using $c_t^{\mathrm{ext}}$.
This observation suggests that we may select a path for each clause $c$
and use $c_t^{\mathrm{ext}}$ at nodes on the path
and $\neg c_t^{\mathrm{int}}$ at all other nodes
whose bags contain $c$.
More precisely, let $T(c)$ be the subtree consisting of
the nodes whose bags contain $c$, rooted at the node
closest to $r$.
Then we may choose a path $P_c$ from
the root of $T(c)$ to one of its leaves.

For an inconsistent state $(t,\alpha,A)$, let
$B:=\{c\in A:t\in P_c\}$.
We represent $(t,\alpha,A)$ by the mDNF
\[
D(t,\alpha,A,B)
:=
C_\alpha
\mlor\bigmlor_{c\in B}c_t^{\mathrm{ext}}
\mlor\bigmlor_{c\in A\setminus B}\neg c_t^{\mathrm{int}}.
\]
When $B=\emptyset$, this is the same as the representation
introduced in the previous subsection.

Lemmas~\ref{lem:derive-join}, \ref{lem:derive-introduce-var},
\ref{lem:derive-introduce-clause}, \ref{lem:derive-forget-var},
and~\ref{lem:derive-forget-clause} show how to derive
the mDNF representing an inconsistent state at a node
from the mDNFs representing inconsistent states at its children.
Theorem~\ref{thm:td-to-dnfres-plog} combines these local
derivations to construct a
$k\mathrm{\text{-}mDNF\text{-}Res}$ refutation of $F$
of length $|V(T)|2^{O(w)}$, where $w$ is the width
of the nice tree decomposition $(T,\chi,r)$.

Moreover, we can reduce the size of the expansions of mDNFs by considering
the variables that appear in bags on the selected paths. We next show how to modify the expansion set to reduce its size.
Suppose that $t\in V(T(c))\setminus P_c$ and a variable $x$
appears both in $c_t^{\mathrm{int}}$ and in a bag on $P_c$.
By definition of tree decompositions, we can prove
$x\in\chi_v(t)$.
Fix an inconsistent state $(t,\alpha,A)$ and a clause
$c\in A$ such that $t\notin P_c$.
Let $x$ be a variable of $c_t^{\mathrm{int}}$ that appears
in a bag on $P_c$.
By the preceding observation, $x\in\chi_v(t)$.
Suppose that a literal on $x$ is selected from
$\neg c_t^{\mathrm{int}}$ when forming a multiclause
in $\operatorname{Exp}(D(t,\alpha,A,B))$.
Since $x\in\chi_v(t)$, the clause $C_\alpha$ also contains
a literal on $x$.
If these two literals are the same, we only need to keep the literal in $C_\alpha$.
If one literal is the negation of the other, the
multiclause can be derived by the weakening rule from the axiom $x\mlor\neg x$,
so we may omit this multiclause from the expansion.
After these modifications, each term $\neg c_t^{\mathrm{int}}$
contributes either a literal whose variable does not appear
in any bag on $P_c$, or no literal, to each multiclause
in the modified expansion. Let $\mathcal P=(P_c)_{c\in F}$, and let
$\rho_{\mathcal P}(c)$ be the number of literals of $c$
whose variables do not appear in any bag on $P_c$.
There are therefore at most $\rho_{\mathcal P}(c)+1$
possibilities for the contribution of
$\neg c_t^{\mathrm{int}}$ to the modified expansion.
A more careful analysis shows that we can reduce the size of expansions of all mDNFs
in the refutation constructed in Theorem~\ref{thm:td-to-dnfres-plog} similarly.

This motivates the definition of
\emph{partially log-weighted incidence treewidth}, which gives
an exponential upper bound on the size of the modified expansions.
To define partially log-weighted incidence treewidth,
we give each vertex of the incidence graph a weight
in each bag containing it in a nice incidence tree decomposition
$(T,\chi,r)$ and path family $\mathcal P=(P_c)_{c\in F}$.
Each variable vertex has weight one, while each clause vertex
$c$ has weight one in bags on $P_c$ and weight
$\max\{\log(\rho_{\mathcal P}(c)+1),1\}$
in all other bags containing $c$.
The partially log-weighted width of $(T,\chi,r)$ with respect to $\mathcal P$
is the maximum total weight of a bag minus one.
The partially log-weighted incidence treewidth
$\mathrm{tw}_{\mathrm{plog}}^*(F)$ is the minimum of this width
over all nice tree decompositions of the incidence graph
and all such path families. By Theorem~\ref{thm:plog-comparison}, partially log-weighted
incidence treewidth is no greater than log-weighted
incidence treewidth.

Theorem~\ref{thm:res-bound-plog} shows how to convert
this $k\mathrm{\text{-}mDNF\text{-}Res}$ refutation into
a resolution refutation using the modified expansions.
This gives the bound in Theorem~\ref{thm:main-bounds}(4).

Because we use $c_t^{\mathrm{ext}}$ in the new representation, the multiclauses in modified expansions may have large width.
So we do not obtain a width bound comparable to that of
the previous construction.

\subsection{Constructing regular resolution refutations}
\label{subsec:regular-construction-overview}

Our construction of regular resolution refutations is inspired
by the previous construction. However, these refutations are
not obtained by converting
$k\mathrm{\text{-}mDNF\text{-}Res}$ refutations into resolution refutations.
Instead, we represent each inconsistent state by a family
of multiclauses and derive these multiclauses directly
in $\mathrm{mRes}$.

Let $F$ be an unsatisfiable CNF. For each clause $c\in F$, fix a representation
$c=\ell_1^c\lor\cdots\lor\ell_{q_c}^c$.
For a subclause
$c'=\ell_{i_1}^c\lor\cdots\lor\ell_{i_s}^c$,
where $i_1<\cdots<i_s$, define
\[
\begin{aligned}
\mathcal C_1(c')
&:=
\left\{
\neg\ell_{i_h}^c
\mlor\bigmlor_{j=h+1}^{s}\ell_{i_j}^c:
1\leq h\leq s
\right\},\\
\mathcal C_0(c')
&:=
\left\{
\bigmlor_{j=h}^{s}\ell_{i_j}^c:
1\leq h\leq s+1
\right\}.
\end{aligned}
\]
For an inconsistent state $(t,\alpha,A)$, define
\[
\mathcal E(t,\alpha,A)
:=
\left\{
C_\alpha\mlor\bigmlor_{c\in\chi_c(t)}C_c:
\begin{array}{ll}
C_c\in\mathcal C_1(c_t^{\mathrm{int}}) & \text{if }c\in A,\\
C_c\in\mathcal C_0(c_t^{\mathrm{int}}) & \text{if }c\notin A
\end{array}
\right\}.
\]
These constructions may seem unusual.
We next explain how we obtained them.

As mentioned in Subsection~\ref{subsec:introproofsys}, a resolution step in $\mathrm{mRes}$ may become a weakening
step when we convert the $\mathrm{mRes}$ refutation into a resolution refutation.
More precisely, we call an application of the resolution rule
to $C\mlor x$ and $D\mlor\neg x$
\emph{variable-eliminating} if neither $x$ nor $\neg x$
appears in the resulting multiclause $C\mlor D$.
Lemma~\ref{lem:mres-to-res} shows that every resolution step
in the converted refutation comes from a variable-eliminating
resolution step in the $\mathrm{mRes}$ refutation.
It therefore suffices to ensure that each variable is used
in at most one variable-eliminating resolution on every
directed path in the proof DAG.

We observed that the refutation constructed in
Theorem~\ref{thm:res-bound} was already close to satisfying
this condition:

We attempted to prove inductively that, on every directed
path ending at a multiclause $M$, no variable in
$\operatorname{var}(M)$ had been used in a variable-eliminating
resolution.
We found that the inductive argument works for all node types except
forget-clause nodes (see Subsection~\ref{subsec:tree-decompositions}
for the node types of nice tree decompositions of incidence graphs).
This led us to replace
$\neg\ell_{i_h}^c$ with
$\neg\ell_{i_h}^c\mlor
\bigmlor\limits_{j=h+1}^{s}\ell_{i_j}^c$
in $\mathcal C_1(c')$.
After this modification, we found that the inductive argument
worked for all node types except join nodes, which led us
to include the multiclauses in $\mathcal C_0(c')$
for clauses outside $A$.

Lemma~\ref{lem:td-to-mres} presents our construction.
The proof of regularity in Lemma~\ref{lem:td-to-mres} relies mainly on two induction
properties for every $M\in\mathcal E(t,\alpha,A)$:
\begin{enumerate}
\item on every directed path in the proof DAG ending at $M$,
each variable is resolved in at most one variable-eliminating
resolution;
\item on every directed path in the proof DAG ending at $M$,
no variable in $\operatorname{var}(M)$ is resolved in a
variable-eliminating resolution.
\end{enumerate}
We also maintain the auxiliary property that every multiclause
in the derivation of $M$ contains only variables from
$\mathrm{Var}_t$.
Using these induction properties, we can show that the $\mathrm{mRes}$
refutation we construct can be converted into a regular
resolution refutation by Lemma~\ref{lem:mres-to-res}.

\begin{remark}[AI use in this construction] \label{rem:ai-regular-construction}
As we stated in this subsection, the authors observed that the refutation constructed in
Theorem~\ref{thm:res-bound}, after conversion to a resolution
refutation, was close to being regular.
When we used ChatGPT to check our proof, it found
an oversight in the argument for forget-clause nodes
and suggested a possible approach to correct it.
We did not pursue this approach because it was too complicated,
but it inspired our construction of $\mathcal C_1(c')$.
The rest of the construction was completed by the authors.
\end{remark}

\subsection{Further discussion}

In Subsection~\ref{subsec:clause-width}, we show that
unsatisfiable CNF formulas with $n$ variables and maximum
clause width bounded by a function $f(n)=n^{o(1)}$
have FPT-sized resolution refutations parameterized by
incidence treewidth, under a suitable computability
condition on $f$.

In Subsection~\ref{subsec:lower-bounds}, we discuss the
limitations of using the classic length--width relation of
Ben-Sasson and Wigderson%
~\cite{BenSassonWigderson2001}
to prove length lower bounds, and explain what our upper
bounds imply for candidate formula families.

In Subsection~\ref{sec:extension-variables}, we show how
our construction gives FPT-sized resolution refutations
parameterized by the incidence treewidth of $F$ when
we are allowed to introduce variables representing
every nonempty subclause of a clause in $F$.

\section{Preliminaries} \label{sec:preliminaries}

\subsection{Resolution-based proof systems}

A \emph{literal} is a Boolean variable $x$ or its negation $\neg x$. The literals $x$ and $\neg x$ are called literals on $x$.
A \emph{clause} is a disjunction of literals. For $k \ge 1$, a \emph{$k$-clause} is a disjunction of $k$ literals. 
A \emph{CNF} formula is a conjunction of clauses, and a \emph{$k$-CNF} is a conjunction of $k$-clauses. 
By a slight abuse of notation, we identify a clause with the set of its literals, and a CNF formula with the set of its clauses. We say that a clause $C'$ is a \emph{subclause} of a clause $C$ if $C'\subseteq C$. The \emph{width} of a clause is the number of literals it contains, denoted by $|C|$. The \emph{maximum clause width} of a CNF formula $F$
is the maximum width of its clauses, namely
$\max_{C\in F}|C|$.

For a literal $\ell$, we write $\mathrm{var}(\ell)$ for its underlying variable, i.e., $\mathrm{var}(x)=x$ and $\mathrm{var}(\neg x)=x$. 
For a clause $C$, let $\mathrm{var}(C):=\{\mathrm{var}(\ell): \ell \in C\}$.
For a CNF formula $F$, let $\mathrm{var}(F):=\bigcup_{C\in F}\mathrm{var}(C)$.

An \emph{assignment} for $F$ is a function $\alpha:X\to\{0,1\}$ for some $X\subseteq \mathrm{var}(F)$. In particular, throughout the paper, assignments are allowed to be partial, that is, $X$ may be a proper subset of $\mathrm{var}(F)$. An assignment $\alpha:X\to\{0,1\}$ satisfies a literal $\ell$ if $\mathrm{var}(\ell)\in X$ and either $\ell=x$ with $\alpha(x)=1$, or $\ell=\neg x$ with $\alpha(x)=0$. It satisfies a clause $C$ if it satisfies some literal in $C$. A CNF formula $F$ is satisfiable if there exists an assignment $\alpha:\mathrm{var}(F)\to\{0,1\}$ that satisfies every clause of $F$. 

A clause is \emph{tautological} if it contains both $x$
and $\neg x$ for some variable $x$.
Throughout the paper, all CNF formulas are assumed
to contain neither empty clauses nor tautological clauses.

A \emph{term} is a conjunction of literals. 
For $k \ge 1$, a \emph{$k$-term} is a conjunction of at most $k$ literals. 
A \emph{DNF} formula is a disjunction of terms, and a \emph{$k$-DNF} is a disjunction of $k$-terms. 
By a slight abuse of notation, we identify a term with the set of its literals, and a DNF formula with the set of its terms.

Next, we introduce the resolution proof system. 

A \emph{resolution derivation} from a CNF formula $F$ is a sequence of clauses $\pi=(C_1,\dots,C_s)$. Each $C_i$ either belongs to $F$, in which case it is called
an \emph{initial clause}, or is obtained from earlier clauses by one of the following rules:
\[
\frac{C}{C\lor E}
\quad(\text{weakening}),
\qquad
\frac{C\lor x\qquad D\lor\neg x}{C\lor D}
\quad(\text{resolution}).
\]
A \emph{resolution refutation} of an unsatisfiable CNF formula $F$ is a derivation of the empty clause $\bot$. The \emph{length} of a resolution refutation is the total number of clauses in the derivation, and the \emph{width} is the maximum width of any clause appearing in it. We denote the length of a resolution refutation $\pi$ by $|\pi|$. For an unsatisfiable CNF formula $F$, we define the \emph{resolution length} $L(F)$ and \emph{resolution width} $W(F)$ as the minimum length and width of all resolution refutations of $F$.

Given a resolution derivation $\pi=(C_1,\dots,C_s)$, its \emph{proof DAG (directed acyclic graph)} is the directed acyclic graph with vertices $v_1,\dots,v_s$, where $v_i$ is labelled by $C_i$, and with an edge from $v_i$ to $v_j$ whenever $C_i$ is used to derive $C_j$. The derivation is \emph{regular} if no variable is resolved in two different resolution steps on one directed path in its proof DAG.

We next define the proof system
$k\mathrm{\text{-}DNF\text{-}Res}$, also known as
$k$-DNF resolution or $\mathrm{Res}(k)$.

A $k\mathrm{\text{-}DNF\text{-}Res}$ derivation from a CNF formula $F$ is a sequence of DNF formulas in which each formula is either an initial clause from $F$ (viewed as a DNF consisting of singleton terms), or the axiom $x \lor \neg x$, or a $k$-DNF formula obtained from previous formulas by one of the following inference rules:
\[
\frac{D}{D\lor E}
\quad(\text{weakening}),
\]
\[
\frac{D\lor T}{D\lor T'}
\quad(\land\text{-elimination}),\qquad T'\subseteq T,
\]
\[
\frac{D_1\lor T_1\qquad D_2\lor T_2}
     {D_1\lor D_2\lor(T_1\land T_2)}
\quad(\land\text{-introduction}),
\]
\[
\frac{
D_1\lor\ell_1\lor\cdots\lor\ell_q
\qquad
D_2\lor(\neg\ell_1\land\cdots\land\neg\ell_q)
}{
D_1\lor D_2
}
\quad(\text{cut}),
\]
where $D,E,D_1,D_2$ are DNFs, $T,T',T_1,T_2$ are terms, and
$\{\ell_1,\ldots,\ell_q\}$ is a nonempty finite set of literals.
A \emph{$k\mathrm{\text{-}DNF\text{-}Res}$ refutation} of an unsatisfiable CNF formula $F$ is a derivation of the empty clause $\bot$.

We adopt the conventions that the conjunction over an empty
set is $\top$ (true) and the disjunction over an empty set
is $\bot$ (false).

For notational convenience, we use the following mild abuse of notation: for a DNF formula $D$ and a clause $C=\ell_1\lor\cdots\lor\ell_m$, we use $D\lor C$ and $D\lor\neg C$ to denote the DNF formulas $D\lor\ell_1\lor\cdots\lor\ell_m$ and $D\lor(\neg\ell_1\land\cdots\land\neg\ell_m)$, respectively.

\subsection{Graph representations and tree decompositions} \label{subsec:tree-decompositions}

Let $F$ be a CNF formula. The \emph{incidence graph} $G^*(F)$ is the bipartite graph
whose vertices are the variables and clauses of $F$,
where a variable $x$ is adjacent to a clause $C$ if and only
if $x\in\mathrm{var}(C)$. The \emph{primal graph} $G(F)$ is the graph with vertex set $\mathrm{var}(F)$, where two variables are adjacent if and only if they appear together in some clause of $F$.

\medskip

A \emph{tree decomposition} of a graph $G=(V,E)$ is a pair $(T,\chi)$ where $T$ is a tree and $\chi: V(T) \to 2^V$ assigns to each node $t \in V(T)$ a subset $\chi(t) \subseteq V$, such that:
\begin{enumerate}
    \item for every vertex $v\in V$, there exists $t\in V(T)$ such that $v\in \chi(t)$;
    \item for every edge $\{u,v\}\in E$, there exists $t\in V(T)$ such that $\{u,v\}\subseteq \chi(t)$;
    \item for every $v\in V$, the set $\{t\in V(T): v\in \chi(t)\}$ induces a connected subtree of $T$.
\end{enumerate}
The set $\chi(t)$ is called the bag at $t$.

The width of $(T,\chi)$, denoted by $w(T,\chi)$, is $\max_{t\in V(T)}|\chi(t)|-1$, and the treewidth $\mathrm{tw}(G)$ is the minimum width over all tree decompositions of $G$.

\medskip

The \emph{incidence treewidth} of $F$ is $\mathrm{tw}^*(F) := \mathrm{tw}(G^*(F))$, and the \emph{primal treewidth} is $\mathrm{tw}(F) := \mathrm{tw}(G(F))$. Throughout the paper, we use
$\chi_v(t)$ and $\chi_c(t)$ to denote
$\chi(t)\cap\mathrm{var}(F)$ and $\chi(t)\cap F$, respectively.

We will also use the notion of nice tree decompositions. A triple $(T,\chi,r)$ is a \emph{nice tree decomposition} if $(T,\chi)$ is a tree decomposition, the tree $T$ is rooted at $r$, and:
\begin{enumerate}
    \item $\chi(r)=\emptyset$, and $\chi(t)=\emptyset$ for every
    leaf $t$ of $T$;
    \item every node has at most two children;
    \item if a node $t$ has two children $t_1,t_2$, then $\chi(t)=\chi(t_1)=\chi(t_2)$, and $t$ is called a \emph{join node};
    \item if a node $t$ has exactly one child $t'$, then exactly one of the following holds:
    \begin{enumerate}
        \item $|\chi(t)|=|\chi(t')|+1$ and $\chi(t')\subset \chi(t)$; then $t$ is called an \emph{introduce node};
        \item $|\chi(t)|=|\chi(t')|-1$ and $\chi(t)\subset \chi(t')$; then $t$ is called a \emph{forget node}.
    \end{enumerate}
\end{enumerate}

In a nice tree decomposition $(T,\chi,r)$ of $G^*(F)$, an introduce node $t$ with child $t'$ is called an \emph{introduce-variable node} introducing $x$ if $\chi_v(t)=\chi_v(t')\cup\{x\}$ and $\chi_c(t)=\chi_c(t')$, and an \emph{introduce-clause node} introducing $c$ if $\chi_c(t)=\chi_c(t')\cup\{c\}$ and $\chi_v(t)=\chi_v(t')$. Forget-variable and forget-clause nodes are defined analogously. Thus, every non-leaf node of $(T,\chi,r)$ is one of five types:
a join node, an introduce-variable node, an introduce-clause
node, a forget-variable node, or a forget-clause node.

The following standard result guarantees the existence of a
small nice tree decomposition of minimum width.

\begin{lemma}[{\cite[Lemma~7.4]{CyganEtAl2015}}]\label{lem:nice-empty-incidence}
Let $F$ be a nonempty CNF formula with $n$ variables and $m$ clauses. Then $G^*(F)$ has a nice tree decomposition $(T,\chi)$ of width $\mathrm{tw}^*(F)$ and
\[
|V(T)|=O\bigl((n+m)\,\mathrm{tw}^*(F)\bigr).
\]
\end{lemma}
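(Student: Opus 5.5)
We would prove this lemma in the standard way. First take an optimal tree decomposition of $G^*(F)$ and compress it so that it has $O(n+m)$ nodes. Then refine it node by node into a nice decomposition whose root and leaves have empty bags. The bag sizes never grow during the refinement, so the width stays $\mathrm{tw}^*(F)$. The node count increases only by a factor of $O(\mathrm{tw}^*(F))$.

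\textbf{Step 1 (compression).} Let $w=\mathrm{tw}^*(F)$, and start from any tree decomposition $(T_0,\chi_0)$ of width $w$. Whenever an edge $st$ of $T_0$ satisfies $\chi_0(s)\subseteq\chi_0(t)$, contract it and keep the bag $\chi_0(t)$. This operation preserves all three axioms of a tree decomposition and does not change the width. After it stops, every edge has bags that are incomparable in both directions. Root the tree at an arbitrary node. Then each non-root node $t$ has a vertex in its bag that does not lie in its parent's bag. By the connectivity axiom, $t$ is the top node of that vertex's subtree. Distinct nodes are top nodes for distinct vertices, so $|V(T_0)|\le n+m$.

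\textbf{Step 2 (making it nice).} Carry out the following local refinements.
\begin{enumerate}
\item Replace each node with $d\ge3$ children by a binary tree of $d-1$ copies of its bag. For each node with exactly two children, insert a copy of its bag above each child. Every internal node with two children then has both children carrying the same bag, as a join node requires.
\item On each edge from a child bag $X'$ to its parent bag $X$, insert a path. Along it, first forget the vertices of $X'\setminus X$ one at a time, then introduce the vertices of $X\setminus X'$ one at a time. Every intermediate bag is a subset of $X'$ or of $X$, so the width is unchanged. Each path has at most $2(w+1)$ nodes.
\item Above the original root, append a path that forgets its bag one vertex at a time down to $\emptyset$. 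Below each leaf, append a path that introduces its bag one vertex at a time, starting from an empty leaf.
\item Finally, contract every edge whose two endpoints now carry equal bags and where the lower node is the unique child. After this, every unary node changes its bag by exactly one element.
\end{enumerate}

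\textbf{Counting and main obstacle.} There are $O(n+m)$ original nodes and edges, and each contributes $O(w)$ new nodes, for a total of $O((n+m)w)$. If $w=0$, the paths still have length $O(1)$, so the bound should be read as $O((n+m)(w+1))$. Nonemptiness of $F$ guarantees $n+m\ge1$.

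We expect the main delicate point to be Step 1. One must check that contracting an edge with $\chi_0(s)\subseteq\chi_0(t)$ really preserves connectivity of each vertex's subtree. It also has to be confirmed that the injection from non-root nodes to vertices is well defined. Connectivity gives this, because the node where a vertex's subtree peaks is unique. The rest is routine bookkeeping, matching \cite[Lemma~7.4]{CyganEtAl2015}.
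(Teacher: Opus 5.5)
Your proposal is correct and follows the same standard argument the paper relies on: the paper cites \cite[Lemma~7.4]{CyganEtAl2015} here, and its appendix proof of Lemma~\ref{lem:nice-empty-log-incidence} spells out the identical construction. That construction compresses to a nonredundant decomposition with at most $n+m$ nodes, binarizes using copies of bags, inserts interpolating paths on edges, pads to an empty root and empty leaves, and contracts equal-bag unary edges. Your hedge about $w=0$ is unnecessary: $F$ is nonempty and has no empty clauses, so $G^*(F)$ has an edge and $\mathrm{tw}^*(F)\ge 1$, which makes your $O((n+m)(w+1))$ count exactly the stated $O((n+m)\,\mathrm{tw}^*(F))$.
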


Let $\mathcal C$ be a class of unsatisfiable CNF formulas.
We say that $\mathcal C$ has resolution refutations of
FPT length parameterized by incidence treewidth if there
exist a computable function
$g:\mathbb N_{\geq0}\to\mathbb N_{\geq1}$ and a constant
$c\geq1$ such that every formula $F\in\mathcal C$ with
$n$ variables and $m$ clauses has a resolution refutation
of length at most $g(\mathrm{tw}^*(F))(n+m)^c$.\footnote{In our setting, it is easy to verify
that this definition is equivalent to the fpt-boundedness
condition in~\cite[Definition~2.6]{BeyersdorffGalesiLauriaRazborov2012}.} We define FPT-sized refutations for other proof systems analogously.

\section{Preparations for the constructions} \label{sec:preparations}

\subsection{Weighted incidence treewidth}

In addition to the standard notion of incidence treewidth, we will work with two weighted variants that play a central role in our results.

Given a CNF formula $F$, define the \emph{log-weighted width} of a tree decomposition $(T,\chi)$ of $G^*(F)$
\[
w_{\log}(T,\chi) :=
\max_{t\in V(T)}\left(|\chi_v(t)| + \sum_{C\in \chi_c(t)} \max\{ \log|C|,1\}\right)-1.
\]
The \emph{log-weighted incidence treewidth} $\mathrm{tw}^*_{\log}(F)$ is the minimum of $w_{\log}(T,\chi)$ over all tree decompositions $(T,\chi)$ of $G^*(F)$. This can be viewed as a weighted variant of treewidth in which variable vertices have unit weight and each clause vertex $C$ is assigned
weight $\max\{ \log|C|,1\}$.\footnote{All logarithms are base $2$ in this paper.}

The following lemma shows the relationship between incidence
treewidth and log-weighted incidence treewidth.

\begin{lemma}\label{lem:width-comparison}
Let $F$ be a CNF formula whose maximum clause width is $k\geq 2$, and let $(T,\chi)$ be a tree decomposition of $G^*(F)$. Let $w$ be the width of $(T,\chi)$ and let $w_{\log}$ be its log-weighted width. Then
\[
w+1 \le w_{\log}+1 \le (\log k)(w+1).
\]
Consequently,
\[
\mathrm{tw}^*(F)+1
\leq \mathrm{tw}_{\log}^*(F)+1
\leq (\log k)(\mathrm{tw}^*(F)+1).
\]
\end{lemma}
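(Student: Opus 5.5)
The plan is to compare the two widths bag by bag and then pass to minima over all tree decompositions. For a node $t$, write
\[
W(t):=|\chi_v(t)|+\sum_{C\in\chi_c(t)}\max\{\log|C|,1\}
\]
for the log-weighted size of the bag. Then $w+1=\max_t|\chi(t)|$ and $w_{\log}+1=\max_t W(t)$. So it suffices to show, for every node $t$,
\[
|\chi(t)|\le W(t)\le(\log k)\,|\chi(t)|,
\]
and then take the maximum over $t$.

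For the lower bound, each variable contributes $1$ to both sides. Each clause $C\in\chi_c(t)$ contributes $1$ to $|\chi(t)|$ and $\max\{\log|C|,1\}\ge1$ to $W(t)$. Summing gives $|\chi(t)|\le W(t)$.

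For the upper bound, since $k\ge2$ we have $\log k\ge1$. Each variable therefore contributes $1\le\log k$ to the right-hand side. Each clause $C$ has $|C|\le k$, so $\log|C|\le\log k$, and together with $1\le\log k$ this gives $\max\{\log|C|,1\}\le\log k$. Summing over the bag gives $W(t)\le(\log k)|\chi(t)|$. Maximizing both inequalities over $t$ yields $w+1\le w_{\log}+1\le(\log k)(w+1)$.

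For the consequence, first take a decomposition $(T,\chi)$ attaining $\mathrm{tw}_{\log}^*(F)$. Its ordinary width $w$ satisfies $\mathrm{tw}^*(F)\le w\le w_{\log}=\mathrm{tw}_{\log}^*(F)$, which is the first inequality. Next take a decomposition attaining $\mathrm{tw}^*(F)$. Its log-weighted width bounds $\mathrm{tw}_{\log}^*(F)$ from above, so $\mathrm{tw}_{\log}^*(F)+1\le(\log k)(\mathrm{tw}^*(F)+1)$, which is the second inequality.

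There is no real obstacle. The only points needing care are using $k\ge2$ so that the unit weights of variables are dominated by $\log k$, and choosing a different optimal decomposition for each of the two inequalities in the consequence.
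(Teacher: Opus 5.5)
Your proof is correct and takes essentially the same route as the paper. Both compare bags one at a time via $|\chi(t)|\le W(t)\le(\log k)\,|\chi(t)|$, using $1\le\max\{\log|C|,1\}\le\log k$ (which needs $|C|\le k$ and $k\ge 2$), then maximize over nodes and minimize over decompositions. Your choice of a separate optimal decomposition for each inequality in the consequence simply spells out the paper's ``taking the minimum over all tree decompositions'' step.
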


\begin{proof}
For each bag $t\in V(T)$, let $a_t:=|\chi_v(t)|$ and $b_t:=|\chi_c(t)|$. Then
\[
a_t+b_t
\le
a_t+\sum_{C\in\chi_c(t)}\max\{\log |C|,1\}
\le
a_t+(\log k)b_t
\le
(\log k)(a_t+b_t),
\]
where the second inequality uses $|C|\le k$ and $k\ge 2$. Taking the maximum over $t$ gives
$w+1\leq w_{\log}+1\leq(\log k)(w+1)$.
Taking the minimum over all tree decompositions of $G^*(F)$
then gives the inequalities between incidence
treewidth and log-weighted incidence treewidth.
\end{proof}

Let $(T,\chi,r)$ be a nice tree decomposition of $G^*(F)$. For each clause $c\in F$, let
$T(c)$ be the rooted subtree of $T$ consisting of the nodes $t$ such that
$c\in\chi_c(t)$, whose root is the unique node of $T(c)$
closest to $r$.

A \emph{clause-path family} of $(T,\chi,r)$ is a family
$\mathcal P=(P_c)_{c\in F}$ such that, for every $c\in F$, either
$P_c=\emptyset$, or $P_c$ is a path in $T(c)$ from the root of
$T(c)$ to one of the leaves of $T(c)$.

For each $c\in F$, define
\[
\rho_{\mathcal P}(c):=
\left|\left\{\ell\in c:
\operatorname{var}(\ell)\notin
\bigcup_{u\in P_c}\chi_v(u)\right\}\right|.
\]
That is, $\rho_{\mathcal P}(c)$ is the number of the literals of $c$ whose
variables are not in any bags along $P_c$.

For $t\in V(T)$ and $c\in\chi_c(t)$, define
\[
\omega_{\mathcal P}(t,c):=
\begin{cases}
1, & P_c\neq\emptyset\text{ and }t\in P_c,\\
\max\{\log(\rho_{\mathcal P}(c)+1),1\}, & \text{otherwise}.
\end{cases}
\]
The \emph{partially log-weighted width} of $(T,\chi,r)$ and $\mathcal P$ is
\[
w_{\mathrm{plog}}(T,\chi,r,\mathcal P):=
\max_{t\in V(T)}
\left(
|\chi_v(t)|
+\sum_{c\in\chi_c(t)}\omega_{\mathcal P}(t,c)
\right)-1.
\]
Define
\[
w_{\mathrm{plog}}(T,\chi,r):=
\min_{\mathcal P}
w_{\mathrm{plog}}(T,\chi,r,\mathcal P),
\]
where the minimum is taken over all clause-path families
$\mathcal P$. The \emph{partially log-weighted incidence
treewidth} of $F$ is
\[
\mathrm{tw}_{\mathrm{plog}}^*(F):=
\min_{(T,\chi,r)}
w_{\mathrm{plog}}(T,\chi,r),
\]
where the minimum is taken over all nice tree decompositions
$(T,\chi,r)$ of $G^*(F)$.

\begin{remark}
In the definition of partially log-weighted incidence
treewidth, we consider only nice tree decompositions.
If we instead take the minimum over all rooted tree
decompositions and all clause-path families, denote
the resulting parameter by $w$. We do not know whether $\mathrm{tw}_{\mathrm{plog}}^*(F)$
admits an upper bound linear in $w$, but we can prove that
$\mathrm{tw}_{\mathrm{plog}}^*(F)\leq(w+1)^2-1$. Consequently, if $F$ is unsatisfiable, with $n$ variables,
$m$ clauses, and maximum clause width $k$, then
Theorem~\ref{thm:main-bounds}(4) implies a resolution
refutation length bound of $k^2(n+m)2^{O(w^2)}$,
which is still an FPT bound parameterized by $w$.

To prove $\mathrm{tw}_{\mathrm{plog}}^*(F)\leq(w+1)^2-1$, let $(T,\chi,r)$ and $\mathcal P=(P_c)_{c\in F}$
have partially log-weighted width $w$.
Since every vertex has weight at least one, each bag
contains at most $w+1$ vertices.
For each clause $c$, if some node $t\notin P_c$ satisfies
$c\in\chi(t)$, then
$\max\{\log(\rho_{\mathcal P}(c)+1),1\}\leq w+1$,
since the total weight of $\chi(t)$ is at most $w+1$.
Otherwise, every bag containing $c$ lies on $P_c$.
Every variable of $c$ appears together with $c$ in some
bag, so $\rho_{\mathcal P}(c)=0$.
Thus $\max\{\log(\rho_{\mathcal P}(c)+1),1\}\leq w+1$
for every $c\in F$.

We convert $(T,\chi,r)$ into a nice tree decomposition
$(T',\chi',r')$ using the operations for converting a tree
decomposition into a nice tree decomposition described
in the proof of Lemma~\ref{lem:nice-empty-log-incidence} (see Appendix~\ref{app:weighted-decompositions}).
By the construction, for each $t\in V(T)$, we can choose a node $t^*\in V(T')$
such that $\chi'(t^*)=\chi(t)$, with $u^*$ an ancestor
of or equal to $v^*$ whenever $u$ is an ancestor of $v$.
For each nonempty $P_c$, the nodes $t^*$ with $t\in P_c$
therefore lie on a path in $T'(c)$ from a node to one of
its descendants.
Choose a root-to-leaf path $P'_c$ in $T'(c)$ containing
these nodes; if $P_c$ is empty, let $P'_c$ be empty.
Let $\mathcal P'=(P'_c)_{c\in F}$.
Every variable appearing in a bag on $P_c$ also appears
in a bag on $P'_c$, so
$\rho_{\mathcal P'}(c)\leq\rho_{\mathcal P}(c)$.
Every new bag is contained in a bag of $(T,\chi,r)$
and therefore contains at most $w+1$ vertices.
Each vertex has weight at most $w+1$ in every new bag
containing it.
Hence
$\mathrm{tw}_{\mathrm{plog}}^*(F)\leq(w+1)^2-1$.
\end{remark}

The following two lemmas prove the existence of
polynomial-size nice tree decompositions attaining the
minimum log-weighted and partially log-weighted widths,
respectively. Their proofs are given in
Appendix~\ref{app:weighted-decompositions}.

\begin{restatable}{lemma}{NiceLogDecomposition}
\label{lem:nice-empty-log-incidence}
Let $F$ be a nonempty CNF formula with $n$ variables and $m$ clauses.
Then $G^*(F)$ has a nice tree decomposition $(T,\chi,r)$ of
log-weighted width $\mathrm{tw}_{\log}^*(F)$ with
$O((n+m)(\mathrm{tw}_{\log}^*(F)+1))$ nodes.
\end{restatable}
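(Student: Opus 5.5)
Let $w:=\mathrm{tw}_{\log}^*(F)$ and fix a tree decomposition $(T_0,\chi_0)$ of $G^*(F)$ with $w_{\log}(T_0,\chi_0)=w$. The plan is to turn it into a nice one by the standard procedure behind Lemma~\ref{lem:nice-empty-incidence} (\cite[Lemma~7.4]{CyganEtAl2015}). The one extra check is that every new bag is a \emph{subset} of some original bag. The log-weighted width is monotone under taking subsets of bags, since all vertex weights are positive. So this check gives log-weighted width at most $w$, and hence exactly $w$ by minimality.

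The first step is to shrink the number of nodes before making the decomposition nice. The standard compression is to contract every edge $tt'$ with $\chi_0(t)\subseteq\chi_0(t')$, replacing the two bags by the larger one. It preserves the tree decomposition axioms and keeps every bag equal to an original bag. Afterwards no bag is contained in a neighbour's bag, which gives at most $n+m$ nodes. The argument: root the tree anywhere and map each non-root node $t$ to a vertex of $\chi_0(t)\setminus\chi_0(\mathrm{parent}(t))$. This map is injective, because the subtree of a vertex has a unique top node.

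Next I make the decomposition nice. Root at some node $r_0$ and attach above it a chain forgetting the elements of $\chi_0(r_0)$ one by one, ending in an empty root. Below each leaf, attach a chain introducing its bag elements one at a time, starting from the empty bag. A node with $d\ge 2$ children is replaced by a binary tree of $d-1$ join nodes carrying copies of its bag. For each edge between a node $t$ and its child $t'$ with $\chi(t)\ne\chi(t')$, insert a path that first forgets the elements of $\chi(t')\setminus\chi(t)$ one at a time. It then introduces the elements of $\chi(t)\setminus\chi(t')$ one at a time. Every intermediate bag is a subset of $\chi(t)$ or of $\chi(t')$, so the subset property holds throughout. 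The axioms are preserved because each vertex's occurrence set only gets extended along paths adjacent to nodes already containing it, so it stays connected.

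The counting is where I expect the main obstacle, because the bound must be in terms of $w+1$ rather than the maximum bag cardinality. Every vertex has weight at least one, so each bag has at most $w+1$ elements. Each original edge therefore receives at most $2(w+1)$ inserted nodes. Each leaf chain and the root chain have length at most $w+1$. Join nodes number fewer than the original nodes. With at most $n+m$ original nodes and edges, the total is $O((n+m)(w+1))$. One degenerate case needs care: if $F$ is nonempty but the compressed tree is a single node, the same chains still apply.
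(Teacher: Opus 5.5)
Your proposal is correct and follows essentially the same route as the paper. You first pass to a decomposition with at most $n+m$ nodes whose bags are original bags (the paper cites Kleinberg--Tardos for this, while you prove it inline by contracting nested-bag edges and using the injective top-node map), then apply the standard nice-ification in which every new bag is a subset of an old one, and count using $|\chi(t)|\le w+1$. One detail should be stated explicitly: the binary tree replacing a node with $d\ge 2$ children should also contain $d$ leaf copies of the bag, as in the paper, so that every join node's children carry the same bag before the forget/introduce paths begin.
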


\begin{restatable}{lemma}{PlogDecompositionSize}
\label{lem:plog-decomposition-size}
Let $F$ be a CNF formula of maximum clause width $k$,
with $n$ variables and $m$ clauses, and let $(T,\chi,r)$
be a nice tree decomposition of $G^*(F)$ such that
$w_{\mathrm{plog}}(T,\chi,r)=w$.
Then there is a nice tree decomposition $(T',\chi',r')$
of $G^*(F)$ such that
\[
w_{\mathrm{plog}}(T',\chi',r')\leq w
\]
and
\[
|V(T')|=O(k(n+m)w).
\]
\end{restatable}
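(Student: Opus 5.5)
We need to prove Lemma~\ref{lem:plog-decomposition-size}: starting from a nice tree decomposition $(T,\chi,r)$ with clause-path family $\mathcal P$ attaining $w_{\mathrm{plog}}(T,\chi,r,\mathcal P)=w$, we produce a nice decomposition with $O(k(n+m)w)$ nodes whose partially log-weighted width is still at most $w$. The guiding principle is that every operation either deletes nodes or replaces a bag by a subset of an existing bag. We also keep the path family inherited, so that $\rho$ never increases and each clause keeps weight one on its path.

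\textbf{Step 1 (prune useless leaves).} Repeatedly delete leaves. Whenever a leaf subtree contributes no vertex that is introduced only there, it can be removed, and the resulting degree-one parent chains are spliced. Concretely, we keep only the nodes needed to cover each vertex and each edge. For every vertex $v$ the set $T(v)$ must stay connected, and every edge $\{x,c\}$ must have a witness bag.

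For each incidence edge and each vertex, mark one witness node. There are at most $\sum_c|c|+n+m\le k m+n+m=O(k(n+m))$ marked nodes. Let $T_0$ be the minimal subtree containing the root and all marked nodes. Restricting each bag of $T_0$ to the vertices whose occurrence sets meet marked nodes of $T_0$ keeps every $T(v)$ connected. It only shrinks bags, so weights, which depend on $\rho$ and on path membership, cannot increase provided the paths are handled as in Step 3.

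\textbf{Step 2 (bound the size).} The tree $T_0$ has at most $O(k(n+m))$ leaves, and hence at most that many branching nodes. Compress each maximal path of degree-two nodes. Along such a path the bag changes by single introduce/forget steps. We replace it by a canonical sequence: from the bag $X$ at the top to the bag $Y$ at the bottom, forget $Y\setminus X$ first and then introduce $X\setminus Y$. The intermediate bags are subsets of $X$ or $Y$. Each vertex has weight at least one, so every bag holds at most $w+1$ vertices and each canonical segment has length $O(w)$.

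After compression we must re-make the tree nice. Joins need equal child bags, which is achieved by inserting introduce/forget chains of length $O(w)$ below each join. Leaves and the root need empty bags, which costs $O(w)$ forget chains each. The total is $O(k(n+m)w)$ nodes.

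\textbf{Step 3 (main obstacle: preserving paths and weights).} The delicate point is that in the new decomposition, $T'(c)$ must contain a root-to-leaf path $P'_c$ on which $c$ has weight one, while its other bags must carry weight $\max\{\log(\rho(c)+1),1\}$ with $\rho_{\mathcal P'}(c)\le\rho_{\mathcal P}(c)$.

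The map from old to new nodes is monotone with respect to ancestry: each retained or canonical node is identified with an old node whose bag contains its own. So the image of $P_c$ lies on a descending chain in $T'(c)$. Let $P'_c$ be any root-to-leaf extension of this chain inside $T'(c)$. Extra nodes on $P'_c$ only add variables, so $\rho$ can only drop.

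For every new node $t'\in T'(c)$, we must check that its weight does not exceed that of some old bag containing $\chi'(t')$. Suppose $t'\notin P'_c$. Its old counterpart $t$ is also not on $P_c$, unless $t'$ is an inserted chain node hanging off the path. Inserted nodes are subsets of a neighbouring bag, so we choose which neighbour they are compared with. If that comparison bag lies on $P_c$, then $c$ would have weight one there but weight $\max\{\log(\rho(c)+1),1\}$ at $t'$, which could exceed the budget.

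The fix I would try first is to route the inserted chains so that they descend along $P'_c$ whenever they contain $c$. Where that is impossible, forget $c$ in the chain before branching away. Each chain node then either lies on $P'_c$ or corresponds to an old off-path bag containing $c$. Summing weights bag by bag then gives $w_{\mathrm{plog}}(T',\chi',r',\mathcal P')\le w$.
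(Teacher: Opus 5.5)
There is a genuine gap, and it sits exactly at the point you label the main obstacle: the inequality $w_{\mathrm{plog}}(T',\chi',r',\mathcal P')\le w$ is never proved. Step~3 only describes a fix you ``would try first'', and it is not a well-defined construction. A chain occupies a single place in $T'$, so it cannot be routed along $P'_c$ for all clauses $c$ in its bags at once. Moreover, the child of a join must have exactly the join bag $X$, so clauses whose paths go into the other child are off-path there whatever you do inside the chain. This is harmless only because the old child of the join has the same bag and the same off-path clauses; the comparison has to be made with that node, not with the join.

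Two earlier steps also fail as stated. In Step~2, a maximal unbranched path of $T_0$ may contain marked witness nodes in its interior. Replacing it by a canonical sequence between its end bags $X$ and $Y$ then deletes every vertex occurring only in the interior, possibly together with the only bag covering an edge $\{x,c\}$. So $(T',\chi')$ need not be a tree decomposition. In addition, since you do not mark the endpoints of the paths $P_c$, part of $P_c$ may be pruned. The remark that extra nodes only add variables says nothing about the deleted nodes, so $\rho_{\mathcal P'}(c)\le\rho_{\mathcal P}(c)$ is not justified. (For pure pruning it can be rescued through the witnesses $s_{x,c}$, but you give no argument.)

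The missing idea is that you never need to create new bags. The paper adds $r$ and both endpoints of every nonempty $P_c$ to the marked set $M$. It takes the minimal subtree $S$ of $T$ containing $M$ and extends each leaf of $S$ down to a leaf of $T$. The resulting $\widehat S$ is a subtree of $T$ with the original bags, so:
\begin{itemize}
\item it is a tree decomposition;
\item its root and leaves have empty bags;
\item its two-child nodes are genuine join nodes;
\item $P_c\subseteq\widehat S$ for every $c$.
\end{itemize}
The only modification is contracting edges $uv$ where $v$ is the unique child of $u$ in $\widehat S$ and $\chi(u)=\chi(v)$. For such an edge, $u\in P_c$ if and only if $v\in P_c$. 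The nontrivial direction: if $u\in P_c$ but $v\notin P_c$, then $P_c$ continues into another child of $u$, which lies in $\widehat S$, contradicting the uniqueness of $v$. Hence every node of $T'$ carries the bag and the path statuses of an old node, $\rho$ is unchanged, and the width bound is inherited verbatim.

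The size bound is then a count. Let $Z$ be the marked nodes, leaves, and branching nodes of $\widehat S$; then $|Z|=O(k(n+m))$. For each component $R$ of $\widehat S-Z$, every vertex in a bag of $R$ lies in $\chi(u_R)\cup\chi(v_R)$, because its witness lies outside $R$. So at most $2(w+1)$ vertices occur on $R$, each introduced and forgotten at most once along $R$. After contraction, $R$ therefore contributes at most $4(w+1)+1$ nodes.

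Your canonical sequences could be salvaged along the same lines. Make marked nodes and path endpoints breakpoints, keep each segment's end nodes with their old bags, and compare every new node with the end node whose bag contains it. In an unbranched segment, a clause has the same path status at both nodes, so the weights do not increase.
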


\begin{theorem}\label{thm:plog-comparison}
Let $F$ be a nonempty CNF formula. Then
\[
\mathrm{tw}^*(F)
\leq \mathrm{tw}_{\mathrm{plog}}^*(F)
\leq \mathrm{tw}_{\log}^*(F).
\]
\end{theorem}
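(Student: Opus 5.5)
The two inequalities are proved separately, each by comparing weights bag by bag in a suitable nice tree decomposition.

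\emph{Lower bound $\mathrm{tw}^*(F)\leq\mathrm{tw}_{\mathrm{plog}}^*(F)$.} Fix a nice tree decomposition $(T,\chi,r)$ of $G^*(F)$ and a clause-path family $\mathcal P$ attaining $\mathrm{tw}_{\mathrm{plog}}^*(F)$. Every weight is at least one: variables have weight one, and $\omega_{\mathcal P}(t,c)$ is either $1$ or $\max\{\log(\rho_{\mathcal P}(c)+1),1\}\geq1$. Hence for every node $t$,
\[
|\chi(t)|\leq|\chi_v(t)|+\sum_{c\in\chi_c(t)}\omega_{\mathcal P}(t,c).
\]
Taking the maximum over $t$ shows that the ordinary width of $(T,\chi)$ is at most $w_{\mathrm{plog}}(T,\chi,r,\mathcal P)$. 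Since $(T,\chi)$ is a tree decomposition of $G^*(F)$, its width is at least $\mathrm{tw}^*(F)$, which gives the lower bound.

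\emph{Upper bound $\mathrm{tw}_{\mathrm{plog}}^*(F)\leq\mathrm{tw}_{\log}^*(F)$.} By Lemma~\ref{lem:nice-empty-log-incidence}, there is a nice tree decomposition $(T,\chi,r)$ of $G^*(F)$ whose log-weighted width is exactly $\mathrm{tw}_{\log}^*(F)$. This step matters, because $\mathrm{tw}_{\mathrm{plog}}^*$ is defined only over nice decompositions, whereas $\mathrm{tw}_{\log}^*$ is a minimum over all decompositions. On this decomposition we choose the empty clause-path family $\mathcal P$, with $P_c=\emptyset$ for all $c$. Then no variable lies in a bag on $P_c$, so $\rho_{\mathcal P}(c)=|c|$. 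Moreover, $\omega_{\mathcal P}(t,c)=\max\{\log(|c|+1),1\}$ for every $t$ and every $c\in\chi_c(t)$.

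This last expression is slightly larger than the log-weight $\max\{\log|c|,1\}$, so the empty family alone does not finish the proof. This mismatch is the main obstacle.

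To overcome it, we choose instead, for every clause $c$, a nonempty root-to-leaf path $P_c$ in $T(c)$ that picks up at least one variable of $c$. Every variable $x\in\operatorname{var}(c)$ shares some bag with $c$, so some node $u\in T(c)$ has $x\in\chi_v(u)$. We extend the path from the root of $T(c)$ through $u$ down to a leaf of $T(c)$; this is possible since $T(c)$ is a rooted subtree. Then $\rho_{\mathcal P}(c)\leq|c|-1$, and therefore $\log(\rho_{\mathcal P}(c)+1)\leq\log|c|$. On the path, $\omega_{\mathcal P}(t,c)=1\leq\max\{\log|c|,1\}$; off the path, $\omega_{\mathcal P}(t,c)\leq\max\{\log|c|,1\}$. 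Since all clauses in $F$ are nonempty, such a variable always exists.

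Summing over each bag then gives, for every node $t$,
\[
|\chi_v(t)|+\sum_{c\in\chi_c(t)}\omega_{\mathcal P}(t,c)\leq|\chi_v(t)|+\sum_{c\in\chi_c(t)}\max\{\log|c|,1\}.
\]
Taking the maximum over $t$ yields $w_{\mathrm{plog}}(T,\chi,r,\mathcal P)\leq w_{\log}(T,\chi)=\mathrm{tw}_{\log}^*(F)$, which completes the proof.
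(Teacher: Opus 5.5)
Your proposal is correct and matches the paper's proof: the lower bound holds because every vertex has weight at least one, and the upper bound takes the nice decomposition from Lemma~\ref{lem:nice-empty-log-incidence} and routes each $P_c$ through a bag containing $c$ and one of its variables, so that $\rho_{\mathcal P}(c)\leq|c|-1$. The preliminary attempt with the empty clause-path family is unnecessary but does no harm.
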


\begin{proof}
Since every vertex has weight at least one in the definition
of partially log-weighted width, we have
$\mathrm{tw}^*(F)\leq\mathrm{tw}_{\mathrm{plog}}^*(F)$.

By Lemma~\ref{lem:nice-empty-log-incidence}, there is a nice
tree decomposition $(T,\chi,r)$ of $G^*(F)$ of log-weighted
width $\mathrm{tw}_{\log}^*(F)$.
For each $c\in F$, choose a variable $x\in\mathrm{var}(c)$
and a node $t \in V(T)$ such that $\{x,c\}\subseteq\chi(t)$.
Choose a root-to-leaf path $P_c$ in $T(c)$ containing $t$.
Let $\mathcal P=(P_c)_{c\in F}$.
Then $\rho_{\mathcal P}(c)\leq |c|-1$ for every $c\in F$,
so the weight of $c$ in every bag containing it is at most
$\max\{\log|c|,1\}$.
Consequently,
$\mathrm{tw}_{\mathrm{plog}}^*(F)
\leq w_{\mathrm{plog}}(T,\chi,r,\mathcal P)
\leq\mathrm{tw}_{\log}^*(F)$.
\end{proof}

Cal\`i and Razgon~\cite{CaliRazgon2022} introduced
one-sided incidence treewidth. A rooted tree decomposition $(T,\chi,r)$ of $G^*(F)$ is
\emph{one-sided} if, for every $c\in F$, the nodes $t$ with
$c\in\chi_c(t)$ induce a path from a node to one of its
descendants.
The \emph{one-sided incidence treewidth} of $F$, denoted by
$\mathrm{tw}_{\mathrm{os}}(F)$, is the minimum of $w(T,\chi)$
over all one-sided tree decompositions $(T,\chi,r)$ of $G^*(F)$. The following theorem shows how small one-sided incidence
treewidth yields small partially log-weighted incidence
treewidth. Thus, partially log-weighted incidence treewidth
can be viewed as a generalization of one-sided incidence
treewidth.

\begin{theorem}\label{thm:plog-one-sided}
Let $F$ be a nonempty CNF formula, and let $(T,\chi,r)$ be a one-sided
tree decomposition of $G^*(F)$ of width
$\mathrm{tw}_{\mathrm{os}}(F)$.
Then there exist a nice tree decomposition $(T',\chi',r')$
of $G^*(F)$ and a clause-path family $\mathcal P=(P_c)_{c\in F}$
such that
$|V(T')|=O(|V(T)|(\mathrm{tw}_{\mathrm{os}}(F)+1))$
and
$w_{\mathrm{plog}}(T',\chi',r',\mathcal P)
\leq\mathrm{tw}_{\mathrm{os}}(F)$.
In particular,
$\mathrm{tw}_{\mathrm{plog}}^*(F)
\leq\mathrm{tw}_{\mathrm{os}}(F)$.
\end{theorem}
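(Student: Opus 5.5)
We convert the one-sided decomposition $(T,\chi,r)$ into a nice one while keeping every clause's bags on a single vertical path, and then take that whole path as $P_c$. For each clause $c$, let $Q_c$ be the path of nodes of $T$ whose bags contain $c$; it runs from its top node $a_c$ down to a descendant $b_c$. Every variable $x\in\mathrm{var}(c)$ shares some bag with $c$, and that bag lies on $Q_c$.

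First we make $(T,\chi,r)$ nice using the standard operations, exactly as in the proof of Lemma~\ref{lem:nice-empty-log-incidence}:
\begin{itemize}
\item add an empty root above $r$ and empty leaves below the leaves;
\item binarize nodes with many children by chains of join nodes that copy the bag;
\item on each tree edge $(u,v)$, with $v$ a child of $u$, insert a chain that forgets $\chi(v)\setminus\chi(u)$ one vertex at a time and then introduces $\chi(u)\setminus\chi(v)$ one vertex at a time.
\end{itemize}
Every new bag is a subset of an original bag, so it has at most $\mathrm{tw}_{\mathrm{os}}(F)+1$ vertices. Each edge creates $O(\mathrm{tw}_{\mathrm{os}}(F)+1)$ new nodes, and binarization creates $O(|V(T)|)$ nodes in total. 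Hence $|V(T')|=O(|V(T)|(\mathrm{tw}_{\mathrm{os}}(F)+1))$.

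The key step is to check that $T'(c)$ is still a vertical path for every $c$. Suppose $T'(c)$ branched. The branching point would have to be a node with two children whose bags both contain $c$. Such a node is either an original node $u$ of $Q_c$ with two children in $Q_c$, which is impossible because $Q_c$ is a path, or a binarization copy of such a node.

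The binarization case needs care. Suppose $u\in Q_c$ has children $v_1,\dots,v_p$ and we build a chain of copies of $\chi(u)$. Every copy contains $c$, but at most one $v_i$ lies in $Q_c$. For the other children, the forget chain below the copy removes $c$ immediately, because $c\in\chi(u)\setminus\chi(v_i)$. So the copies still form a vertical path. To make this airtight, in each inserted chain we introduce and forget clause vertices first, that is, closest to the original node that contains them. Then each branch leaving $Q_c$ contains $c$ for at most one node, namely the copy itself, which lies above the forget.

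Some subtlety remains. If $c$ has a copy-bag node whose child subtree is entered only through a forget-$c$ node, that node is not in $T'(c)$, so no branching occurs there. Thus $T'(c)$ is a vertical path from its top to its bottom, and its bottom is a leaf of $T'(c)$.

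Now set $P_c:=T'(c)$, which is nonempty and runs from the root of $T'(c)$ to one of its leaves. Every variable of $c$ shares a bag with $c$ in $T'$, because incidence edges are preserved in a tree decomposition. That bag lies on $P_c$, so $\rho_{\mathcal P}(c)=0$ and $\omega_{\mathcal P}(t,c)=1$ for every $t\in T'(c)$. Every vertex therefore has weight one in every bag containing it, and $w_{\mathrm{plog}}(T',\chi',r',\mathcal P)\le\max_t|\chi'(t)|-1\le\mathrm{tw}_{\mathrm{os}}(F)$. Taking minima over nice tree decompositions and clause-path families gives $\mathrm{tw}_{\mathrm{plog}}^*(F)\le\mathrm{tw}_{\mathrm{os}}(F)$.

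The main obstacle is the middle step: showing that the niceness conversion, especially binarization and the ordering of forget and introduce operations, does not split a clause's occurrence set into a non-path subtree. Because of the choice $P_c=T'(c)$, weights are never an issue.
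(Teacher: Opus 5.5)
Your key structural claim is false: in general $T'(c)$ is not a path, so $P_c:=T'(c)$ is not an admissible entry of a clause-path family. In a nice tree decomposition, a join node and both of its children carry the same bag. So if a node $u$ of $Q_c$ has at least two children in $T$, every join node built from $\chi(u)$ contains $c$, and so do both of its children. Hence $T'(c)$ branches there.

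The copy node you place above the forget chain toward a child $v_i\notin Q_c$ is exactly such a pendant branch of $T'(c)$. Your case analysis misses it because it asks for two children \emph{in $Q_c$}. What actually matters is two children in $T'$ whose bags are copies of $\chi(u)$.

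Reordering the inserted chains cannot help when $c$ continues into a child $v_1\in Q_c$ of $u$. The branching node separating the copies of $v_1$ and of another child lies between the copies of $u$ and $v_1$, so it contains $c$ by connectivity. Being a join node, it passes $c$ to both children. Separately, you also need to contract single-child edges with equal bags (e.g.\ when $\chi(u)=\chi(v)$) for the result to be nice.

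The missing observation is that you never need $T'(c)$ to be a path. For an original node $t$, let $t^*$ be its copy in $T'$ (the top of its binarization, after contractions). Then $\chi'(t^*)=\chi(t)$ and ancestry is preserved. So the copies of the nodes of $Q_c$ lie on one vertical path of $T'$, and that path lies in $T'(c)$ by connectivity. Take $P_c$ to be any root-to-leaf path of $T'(c)$ through all of them.

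Every incidence $\{x,c\}$ is witnessed by the bag of an original node of $Q_c$, whose copy is on $P_c$. Hence $\rho_{\mathcal P}(c)=0$. Consequently, even the nodes of $T'(c)$ off $P_c$ give $c$ weight $\max\{\log(0+1),1\}=1$. Every vertex then has weight one in every bag, and $w_{\mathrm{plog}}(T',\chi',r',\mathcal P)=w(T',\chi')\leq\mathrm{tw}_{\mathrm{os}}(F)$.

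This is exactly the paper's argument. With $P_c$ chosen this way, the rest of your proof (bag containment and the size count) goes through unchanged.
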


\begin{proof}
We use the construction that converts a tree decomposition
into a nice tree decomposition in the proof of
Lemma~\ref{lem:nice-empty-log-incidence} (see Appendix~\ref{app:weighted-decompositions}) to transform
$(T,\chi,r)$ into a nice tree decomposition $(T',\chi',r')$.
Every bag of $(T',\chi',r')$ is contained in a bag of
$(T,\chi,r)$, so
$w(T',\chi')\leq\mathrm{tw}_{\mathrm{os}}(F)$.
By an argument similar to the one used to bound the number
of nodes in the proof of
Lemma~\ref{lem:nice-empty-log-incidence}, we have
$|V(T')|=O(|V(T)|(\mathrm{tw}_{\mathrm{os}}(F)+1))$.

For each $t\in V(T)$, if $t$ is replaced by a binary tree
in the construction described in the proof of
Lemma~\ref{lem:nice-empty-log-incidence},
let $s_t$ be the root of that binary tree.
Otherwise, let $s_t=t$.
Let $t^*\in V(T')$ be the node corresponding to $s_t$
after the contractions described in the proof of
Lemma~\ref{lem:nice-empty-log-incidence}.
Since we contract only adjacent nodes with the same bags,
$\chi'(t^*)=\chi(t)$.
The construction also ensures that $u^*$ is an ancestor
of or equal to $v^*$ whenever $u$ is an ancestor of $v$ in $T$.

Fix $c\in F$.
Since $(T,\chi,r)$ is one-sided, the nodes whose bags
contain $c$ form a path from a node $a_c$ to a descendant $b_c$.
Then $c\in\chi'(a_c^*)\cap\chi'(b_c^*)$. So every bag on
the path from $a_c^*$ to $b_c^*$ contains $c$ by the
definition of tree decompositions.
Thus this path lies in $T'(c)$.
Since $a_c^*$ is an ancestor of $b_c^*$, or equal to $b_c^*$,
we can choose a path $P_c$ from the root of $T'(c)$
to a leaf of $T'(c)$ containing both nodes.
Let $\mathcal P=(P_c)_{c\in F}$ be the resulting
clause-path family.

Every node $t\in V(T)$ with $c\in\chi(t)$ lies on
the path from $a_c$ to $b_c$.
By the ancestor relation above, $t^*$ lies on
the path from $a_c^*$ to $b_c^*$ in $T'$.
For every $x\in\mathrm{var}(c)$, some bag $\chi(t)$ of $T$
contains both $x$ and $c$.
Thus $t^*\in P_c$ and $x\in\chi'_v(t^*)$.
It follows that $\rho_{\mathcal P}(c)=0$ for every $c\in F$.
Hence every clause vertex has weight one in every bag
containing it, and
\[
\mathrm{tw}_{\mathrm{plog}}^*(F)
\leq w_{\mathrm{plog}}(T',\chi',r',\mathcal P)
=w(T',\chi')
\leq\mathrm{tw}_{\mathrm{os}}(F).
\]
\end{proof}

\subsection{Multiset proof systems} \label{sec:multiset-proof-systems}

As stated in Subsection~\ref{subsec:introproofsys}, to simplify the description of our constructions, we introduce
two new proof systems, $\mathrm{mRes}$ and
$k\mathrm{\text{-}mDNF\text{-}Res}$, which can be viewed as
multiset versions of the resolution proof system and
$k\mathrm{\text{-}DNF\text{-}Res}$, respectively. These two proof systems do not differ essentially from
resolution and $k\mathrm{\text{-}DNF\text{-}Res}$, respectively;
we introduce them mainly to make it easier to deal with
some technical details of our constructions.

A \emph{multiclause} is a finite multiset of literals. For a
multiclause $C$ and a literal $\ell$, let $m_C(\ell)$ denote the
multiplicity of $\ell$ in $C$. For multiclauses $C$ and $D$, their
multiset disjunction $C\mlor D$ is defined by
\[
m_{C\mlor D}(\ell):=m_C(\ell)+m_D(\ell)
\]
for every literal $\ell$. A literal is identified with the
corresponding singleton multiclause, and the empty multiclause is
denoted by $\bot$. Define
\[
\operatorname{red}(C):=\{\ell:m_C(\ell)>0\},
\]
which is the ordinary clause obtained from $C$ by deleting repeated
literals.

The proof system $\mathrm{mRes}$ is defined using multiclauses. Given
a CNF formula $F$, its initial multiclauses are clauses $c \in F$,
with each literal having multiplicity one. It also has the axioms
$x\mlor\neg x$. Its inference rules are
\[
\frac{C}{C\mlor E}
\quad(\text{weakening}),
\qquad
\frac{C\mlor x\qquad D\mlor\neg x}{C\mlor D}
\quad(\text{resolution}),
\]
\[
\frac{C\mlor\ell\mlor\ell}{C\mlor\ell}
\quad(\text{contraction}),
\]
where $C,D,E$ are multiclauses, $x$ is a variable, and $\ell$ is a
literal. In the resolution rule, $C$ and $D$ may themselves contain
literals on $x$.

An $\mathrm{mRes}$ derivation from $F$ is a sequence of
multiclauses $\pi=(E_1,\ldots,E_s)$ such that each $E_i$ is
an initial multiclause, an axiom, or is obtained by zero or
more applications of the contraction rule from a multiclause
$E_i'$ satisfying one of the following:
\begin{enumerate}
    \item $E_i'=E_j$ for some $j<i$;
    \item $E_i'$ is derived from earlier multiclauses by one
    application of the weakening rule or the resolution rule.
\end{enumerate}
The length of $\pi$ is $|\pi|:=s$.

An $\mathrm{mRes}$ refutation of $F$ is a derivation
whose final multiclause is empty.

\begin{remark}
We adopt this slightly unconventional definition of an $\mathrm{mRes}$ derivation
for the following reasons.

First, we wish to treat contraction as an auxiliary operation
rather than count each application of the contraction rule as
a separate step. Our definition allows any number of these
applications within a single derivation step, so the length
does not depend on the number of contractions used to obtain
$E_i$ from $E_i'$.

Second, although an application of the resolution rule to two
multiclauses may produce a multiclause $E_i'$ of large width,
when bounding the width of our derivation we only need to
consider $E_i$, obtained from $E_i'$ by applications of the
contraction rule, rather than $E_i'$ itself. For example, let $C$ be a clause. In $\mathrm{mRes}$, an application of the resolution rule to $C\mlor x$ and
$C\mlor\neg x$ gives $C\mlor C$, which can be reduced to $C$
by applications of the contraction rule. In the resolution proof system,
an application of the resolution rule to $C\lor x$ and
$C\lor\neg x$ gives $C$ directly.
This conversion motivates excluding the intermediate
multiclause $C\mlor C$ when considering the width of the
$\mathrm{mRes}$ derivation.

Third, this definition allows us to describe our constructions more concisely and precisely.
\end{remark}

An \emph{mDNF} is a finite multiset of terms. Here a term is still a
set of literals, rather than a multiset of literals; this is
sufficient for our construction. For an mDNF $D$ and a term $T$,
let $m_D(T)$ denote the multiplicity of $T$ in $D$. For mDNFs
$D_1$ and $D_2$, their multiset disjunction is defined by
$m_{D_1\mlor D_2}(T):=m_{D_1}(T)+m_{D_2}(T)$ for every term $T$.
A term is identified with the mDNF containing that term with
multiplicity one, and the empty mDNF is denoted by $\bot$.
Define $\operatorname{red}(D):=\{T:m_D(T)>0\}$, which is the DNF
obtained from $D$ by deleting repeated terms. For $k\geq 1$, a
$k$-mDNF is an mDNF whose terms contain at most $k$ literals.

The proof system $k\mathrm{\text{-}mDNF\text{-}Res}$ is defined on
$k$-mDNFs. For each clause $c=\ell_1\lor\cdots\lor\ell_r$ of $F$,
the mDNF $\ell_1\mlor\cdots\mlor\ell_r$ is an
\emph{initial mDNF}, where each $\ell_i$ is viewed
as a singleton term. The system also
has the axioms $x\mlor\neg x$ and $\top$. Its inference rules are
\[
\frac{D}{D\mlor E}
\quad(\text{weakening}),
\]
\[
\frac{D\mlor T}{D\mlor T'}
\quad(\land\text{-elimination}),\qquad T'\subseteq T,
\]
\[
\frac{D_1\mlor T_1\qquad D_2\mlor T_2}
     {D_1\mlor D_2\mlor(T_1\land T_2)}
\quad(\land\text{-introduction}),
\]
\[
\frac{
D_1\mlor\ell_1\mlor\cdots\mlor\ell_q
\qquad
D_2\mlor(\neg\ell_1\land\cdots\land\neg\ell_q)
}{
D_1\mlor D_2
}
\quad(\text{cut}),
\]
\[
\frac{D\mlor T\mlor T}{D\mlor T}
\quad(\text{contraction}),
\]
where $D,E,D_1,D_2$ are mDNFs, $T,T',T_1,T_2$ are terms, and
$\{\ell_1,\ldots,\ell_q\}$ is a nonempty finite set of literals.
Every formula in a rule application must be a $k$-mDNF.
All term multiplicities in $D_1\mlor D_2$ are retained.

A $k\mathrm{\text{-}mDNF\text{-}Res}$ derivation from $F$ is a
sequence of $k$-mDNFs $\pi=(E_1,\ldots,E_s)$ such that each
$E_i$ is an initial mDNF, an axiom, or is obtained by zero or
more applications of the contraction rule from a $k$-mDNF
$E_i'$ satisfying one of the following:
\begin{enumerate}
    \item $E_i'=E_j$ for some $j<i$;
    \item $E_i'$ is derived from earlier mDNFs by one application
    of a rule other than contraction.
\end{enumerate}
The length of $\pi$ is $|\pi|:=s$. We call $\pi$ a
$k\mathrm{\text{-}mDNF\text{-}Res}$ refutation of $F$ if $E_s$
is the empty mDNF.

The \emph{width} of a multiclause $C$ is its size as a
multiset:
\[
\operatorname{width}(C)
:=
\sum_{\ell}m_C(\ell).
\]
The width of an $\mathrm{mRes}$ derivation
$\pi=(C_1,\ldots,C_s)$ is
\[
\operatorname{width}(\pi)
:=
\max_{1\le i\le s}\operatorname{width}(C_i).
\]

A multiclause is \emph{tautological} if it contains both $x$ and $\neg x$
for some variable $x$. An application of the resolution rule to
$P=C\mlor x$ and $Q=D\mlor\neg x$ is called a
\emph{variable-eliminating} resolution if
$m_P(x)=m_Q(\neg x)=1$ and $m_P(\neg x)=m_Q(x)=0$,
or equivalently, if neither $x$ nor $\neg x$ belongs to
$C\mlor D$.

The following two lemmas show that our length and width bounds
for refutations in the multiset proof systems also hold,
up to constant factors, for the original proof systems.
Thus, it suffices to construct refutations in the multiset
proof systems.
The proofs of these lemmas are given in
Appendix~\ref{app:multiset-proofs}.

\begin{restatable}{lemma}{MResToRes}\label{lem:mres-to-res}
Every $\mathrm{mRes}$ refutation $\pi$ of a CNF formula $F$ can be
transformed into a resolution refutation of $F$ of length at most
$|\pi|$ and width at most $\operatorname{width}(\pi)$.
Moreover, every resolution step in the resulting derivation
arises from a variable-eliminating resolution step in $\pi$.
If on every directed path in the proof DAG of $\pi$, each
variable is resolved in at most one variable-eliminating resolution
step, then the resulting refutation is regular.
\end{restatable}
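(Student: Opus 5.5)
We transform $\pi=(E_1,\ldots,E_s)$ step by step into the sequence of ordinary clauses $\pi'=(\operatorname{red}(E_1),\ldots,\operatorname{red}(E_s))$, from which we then delete some entries. Throughout, we prove by induction on $i$ that each $\operatorname{red}(E_i)$ is either an initial clause, a tautology $x\lor\neg x$ coming from an axiom, or follows in ordinary resolution from earlier entries. Since $\operatorname{red}$ never increases width and contraction does not change $\operatorname{red}$, the width bound $\operatorname{width}(\pi')\le\operatorname{width}(\pi)$ is immediate. Likewise, $\operatorname{red}(E_i)=\operatorname{red}(E_i')$.

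We handle the cases for $E_i'$ in turn.
\begin{itemize}
\item If $E_i'=E_j$ with $j<i$, then $\operatorname{red}(E_i)$ is a copy of an earlier clause and is simply dropped.
\item If $E_i'$ comes by weakening from $E_j$, then $\operatorname{red}(E_i)\supseteq\operatorname{red}(E_j)$, which is an ordinary weakening.
\item If $E_i'=C\mlor D$ comes by resolution from $P=C\mlor x$ and $Q=D\mlor\neg x$, there are two subcases.
\begin{itemize}
\item If the step is variable-eliminating, then $x,\neg x\notin\operatorname{red}(C)\cup\operatorname{red}(D)$. Hence $\operatorname{red}(P)=\operatorname{red}(C)\lor x$ and $\operatorname{red}(Q)=\operatorname{red}(D)\lor\neg x$, and ordinary resolution gives exactly $\operatorname{red}(C\mlor D)$.
\item Otherwise, some literal on $x$ survives in $C\mlor D$. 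Say $x\in C\mlor D$, the case $\neg x$ being symmetric. Then $\operatorname{red}(P)\subseteq\operatorname{red}(C\mlor D)$: indeed $\operatorname{red}(P)=\operatorname{red}(C)\cup\{x\}$, and $x$ appears in the result. So $\operatorname{red}(E_i)$ follows from $\operatorname{red}(P)$ by weakening.
\end{itemize}
\end{itemize}

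Axioms $x\mlor\neg x$ are not allowed in ordinary resolution, so they need separate treatment. This is the main obstacle: tautologies must be eliminated while preserving length, width, and the correspondence of resolution steps. The plan is to show that every clause derived from the tautology $x\lor\neg x$ can be replaced by a clause that is either tautological or a weakening of something non-tautological. Concretely, I would prove a stronger invariant. For each $i$, define a clause $C_i^*\subseteq\operatorname{red}(E_i)$ that is either an honest resolution consequence of $F$ (via the construction above) or the special symbol ``tautology'', meaning that $\operatorname{red}(E_i)$ is tautological.

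Resolving a tautological premise $x\lor\neg x$-derived clause $R\ni y,\neg y$ on a variable $z$ works as follows. If $z\neq y$, the resolvent still contains $y,\neg y$ and so is tautological again. If $z=y$, the resolvent contains a superset of the other premise minus its pivot literal. More precisely, with $R=R'\lor y$ and $\neg y\in R'$, the resolvent $R'\lor D$ contains $\neg y$ and thus contains $D\lor\neg y$, the other premise. So the resolvent is a weakening of the other premise, and we use that premise's $C^*$.

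The empty clause is not tautological, so $C_s^*=\bot$ is genuinely derived. We therefore keep only the entries with honest $C_i^*$: each is an initial clause, a weakening, or a resolution of earlier honest entries. Weakenings are then absorbed, either by keeping them as weakening steps or by the standard fact that weakening can be pushed down, whichever the paper's resolution definition allows; weakening is allowed here. Length is at most $s$ and width is at most $\operatorname{width}(\pi)$.

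Finally, every resolution step retained is a variable-eliminating one. Every other step became a weakening or a copy, and in the tautology case resolution steps are replaced by weakenings or by reuse of an earlier clause. The proof DAG of the result is obtained by contracting edges of $\pi$'s DAG: each new edge corresponds to a directed path in the old DAG. Therefore, a directed path in the new DAG lifts to a directed path in $\pi$, and its resolution steps are variable-eliminating steps on that lifted path. The hypothesis then gives regularity.
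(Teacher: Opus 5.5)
Your proposal is correct and follows essentially the same route as the paper. You map each $E_i$ to $\operatorname{red}(E_i)$, drop duplicates and tautological clauses, turn variable-eliminating resolutions into ordinary resolutions and other resolutions into weakenings of a premise contained in the result, and lift directed paths of the new proof DAG to $\pi$ for regularity. Your $C_i^*$ bookkeeping and the case split on the complementary pair versus the pivot are more than needed: the paper simply observes that when $\operatorname{red}(E)$ is non-tautological, every premise actually used is itself non-tautological (each is a subset of $\operatorname{red}(E)$, possibly plus a pivot literal whose variable does not occur in $\operatorname{red}(E)$), so all tautological clauses can just be deleted.
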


\begin{restatable}{lemma}{MDNFResToDNFRes}\label{lem-mdnfres-to-dnfres}
Every $k\mathrm{\text{-}mDNF\text{-}Res}$ refutation $\pi$
of a CNF formula $F$ can be transformed into a
$k\mathrm{\text{-}DNF\text{-}Res}$ refutation of $F$
of length at most $3|\pi|$.
\end{restatable}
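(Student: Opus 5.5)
The plan is to simulate $\pi=(E_1,\ldots,E_s)$ line by line by a $k\mathrm{\text{-}DNF\text{-}Res}$ derivation. For each $i$, we produce $\operatorname{red}(E_i)$ using at most three DNF lines. The invariant is that after processing $E_i$, the DNF $\operatorname{red}(E_i)$ has been derived. Since $\operatorname{red}$ of the empty mDNF is $\bot$, the final line is then the empty DNF. The key observation is that contraction becomes trivial after reduction: $\operatorname{red}(D\mlor T\mlor T)=\operatorname{red}(D\mlor T)$. So the zero or more contractions leading from $E_i'$ to $E_i$ cost nothing, and case~1 ($E_i'=E_j$) needs no new line at all, or one copy if a line must be written. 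It therefore suffices to simulate a single non-contraction rule whose premises have been replaced by their reductions.

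We handle the rules one at a time.
\begin{itemize}
\item Initial mDNFs and the axiom $x\mlor\neg x$ reduce to an initial clause and the axiom of $k\mathrm{\text{-}DNF\text{-}Res}$.
\item The axiom $\top$ has no direct counterpart. We derive it as $x\lor\neg x$ followed by $\land$-elimination of $\neg x$ to the empty term. Alternatively, $\top$ can be obtained from $x\lor\neg x$ by $\land$-introduction of two copies; we should check which fits the term conventions, but either costs $O(1)$ lines.
\item Weakening is immediate: $\operatorname{red}(D\mlor E)=\operatorname{red}(D)\lor\operatorname{red}(E)$.
\item For $\land$-elimination from $D\mlor T$ to $D\mlor T'$, the subtlety is that $T$ may also occur in $D$, so $\operatorname{red}(D\mlor T)=\operatorname{red}(D)\lor T$ might lose the ``copy'' of $T$ that stays. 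We handle this by first weakening $\operatorname{red}(D)\lor T$ so that $T$ is available both as a retained disjunct and as the one to be eliminated. This is not needed in the set setting: if $T\in\operatorname{red}(D)$, we apply $\land$-elimination and then weaken by $T$ to recover the target $\operatorname{red}(D)\cup\{T'\}$. That costs at most two lines.
\item $\land$-introduction is handled the same way. We apply the DNF rule to $\operatorname{red}(D_1)\lor T_1$ and $\operatorname{red}(D_2)\lor T_2$, getting $\operatorname{red}(D_1)\lor\operatorname{red}(D_2)\lor(T_1\land T_2)$ minus possibly $T_1$ or $T_2$ when they were also in $D_1$ or $D_2$. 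One weakening then restores these, for at most two lines.
\item The cut rule is the delicate one, since the literals $\ell_1,\ldots,\ell_q$ or the term $\neg\ell_1\land\cdots\land\neg\ell_q$ may also occur in $D_1$ or $D_2$. We apply the DNF cut to the reduced premises, viewing the premises as $\operatorname{red}(D_1)\lor\ell_1\lor\cdots\lor\ell_q$ and similarly for the second. The output $\operatorname{red}(D_1)\lor\operatorname{red}(D_2)$ is missing any $\ell_j$ or negated term that also belonged to $D_1\cup D_2$. A final weakening adds them back, for at most two lines.
\end{itemize}

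Every line produced is a reduction, or a subset of a reduction, of a $k$-mDNF, so it is a $k$-DNF; the intermediate weakening outputs are also subsets of reduced $k$-mDNFs. With at most two new lines per step plus $O(1)$ for the $\top$ axiom, the length is bounded by $3|\pi|$.

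The main obstacle is the bookkeeping in the cut and $\land$-introduction cases. There, the set semantics of DNFs merges a side formula with an active term or literal, so the direct DNF rule may delete a disjunct that the mDNF conclusion keeps. The fix in each case is the single extra weakening step. We must also make sure the DNF cut is applicable when some $\ell_j$ lies in $D_1$; it is, since we may take the side formula to be $\operatorname{red}(D_1)\setminus\{\ell_1,\ldots,\ell_q\}$.
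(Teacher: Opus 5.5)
Your proposal is correct and follows the paper's proof essentially step for step: derive $\operatorname{red}(E_i)$ for each line, note that contraction is invisible after reduction, and simulate each other rule by the corresponding DNF rule (with the duplicated active literals or terms removed from the side formula) plus at most one weakening. The paper saves a line in the duplicated $\land$-elimination and $\land$-introduction cases by weakening the reduced premise directly.

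The one slip is the $\top$ axiom. $\land$-eliminating $\neg x$ from $x\lor\neg x$ gives $x\lor\top$, not $\top$, so a second $\land$-elimination on $x$ is needed. Your $\land$-introduction alternative cannot work, since that rule never produces the empty term. This three-line case is exactly where the constant $3$ comes from.
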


\subsection{Inconsistent states}
\label{subsec:inconsistent-states}

Except for Lemma~\ref{lem:leaf-base}, the proofs of the lemmas
in this subsection are given in
Appendix~\ref{app:inconsistent-states}.

In this subsection, we fix a CNF formula $F$ and a nice tree decomposition
$(T,\chi)$ of the incidence graph $G^*(F)$.

For $t \in V(T)$, let $T_t$ be the subtree rooted at $t$.
Define $\mathrm{Var}_t := \bigcup_{u \in V(T_t)} \chi_v(u)$ and $\mathrm{Cls}_t := \bigcup_{u \in V(T_t)} \chi_c(u)$.
For a node $t$, an assignment $\alpha : \chi_v(t) \to \{0,1\}$, and a set $A \subseteq \chi_c(t)$, let $N(t,\alpha,A)$ be the set of assignments $\tau : \mathrm{Var}_t \to \{0,1\}$ such that $\tau|_{\chi_v(t)}=\alpha$, every clause in $A$ is satisfied by $\tau$, and every clause in $\mathrm{Cls}_t \setminus \chi_c(t)$ is satisfied by $\tau$.
We say that $(t,\alpha,A)$ is \emph{inconsistent} if $N(t,\alpha,A)=\emptyset$, and \emph{consistent} otherwise.

Our definition of $N(t,\alpha,A)$ is inspired by the
\#SAT algorithm of Samer and Szeider~\cite{SamerSzeider2010}.
However, to construct
resolution refutations of small width, we modify their definition.

We first state three auxiliary lemmas about properties of
nice incidence tree decompositions, which will be useful
in our later proofs.

\begin{restatable}{lemma}{JoinVarStructure}\label{lem:join-var-structure}
Let $t$ be a join node with children $t_1,t_2$. Then the following hold.

(1) $\mathrm{Var}_{t_1} \cap \mathrm{Var}_{t_2} = \chi_v(t)$ and $\mathrm{Var}_{t_1} \cup \mathrm{Var}_{t_2} = \mathrm{Var}_t$.

(2) $\mathrm{Cls}_{t_1} \cup \mathrm{Cls}_{t_2} = \mathrm{Cls}_t$. For every clause $c\in \mathrm{Cls}_t\setminus \chi_c(t)$ and each $i\in\{1,2\}$, if $c\in \mathrm{Cls}_{t_i}$ and $c$ is satisfied by $\tau:\mathrm{Var}_t\to\{0,1\}$, then $c$ is satisfied by $\tau|_{\mathrm{Var}_{t_i}}$.
\end{restatable}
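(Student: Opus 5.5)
The whole argument rests on the connectedness axiom of tree decompositions, together with the fact that in a nice tree decomposition a join node has the same bag as both of its children. I will write $V_i:=V(T_{t_i})$ for $i\in\{1,2\}$. Since $t$ has exactly the two children $t_1,t_2$, the vertex set of $T_t$ is $\{t\}\cup V_1\cup V_2$. The union statements are then immediate. Because $\chi(t)=\chi(t_1)$, we get
\[
\mathrm{Var}_t=\chi_v(t)\cup\mathrm{Var}_{t_1}\cup\mathrm{Var}_{t_2}=\mathrm{Var}_{t_1}\cup\mathrm{Var}_{t_2},
\]
and the same computation with clauses gives $\mathrm{Cls}_t=\mathrm{Cls}_{t_1}\cup\mathrm{Cls}_{t_2}$.

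For the intersection in (1), the inclusion $\chi_v(t)\subseteq\mathrm{Var}_{t_1}\cap\mathrm{Var}_{t_2}$ holds because $\chi_v(t)=\chi_v(t_1)=\chi_v(t_2)$. For the reverse inclusion, take a variable $x$ lying in some bag $\chi(u_1)$ with $u_1\in V_1$ and in some bag $\chi(u_2)$ with $u_2\in V_2$. The unique path in $T$ from $u_1$ to $u_2$ passes through $t$. The nodes whose bags contain $x$ induce a connected subtree, so $x\in\chi(t)$, that is, $x\in\chi_v(t)$.

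For the second claim in (2), fix $c\in\mathrm{Cls}_t\setminus\chi_c(t)$ with $c\in\mathrm{Cls}_{t_i}$, and let $\tau:\mathrm{Var}_t\to\{0,1\}$ satisfy $c$ through some literal $\ell$. The main step is to show $\mathrm{var}(c)\subseteq\mathrm{Var}_{t_i}$, because then $\mathrm{var}(\ell)$ lies in the domain of $\tau|_{\mathrm{Var}_{t_i}}$ and this restriction still satisfies $\ell$. Let $x\in\mathrm{var}(c)$. By the edge condition, some node $u$ has $\{x,c\}\subseteq\chi(u)$. The nodes whose bags contain $c$ form a connected subtree that contains a node of $V_i$ but does not contain $t$. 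Since $T_{t_i}$ is separated from the rest of $T$ by the single edge $t_it$, this subtree lies entirely inside $V_i$. Hence $u\in V_i$ and $x\in\mathrm{Var}_{t_i}$.

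I expect no real obstacle here. The only point needing care is this last separation argument, where one has to use that $c\notin\chi(t)$ forces every bag containing $c$ to lie on one side of $t$.
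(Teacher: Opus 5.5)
Your proof is correct and follows essentially the same route as the paper. You obtain the unions from $V(T_t)=\{t\}\cup V(T_{t_1})\cup V(T_{t_2})$ with equal bags, and the intersection from connectivity along the path through $t$. For the second claim in (2), you confine every bag containing $c$ to $T_{t_i}$ to get $\mathrm{var}(c)\subseteq\mathrm{Var}_{t_i}$, exactly as the paper does.
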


\begin{restatable}{lemma}{IntroduceNoX}\label{lem:introduce-no-x}
Let $t$ be an introduce-variable node introducing $x$ with child $t'$. Then every clause $c \in \mathrm{Cls}_{t'} \setminus \chi_c(t')$ contains no literal on $x$.
\end{restatable}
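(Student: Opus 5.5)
The plan is to argue by contradiction using the connectivity property of tree decompositions. Let $t$ be an introduce-variable node introducing $x$, with child $t'$. Suppose some clause $c\in\mathrm{Cls}_{t'}\setminus\chi_c(t')$ contains a literal on $x$. Then $x$ and $c$ are adjacent in $G^*(F)$, so some node $u$ has $\{x,c\}\subseteq\chi(u)$.

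First I locate $u$. Since $c\in\mathrm{Cls}_{t'}$, some node of $T_{t'}$ has $c$ in its bag. Since $c\notin\chi(t')$ and the nodes containing $c$ induce a connected subtree, every node containing $c$ lies strictly inside $T_{t'}$. Indeed, any node outside $T_{t'}$ is joined to a node of $T_{t'}$ only through $t'$, and $c\notin\chi(t')$ blocks this. In particular $u\in V(T_{t'})$, so $x\in\mathrm{Var}_{t'}$.

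Next I show $x$ cannot occur below $t$. Since $t$ introduces $x$, we have $x\in\chi(t)$ but $x\notin\chi(t')$. The nodes whose bags contain $x$ form a connected subtree containing $t$ and $u$, with $u\in V(T_{t'})$. The path from $t$ to $u$ passes through $t'$, so $x\in\chi(t')$, which is a contradiction.

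Hence no such clause exists, and every $c\in\mathrm{Cls}_{t'}\setminus\chi_c(t')$ contains no literal on $x$. The only delicate point is the first step, which shows that the whole subtree $T(c)$ lies inside $T_{t'}$ rather than merely meeting it. Both steps are routine applications of condition~3 in the definition of a tree decomposition, and I do not expect a real obstacle.
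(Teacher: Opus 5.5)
Your proof is correct and follows essentially the same route as the paper's: first use connectivity of the bags containing $c$, together with $c\notin\chi_c(t')$, to place the node covering the edge $\{x,c\}$ inside $T_{t'}$. Then use connectivity of the bags containing $x$ along the path from $t$ through $t'$ to force $x\in\chi_v(t')$, which contradicts that $t$ introduces $x$.
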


\begin{restatable}{lemma}{IntroduceClauseLocal}\label{lem:introduce-clause-local}
Let $t$ be an introduce-clause node introducing $c$, with child $t'$. Then
\[
\mathrm{var}(c)\cap\mathrm{Var}_t=\mathrm{var}(c)\cap\chi_v(t).
\]
Moreover, if an assignment $\tau:\mathrm{Var}_t\to\{0,1\}$ satisfies $c$, then $\tau|_{\chi_v(t)}$ also satisfies $c$.
\end{restatable}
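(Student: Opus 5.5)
We have to prove Lemma~\ref{lem:introduce-clause-local}: for an introduce-clause node $t$ introducing $c$ with child $t'$, we have $\mathrm{var}(c)\cap\mathrm{Var}_t=\mathrm{var}(c)\cap\chi_v(t)$. The inclusion $\supseteq$ is trivial, since $\chi_v(t)\subseteq\mathrm{Var}_t$. For $\subseteq$, we fix $x\in\mathrm{var}(c)\cap\mathrm{Var}_t$ and argue only with the axioms of tree decompositions.

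Since $x\in\mathrm{Var}_t$, some node $u\in V(T_t)$ has $x\in\chi_v(u)$. Since $x\in\mathrm{var}(c)$, the edge $\{x,c\}$ of $G^*(F)$ lies in some bag $\chi(s)$. The key claim is that $c\notin\mathrm{Cls}_{t'}$. Indeed, if $c\in\chi(v)$ for some $v\in V(T_{t'})$, then the connectivity axiom for $c$, applied to $v$ and $t$, puts $c$ in every bag on the tree path between them, and in particular $c\in\chi(t')$. This contradicts $\chi_c(t)=\chi_c(t')\cup\{c\}$ with $c\notin\chi_c(t')$, which is what it means for $t$ to introduce $c$. Hence the subtree $T(c)$ meets $T_t$ only at $t$, so $s$ either equals $t$ or lies outside $T_t$.

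If $s=t$, then $x\in\chi_v(t)$ and we are done. Otherwise $s\notin V(T_t)$ while $u\in V(T_t)$, so the tree path from $u$ to $s$ passes through $t$, since $t$ separates $T_t$ from the rest of $T$. The connectivity axiom for $x$ then gives $x\in\chi_v(t)$. This proves the equality.

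For the ``moreover'' part, suppose $\tau:\mathrm{Var}_t\to\{0,1\}$ satisfies $c$. Then $\tau$ satisfies some literal $\ell\in c$, and by the paper's definition of satisfaction $\mathrm{var}(\ell)\in\mathrm{Var}_t$. By the equality just proved, $\mathrm{var}(\ell)\in\chi_v(t)$. So $\tau|_{\chi_v(t)}$ is defined on $\mathrm{var}(\ell)$, agrees with $\tau$ there, and therefore satisfies $\ell$, hence $c$.

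The only step needing care is the claim that $c$ does not occur in any bag of $T_{t'}$. It follows directly from connectivity together with the definition of an introduce-clause node, so no real obstacle is expected.
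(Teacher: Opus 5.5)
Your proof is correct and follows essentially the same argument as the paper. Both show via connectivity that $c$ occurs in no bag of $T_{t'}$, and then use connectivity for $x$ along the path from a bag of $T_t$ containing $x$ to a bag covering $\{x,c\}$, which must pass through $t$. The only difference is that you argue directly with the case split $s=t$ versus $s\notin V(T_t)$, where the paper argues by contradiction.
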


Lemmas~\ref{lem:leaf-base}, \ref{lem:join-zero-recursion},
\ref{lem:introduce-var-zero}, \ref{lem:introduce-clause-zero},
\ref{lem:forget-var-zero}, and~\ref{lem:forget-clause-zero}
show how to compute the inconsistent states from the leaves
to the root of the nice tree decomposition of the incidence graph.

\begin{lemma}\label{lem:leaf-base}
Let $t$ be a leaf of the nice tree decomposition, and assume $\chi(t)=\emptyset$. Then $(t,\emptyset,\emptyset)$ is consistent. In particular, $t$ has no inconsistent state.
\end{lemma}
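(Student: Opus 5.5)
The plan is to unfold the definitions of $\mathrm{Var}_t$, $\mathrm{Cls}_t$ and $N(t,\alpha,A)$ at a leaf, and check directly that $N(t,\emptyset,\emptyset)$ is nonempty.

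First, since $t$ is a leaf, the subtree $T_t$ consists of $t$ alone. Hence
\[
\mathrm{Var}_t=\chi_v(t)=\emptyset
\qquad\text{and}\qquad
\mathrm{Cls}_t=\chi_c(t)=\emptyset.
\]
The only assignment $\alpha:\chi_v(t)\to\{0,1\}$ is the empty assignment. The only subset $A\subseteq\chi_c(t)$ is $A=\emptyset$. So $(t,\emptyset,\emptyset)$ is the unique state at $t$.

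Next, consider the empty assignment $\tau:\mathrm{Var}_t=\emptyset\to\{0,1\}$. It satisfies $\tau|_{\chi_v(t)}=\alpha$ trivially. The conditions that every clause in $A$ and every clause in $\mathrm{Cls}_t\setminus\chi_c(t)$ be satisfied are vacuous, because both sets are empty. Thus $\tau\in N(t,\emptyset,\emptyset)$, so this set is nonempty and the state is consistent.

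Since it is the only state at $t$, the node $t$ has no inconsistent state. There is no real obstacle here. The only point requiring care is that no clause is required to be satisfied: the empty assignment satisfies no clause, and the argument works only because both constraint sets are empty, which follows from $\chi(t)=\emptyset$.
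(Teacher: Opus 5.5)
Your proof is correct and follows the paper's own argument essentially step for step. At a leaf with empty bag, $\mathrm{Var}_t$ and $\mathrm{Cls}_t$ are empty, so $(t,\emptyset,\emptyset)$ is the only state, and the empty assignment lies in $N(t,\emptyset,\emptyset)$ because all the conditions are vacuous.
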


\begin{proof}
Since $t$ is a leaf and $\chi(t)=\emptyset$, we have $\chi_v(t)=\mathrm{Var}_t=\emptyset$ and $\chi_c(t)=\mathrm{Cls}_t=\emptyset$. Therefore the only state at $t$ is $(t,\emptyset,\emptyset)$. The empty assignment belongs to $N(t,\emptyset,\emptyset)$, since all relevant variable and clause sets are empty. Thus $(t,\emptyset,\emptyset)$ is consistent.
\end{proof}

\begin{restatable}{lemma}{JoinZeroRecursion}\label{lem:join-zero-recursion}
Let $t$ be a join node with children $t_1,t_2$. Let $\alpha:\chi_v(t)\to\{0,1\}$ and $A\subseteq \chi_c(t)$. Then $(t,\alpha,A)$ is inconsistent if and only if for all $A_1,A_2\subseteq \chi_c(t)$ with $A_1\cup A_2=A$, at least one of $(t_1,\alpha,A_1)$ or $(t_2,\alpha,A_2)$ is inconsistent.
\end{restatable}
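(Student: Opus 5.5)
We prove both directions via the contrapositive, working with consistency: $(t,\alpha,A)$ is consistent if and only if there exist $A_1,A_2\subseteq\chi_c(t)$ with $A_1\cup A_2=A$ such that both $(t_1,\alpha,A_1)$ and $(t_2,\alpha,A_2)$ are consistent. Throughout we use Lemma~\ref{lem:join-var-structure}. Since $t$ is a join node, $\chi(t)=\chi(t_1)=\chi(t_2)$, so $\chi_v(t_i)=\chi_v(t)$ and $\chi_c(t_i)=\chi_c(t)$. Hence $\alpha$ and $A_i$ are legitimate data for states at $t_i$.

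\emph{Consistent at $t$ implies a consistent pair at the children.} Take $\tau\in N(t,\alpha,A)$ and set $\tau_i:=\tau|_{\mathrm{Var}_{t_i}}$, which is defined because $\mathrm{Var}_{t_i}\subseteq\mathrm{Var}_t$. Let $A_i$ be the set of clauses $c\in A$ satisfied by $\tau_i$. We must check three things.
\begin{enumerate}
\item Every $c\in A$ lies in $A_1\cup A_2$. Indeed, $\tau$ satisfies $c$ through a literal whose variable lies in $\mathrm{Var}_t=\mathrm{Var}_{t_1}\cup\mathrm{Var}_{t_2}$, so some $\tau_i$ satisfies $c$. Thus $A_1\cup A_2=A$.
\item $\tau_i$ extends $\alpha$, since $\chi_v(t)\subseteq\mathrm{Var}_{t_i}$.
\item Every clause in $\mathrm{Cls}_{t_i}\setminus\chi_c(t_i)$ is satisfied by $\tau_i$. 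Such a clause lies in $\mathrm{Cls}_t\setminus\chi_c(t)$, so $\tau$ satisfies it, and part (2) of Lemma~\ref{lem:join-var-structure} transfers this to $\tau_i$.
\end{enumerate}
Hence $\tau_i\in N(t_i,\alpha,A_i)$, and both child states are consistent.

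\emph{A consistent pair at the children implies consistent at $t$.} Take $\tau_i\in N(t_i,\alpha,A_i)$. Both restrict to $\alpha$ on $\chi_v(t)$, and $\mathrm{Var}_{t_1}\cap\mathrm{Var}_{t_2}=\chi_v(t)$, so they agree on their common domain. Their union $\tau$ is therefore a well-defined assignment on $\mathrm{Var}_t$ extending $\alpha$. Each $c\in A=A_1\cup A_2$ is satisfied by some $\tau_i$, hence by $\tau$. Any $c\in\mathrm{Cls}_t\setminus\chi_c(t)$ lies in some $\mathrm{Cls}_{t_i}$ and not in $\chi_c(t_i)=\chi_c(t)$, so $\tau_i$ satisfies it, and so does $\tau$. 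Thus $\tau\in N(t,\alpha,A)$.

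The only delicate point is the first direction: we must define $A_i$ as the clauses of $A$ satisfied by the restriction, and confirm that every satisfying literal of a clause in $A$ has its variable inside $\mathrm{Var}_{t_1}\cup\mathrm{Var}_{t_2}$. This holds because $\tau$ is defined only on $\mathrm{Var}_t$, so the argument is routine.
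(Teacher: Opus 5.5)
Your proof is correct and is essentially the paper's argument. Both directions match it: the children's witnesses are glued using $\mathrm{Var}_{t_1}\cap\mathrm{Var}_{t_2}=\chi_v(t)$, a witness at $t$ is restricted to each $\mathrm{Var}_{t_i}$ with $A_i$ taken as the clauses of $A$ satisfied by the restriction, and both steps rely on Lemma~\ref{lem:join-var-structure}. The paper phrases each direction as a proof by contradiction, whereas you prove the equivalent consistency statement directly.
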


\begin{restatable}{lemma}{IntroduceVarZero}\label{lem:introduce-var-zero}
Let $t$ be an introduce-variable node introducing $x$ with child $t'$. Let $\alpha:\chi_v(t)\to\{0,1\}$ and $A\subseteq \chi_c(t)$. Let $\ell:=x$ if $\alpha(x)=1$, and let $\ell:=\neg x$ otherwise. Define
\[
A_x:=\{c\in A:\ell\notin c\}.
\]
Then $(t,\alpha,A)$ is inconsistent if and only if $(t',\alpha|_{\chi_v(t')},A_x)$ is inconsistent.
\end{restatable}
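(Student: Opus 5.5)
The plan is to exhibit an explicit bijection between $N(t,\alpha,A)$ and $N(t',\alpha|_{\chi_v(t')},A_x)$, or at least maps in both directions preserving nonemptiness. Write $\alpha':=\alpha|_{\chi_v(t')}$. First record the structural facts. Since $t$ introduces only $x$, we have $\chi_c(t)=\chi_c(t')$, $\mathrm{Cls}_t=\mathrm{Cls}_{t'}$ and $\mathrm{Var}_t=\mathrm{Var}_{t'}\cup\{x\}$. Moreover $x\notin\mathrm{Var}_{t'}$: if $x$ occurred in a bag below $t'$, connectivity of the subtree of bags containing $x$ would force $x\in\chi(t')$, contradicting $\chi_v(t)=\chi_v(t')\cup\{x\}$ with $x$ new. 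Hence $\mathrm{Cls}_t\setminus\chi_c(t)=\mathrm{Cls}_{t'}\setminus\chi_c(t')$, and by Lemma~\ref{lem:introduce-no-x} none of these clauses contains a literal on $x$.

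For the direction ($\Leftarrow$), I prove the contrapositive: if $\tau\in N(t,\alpha,A)$, then $\tau':=\tau|_{\mathrm{Var}_{t'}}$ lies in $N(t',\alpha',A_x)$. Clearly $\tau'$ extends $\alpha'$. Each clause in $\mathrm{Cls}_{t'}\setminus\chi_c(t')$ is satisfied by $\tau$ and has no literal on $x$, so $\tau'$ also satisfies it. For $c\in A_x$, $\tau$ satisfies $c$ but $\ell\notin c$, where $\ell$ is the literal on $x$ made true by $\alpha$. The only literal on $x$ that $c$ could contain is then $\neg\ell$, which $\tau$ falsifies, so the satisfying literal lies on a variable of $\mathrm{Var}_{t'}$ and $\tau'$ satisfies $c$.

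For the direction ($\Rightarrow$), again by contrapositive: given $\tau'\in N(t',\alpha',A_x)$, extend it by $\tau(x):=\alpha(x)$. This is well defined because $x\notin\mathrm{Var}_{t'}$, and $\tau$ extends $\alpha$. The outside clauses remain satisfied. A clause in $A\setminus A_x$ contains $\ell$, which $\tau$ satisfies, and a clause in $A_x$ is satisfied by $\tau'$ and hence by $\tau$. Thus $\tau\in N(t,\alpha,A)$.

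The only delicate point is justifying $x\notin\mathrm{Var}_{t'}$ and the equality of the outside clause sets, together with the use of Lemma~\ref{lem:introduce-no-x}. Once these are in place, the two directions are routine.
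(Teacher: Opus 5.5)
Your proof is correct and follows the paper's argument essentially step for step. You restrict $\tau$ to $\mathrm{Var}_{t'}$ for one direction, extend $\tau'$ by $x\mapsto\alpha(x)$ for the other, and handle the clauses in $\mathrm{Cls}_t\setminus\chi_c(t)=\mathrm{Cls}_{t'}\setminus\chi_c(t')$ with Lemma~\ref{lem:introduce-no-x}. Your explicit check that $x\notin\mathrm{Var}_{t'}$, which the extension step needs in order to be well defined, is a point the paper leaves implicit.
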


\begin{restatable}{lemma}{IntroduceClauseZero}\label{lem:introduce-clause-zero}
Let $t$ be an introduce-clause node introducing $c$ with child $t'$. Let $\alpha:\chi_v(t)\to\{0,1\}$ and $A\subseteq \chi_c(t)$.

Then $(t,\alpha,A)$ is inconsistent if and only if one of the following holds:

(a) $c\notin A$ and $(t',\alpha,A)$ is inconsistent;

(b) $c\in A$, $\alpha$ satisfies $c$, and $(t',\alpha,A\setminus\{c\})$ is inconsistent;

(c) $c\in A$ and $\alpha$ does not satisfy $c$.

\end{restatable}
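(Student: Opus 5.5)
The plan is to unfold the definition of $N(t,\alpha,A)$ at the introduce-clause node $t$ with child $t'$ and compare it with $N(t',\alpha,\cdot)$. The basic facts are $\chi_v(t)=\chi_v(t')$, $\chi_c(t)=\chi_c(t')\cup\{c\}$ and $\mathrm{Cls}_t=\mathrm{Cls}_{t'}\cup\{c\}$, and we will also need $\mathrm{Var}_t=\mathrm{Var}_{t'}$. The last one holds because the bag at $t$ adds no variable, so every variable in a bag of $T_t$ already occurs in a bag of $T_{t'}$ or in $\chi_v(t)=\chi_v(t')$. Moreover $c\notin\mathrm{Cls}_{t'}$: if $c$ appeared in some bag below $t'$, connectivity of $T(c)$ would put $c\in\chi(t')$, which is false. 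From these we get $\mathrm{Cls}_t\setminus\chi_c(t)=\mathrm{Cls}_{t'}\setminus\chi_c(t')$. The candidate assignments $\tau:\mathrm{Var}_t\to\{0,1\}$ extending $\alpha$ are therefore the same at $t$ and $t'$, and so are the forgotten clauses they must satisfy.

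The key step is the following set identity:
\[
N(t,\alpha,A)=\begin{cases} N(t',\alpha,A), & c\notin A,\\ \{\tau\in N(t',\alpha,A\setminus\{c\}):\tau\text{ satisfies }c\}, & c\in A.\end{cases}
\]
Here $A\setminus\{c\}\subseteq\chi_c(t')$, so the right-hand side is well defined. When $c\in A$, Lemma~\ref{lem:introduce-clause-local} says that $\tau$ satisfies $c$ if and only if $\tau|_{\chi_v(t)}=\alpha$ satisfies $c$. The reverse implication is immediate, since $\tau$ extends $\alpha$. Hence the condition on $c$ depends only on $\alpha$. This gives $N(t,\alpha,A)=N(t',\alpha,A\setminus\{c\})$ if $\alpha$ satisfies $c$, and $N(t,\alpha,A)=\emptyset$ otherwise.

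The three cases then follow. If $c\notin A$, then $(t,\alpha,A)$ is inconsistent exactly when $(t',\alpha,A)$ is, which is case~(a). If $c\in A$ and $\alpha$ satisfies $c$, we get case~(b). If $c\in A$ and $\alpha$ falsifies $c$, the state is always inconsistent, which is case~(c). Since the three conditions are mutually exclusive and cover all possibilities, this proves the stated equivalence.

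The only step needing care is the bookkeeping in the first paragraph: that $\mathrm{Var}_t=\mathrm{Var}_{t'}$ and that $c$ does not lie in $\mathrm{Cls}_{t'}$. Both are straightforward consequences of the tree-decomposition axioms and the definition of an introduce-clause node.
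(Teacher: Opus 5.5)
Your proposal is correct and follows essentially the same route as the paper. Both establish $\mathrm{Var}_t=\mathrm{Var}_{t'}$ and $\mathrm{Cls}_t\setminus\chi_c(t)=\mathrm{Cls}_{t'}\setminus\chi_c(t')$, identify $N(t,\alpha,A)$ with $N(t',\alpha,A)$ or $N(t',\alpha,A\setminus\{c\})$, and use Lemma~\ref{lem:introduce-clause-local} to get $N(t,\alpha,A)=\emptyset$ when $\alpha$ falsifies $c$. Your explicit check that $c\notin\mathrm{Cls}_{t'}$ fills in a detail the paper leaves implicit.
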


\begin{restatable}{lemma}{ForgetVarZero}\label{lem:forget-var-zero}
Let $t$ be a forget-variable node with unique child $t'$ forgetting a variable $x$. Let $\alpha:\chi_v(t)\to\{0,1\}$ and $A\subseteq \chi_c(t)$. For $b\in\{0,1\}$, let $\alpha_b:=\alpha\cup\{x\mapsto b\}$.

Then $(t,\alpha,A)$ is inconsistent if and only if both $(t',\alpha_0,A)$ and $(t',\alpha_1,A)$ are inconsistent.
\end{restatable}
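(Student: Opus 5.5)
We prove the two directions separately, using the structure of a forget-variable node: $t$ has child $t'$ with $\chi_v(t')=\chi_v(t)\cup\{x\}$ and $\chi_c(t')=\chi_c(t)$. Since $T_t$ consists of $t$ together with $T_{t'}$, and $\chi(t)\subset\chi(t')$, we have $\mathrm{Var}_t=\mathrm{Var}_{t'}$ and $\mathrm{Cls}_t=\mathrm{Cls}_{t'}$. Because the clause bags coincide, we also get $\mathrm{Cls}_t\setminus\chi_c(t)=\mathrm{Cls}_{t'}\setminus\chi_c(t')$, so both nodes impose the same ``forgotten clause'' constraints. The key observation is then
\[
N(t,\alpha,A)=N(t',\alpha_0,A)\cup N(t',\alpha_1,A).
\]

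To establish this identity, take $\tau\in N(t,\alpha,A)$. It is an assignment $\mathrm{Var}_t=\mathrm{Var}_{t'}\to\{0,1\}$ extending $\alpha$, and since $x\in\chi_v(t')\subseteq\mathrm{Var}_{t'}$, the value $b:=\tau(x)$ is defined. Then $\tau|_{\chi_v(t')}=\alpha_b$. The clauses in $A$ and in $\mathrm{Cls}_{t'}\setminus\chi_c(t')$ are satisfied by $\tau$, because they are exactly the clauses required at $t$. Hence $\tau\in N(t',\alpha_b,A)$.

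Conversely, take $\tau\in N(t',\alpha_b,A)$. It restricts to $\alpha$ on $\chi_v(t)$ and satisfies the same clause sets. Hence $\tau\in N(t,\alpha,A)$.

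From the identity, $N(t,\alpha,A)=\emptyset$ holds if and only if both $N(t',\alpha_0,A)$ and $N(t',\alpha_1,A)$ are empty, which is the claimed equivalence.

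There is no real obstacle here. The only point needing care is verifying the equalities $\mathrm{Var}_t=\mathrm{Var}_{t'}$ and $\mathrm{Cls}_t\setminus\chi_c(t)=\mathrm{Cls}_{t'}\setminus\chi_c(t')$, which follow directly from the definitions of $T_t$ and of a forget-variable node.
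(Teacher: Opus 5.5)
Your proposal is correct and follows essentially the same argument as the paper. The paper likewise notes that $\mathrm{Var}_t=\mathrm{Var}_{t'}$ and $\mathrm{Cls}_t\setminus\chi_c(t)=\mathrm{Cls}_{t'}\setminus\chi_c(t')$, sends $\tau\in N(t,\alpha,A)$ to $N(t',\alpha_{\tau(x)},A)$, and reads each $\tau\in N(t',\alpha_b,A)$ back as an element of $N(t,\alpha,A)$; it phrases the conclusion as an equivalence of nonemptiness rather than as your set identity.
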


\begin{restatable}{lemma}{ForgetClauseZero}\label{lem:forget-clause-zero}
Let $t$ be a forget-clause node with unique child $t'$ forgetting a clause $c$. Let $\alpha:\chi_v(t)\to\{0,1\}$ and $A\subseteq \chi_c(t)$.

Then $(t,\alpha,A)$ is inconsistent if and only if $(t',\alpha,A \cup \{ c\})$ is inconsistent.
\end{restatable}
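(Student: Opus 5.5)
We prove both directions of Lemma~\ref{lem:forget-clause-zero} by comparing the sets $N(t,\alpha,A)$ and $N(t',\alpha,A\cup\{c\})$ directly. In fact they coincide, which gives the equivalence at once.

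First we record the structural facts at a forget-clause node. Since $t$ forgets only the clause $c$, we have $\chi_v(t)=\chi_v(t')$ and $\chi_c(t)=\chi_c(t')\setminus\{c\}$. The subtree $T_t$ consists of $t$ together with $T_{t'}$, and $\chi(t)\subseteq\chi(t')$. Hence $\mathrm{Var}_t=\mathrm{Var}_{t'}$ and $\mathrm{Cls}_t=\mathrm{Cls}_{t'}$, and it follows that
\[
\mathrm{Cls}_t\setminus\chi_c(t)=\bigl(\mathrm{Cls}_{t'}\setminus\chi_c(t')\bigr)\cup\{c\}.
\]
This uses $c\in\chi_c(t')\subseteq\mathrm{Cls}_{t'}$ and $c\notin\chi_c(t)$. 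Also $A\subseteq\chi_c(t)\subseteq\chi_c(t')$ and $c\notin A$, so $A\cup\{c\}\subseteq\chi_c(t')$ and $(t',\alpha,A\cup\{c\})$ is a legitimate state.

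Next we unwind the definitions. An assignment $\tau:\mathrm{Var}_t\to\{0,1\}$ lies in $N(t,\alpha,A)$ exactly when three things hold: $\tau$ extends $\alpha$; $\tau$ satisfies every clause of $A$; and $\tau$ satisfies every clause of $(\mathrm{Cls}_{t'}\setminus\chi_c(t'))\cup\{c\}$. Regrouping, this says that $\tau$ extends $\alpha$, satisfies every clause of $A\cup\{c\}$, and satisfies every clause of $\mathrm{Cls}_{t'}\setminus\chi_c(t')$. That is precisely the condition $\tau\in N(t',\alpha,A\cup\{c\})$, because the domains $\mathrm{Var}_t=\mathrm{Var}_{t'}$ and the restricted sets $\chi_v(t)=\chi_v(t')$ agree. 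Therefore $N(t,\alpha,A)=N(t',\alpha,A\cup\{c\})$, and one set is empty if and only if the other is.

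The only step needing care is the set identity for $\mathrm{Cls}_t\setminus\chi_c(t)$. The key point there is that $c$ actually lies in $\mathrm{Cls}_{t'}$ and is absent from $\chi_c(t)$. Both follow immediately from the definition of a forget node, so we expect no real obstacle.
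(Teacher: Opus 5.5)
Your proposal is correct and follows essentially the same route as the paper: you establish $\chi_v(t)=\chi_v(t')$, $\mathrm{Var}_t=\mathrm{Var}_{t'}$, and $\mathrm{Cls}_t\setminus\chi_c(t)=(\mathrm{Cls}_{t'}\setminus\chi_c(t'))\cup\{c\}$, and then unwind the definition of $N$ to show $N(t,\alpha,A)=N(t',\alpha,A\cup\{c\})$. The paper argues the two inclusions separately, but it is the same argument.
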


\section{General resolution upper bounds}

\subsection{Local derivations}

In this subsection, we show how to represent inconsistent states
by mDNFs and derive the mDNFs representing inconsistent states
at a node from those representing inconsistent states at its
children.

Throughout this subsection, let $F$ be a CNF formula with maximum clause
width $k\geq 1$, and let $(T,\chi,r)$ be a rooted nice tree
decomposition of its incidence graph. Fix a clause-path family
$\mathcal P=(P_c)_{c\in F}$ for $(T,\chi,r)$, and let
$w:=w(T,\chi)$, $w_{\log}:=w_{\log}(T,\chi)$, and
$w_{\mathrm{plog}}:=w_{\mathrm{plog}}(T,\chi,r,\mathcal P)$.

For $c\in\chi_c(t)$, define its \emph{internal part} and
\emph{external part} by
\[
c_t^{\mathrm{int}}
:=
\bigvee_{\substack{\ell\in c\\
\mathrm{var}(\ell)\in\mathrm{Var}_t}}\ell,
\qquad
c_t^{\mathrm{ext}}
:=
\bigvee_{\substack{\ell\in c\\
\mathrm{var}(\ell)\notin\mathrm{Var}_t}}\ell.
\]
Thus $c=c_t^{\mathrm{int}}\lor c_t^{\mathrm{ext}}$.

For an assignment $\alpha:\chi_v(t)\to\{0,1\}$, define
\[
C_\alpha
:=
\bigvee_{x\in\chi_v(t)}
\begin{cases}
\neg x, & \alpha(x)=1,\\
x,      & \alpha(x)=0.
\end{cases}
\]
Thus $C_\alpha$ contains, for each $x\in\chi_v(t)$, the literal on
$x$ falsified by $\alpha$. If $\chi_v(t)=\emptyset$, then
$C_\alpha=\bot$.

For a state $(t,\alpha,A)$ and a set $B\subseteq A$, define the
mDNF
\[
D(t,\alpha,A,B)
:=
C_\alpha
\mlor
(\mathop{\mlor}_{c\in B}c_t^{\mathrm{ext}})
\mlor
(\mathop{\mlor}_{c\in A\setminus B}
\neg c_t^{\mathrm{int}}).
\]
Here $c_t^{\mathrm{ext}}$ is viewed as an mDNF consisting of singleton terms, while $\neg c_t^{\mathrm{int}}$ is viewed as a one-term mDNF.

If $\chi(t)=\emptyset$, then $A \subseteq \chi_c(t)$ must be empty, and $D(t,\alpha,\emptyset,\emptyset)=\bot$.

For an mDNF $D$, let $\operatorname{size}(D)$ denote the number of terms in $D$.

\begin{lemma}\label{lem:D-size-tlp}
Let $F$ be a CNF formula, and let $(T,\chi,r)$ be a rooted tree
decomposition of the incidence graph of $F$.
For every $t\in V(T)$, assignment
$\alpha:\chi_v(t)\to\{0,1\}$, and set
$A\subseteq\chi_c(t)$,
\[
\operatorname{size}(D(t,\alpha,A,\emptyset))
\leq w(T,\chi)+1.
\]
\end{lemma}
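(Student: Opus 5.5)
This follows by counting the terms of $D(t,\alpha,A,\emptyset)$ directly from its definition. With $B=\emptyset$ we have
\[
D(t,\alpha,A,\emptyset)=C_\alpha\mlor\bigmlor_{c\in A}\neg c_t^{\mathrm{int}},
\]
where $C_\alpha$ is an mDNF of singleton terms and each $\neg c_t^{\mathrm{int}}$ is a single term. So I will count terms separately in the two parts.

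First, $C_\alpha$ has exactly one literal, hence one singleton term, for each variable $x\in\chi_v(t)$. It therefore contributes exactly $|\chi_v(t)|$ terms; these have multiplicity one since they lie on distinct variables, although multiplicity is irrelevant for $\operatorname{size}$ anyway. Second, each clause $c\in A$ contributes exactly one term $\neg c_t^{\mathrm{int}}$, counted with multiplicity even if two clauses happen to give the same term. This part thus contributes exactly $|A|$ terms. One small point concerns an empty $c_t^{\mathrm{int}}$, where the term is $\top$. It is still counted as one term, and under the convention for the empty conjunction this only keeps the count at most $|A|$.

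Adding the two counts gives
\[
\operatorname{size}(D(t,\alpha,A,\emptyset))=|\chi_v(t)|+|A|\leq|\chi_v(t)|+|\chi_c(t)|=|\chi(t)|,
\]
using $A\subseteq\chi_c(t)$ and the fact that $\chi_v(t)$ and $\chi_c(t)$ partition $\chi(t)$. Finally, $|\chi(t)|\leq w(T,\chi)+1$ by the definition of width.

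There is no real obstacle here. The only care needed is the bookkeeping of multiset multiplicities and of the degenerate $\top$ term, both of which only keep the bound as an inequality. The argument never uses that the decomposition is nice, which matches the lemma being stated for an arbitrary rooted tree decomposition.
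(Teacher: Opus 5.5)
Your proposal is correct and follows the paper's proof exactly: count $|\chi_v(t)|$ singleton terms from $C_\alpha$ plus one term $\neg c_t^{\mathrm{int}}$ per $c\in A$, then bound $|\chi_v(t)|+|A|\leq|\chi(t)|\leq w(T,\chi)+1$. Your remarks on multiplicities and the degenerate $\top$ term are harmless additions that the paper leaves implicit.
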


\begin{proof}
The mDNF $D(t,\alpha,A,\emptyset)$ contains
$|\chi_v(t)|$ singleton terms from $C_\alpha$ and one term
$\neg c_t^{\mathrm{int}}$ for each $c\in A$. Hence
\[
\operatorname{size}(D(t,\alpha,A,\emptyset))
=
|\chi_v(t)|+|A|
\leq|\chi(t)|
\leq w(T,\chi)+1.
\]
\end{proof}

Lemmas~\ref{lem:derive-join}, \ref{lem:derive-introduce-var},
\ref{lem:derive-introduce-clause}, \ref{lem:derive-forget-var},
and~\ref{lem:derive-forget-clause} show how to derive
the mDNF representing an inconsistent state at a node
from the mDNFs representing inconsistent states at its children.

\begin{lemma}\label{lem:derive-join}
Let $t_0$ be a join node with children $t_1,t_2$, let
$\alpha:\chi_v(t_0)\to\{0,1\}$, and let
$B_0\subseteq A\subseteq\chi_c(t_0)$.
Fix a partition $B_0=B_1\sqcup B_2$.

If $(t_0,\alpha,A)$ is inconsistent, then
$D(t_0,\alpha,A,B_0)$ has a
$k\mathrm{\text{-}mDNF\text{-}Res}$ derivation from the formulas
$D(t_i,\alpha,S,S\cap B_i)$ with $i\in\{1,2\}$ and $S\subseteq A$
such that $(t_i,\alpha,S)$ is inconsistent.

Every mDNF $E$ in the derivation can be written as
\[
E=C_\alpha\mlor L\mlor\bigmlor_{c\in A'}\neg C_c,
\]
where $L$ consists only of singleton terms, $A'\subseteq A$,
and, for each $c\in A'$, there is an index
$i_c\in\{0,1,2\}$ such that $c\notin B_{i_c}$ and
$C_c\subseteq c_{t_{i_c}}^{\mathrm{int}}$.
If $B_0=\emptyset$, then $\operatorname{size}(E)\leq w+1$.

\end{lemma}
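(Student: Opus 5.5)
The plan is to simulate the join recursion of Lemma~\ref{lem:join-zero-recursion}, restricted to partitions of $A$, as a decision tree over the clauses of $A$.

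\textbf{Leaf formulas.} Enumerate $A=\{c_1,\ldots,c_p\}$. For every $S\subseteq A$, apply Lemma~\ref{lem:join-zero-recursion} with $A_1=S$ and $A_2=A\setminus S$. It says that $(t_1,\alpha,S)$ or $(t_2,\alpha,A\setminus S)$ is inconsistent. Hence one of the hypothesis formulas $D(t_1,\alpha,S,S\cap B_1)$ or $D(t_2,\alpha,A\setminus S,(A\setminus S)\cap B_2)$ is available; call it $E_S$. Since $\chi_v(t_i)=\chi_v(t_0)$, the singleton part of $E_S$ is exactly $C_\alpha$.

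\textbf{Combination step.} I will process $c_p,c_{p-1},\ldots,c_1$ in turn. After handling $c_{j+1},\ldots,c_p$, I maintain for each $S'\subseteq\{c_1,\ldots,c_j\}$ a derived mDNF $E_{S'}$ of the shape required in the lemma. In $E_{S'}$, each $c\in S'$ appears only through a side-$1$ piece and each $c\in\{c_1,\ldots,c_j\}\setminus S'$ only through a side-$2$ piece. Each already processed clause appears, if at all, in its final form: either the term $\neg c_{t_0}^{\mathrm{int}}$ if $c\notin B_0$, or singleton literals forming part of $c_{t_0}^{\mathrm{ext}}$ in $L$ if $c\in B_0$.

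To eliminate $c=c_j$, take $P:=E_{S'\cup\{c\}}$ and $Q:=E_{S'}$. If $P$ does not mention the side-$1$ piece of $c$, keep $P$ after weakening, and symmetrically for $Q$. Otherwise there are three cases.
\begin{enumerate}
\item[(i)] If $c\notin B_0$, then $P=P'\mlor\neg c_{t_1}^{\mathrm{int}}$ and $Q=Q'\mlor\neg c_{t_2}^{\mathrm{int}}$. By Lemma~\ref{lem:join-var-structure}(1), $c_{t_0}^{\mathrm{int}}=c_{t_1}^{\mathrm{int}}\lor c_{t_2}^{\mathrm{int}}$. So $\land$-introduction gives $P'\mlor Q'\mlor\neg c_{t_0}^{\mathrm{int}}$; here $i_c=0$.
\item[(ii)] If $c\in B_1$, then $P=P'\mlor c_{t_1}^{\mathrm{ext}}$ and $Q=Q'\mlor\neg c_{t_2}^{\mathrm{int}}$. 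Again by Lemma~\ref{lem:join-var-structure}(1), the literals of $c_{t_1}^{\mathrm{ext}}$ together with the literals of $c$ on $\chi_v(t_0)$ are exactly those of $c_{t_2}^{\mathrm{int}}$ together with those of $c_{t_0}^{\mathrm{ext}}$. So I weaken $P$ by the literals of $c$ on $\chi_v(t_0)$. I then cut the literals of $c_{t_2}^{\mathrm{int}}$ against the term $\neg c_{t_2}^{\mathrm{int}}$ of $Q$, obtaining $P'\mlor Q'\mlor c_{t_0}^{\mathrm{ext}}$.
\item[(iii)] If $c\in B_2$, do the same with the sides swapped.
\end{enumerate}
After each step, the two copies of $C_\alpha$ and all other repeated singleton or term copies are contracted. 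At the end, $j=0$ and the single formula is $D(t_0,\alpha,A,B_0)$, possibly with weakening to add absent pieces.

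\textbf{Invariants and main obstacle.} All intermediate formulas have the claimed form $C_\alpha\mlor L\mlor\bigmlor_{c\in A'}\neg C_c$. Unprocessed clauses contribute terms $\neg c_{t_i}^{\mathrm{int}}$ with $c\notin B_i$, where $i$ is that clause's side (clauses in $B_i$ on side $i$ contribute singletons instead). Processed clauses contribute $\neg c_{t_0}^{\mathrm{int}}$ with $c\notin B_0$. If $B_0=\emptyset$, then $L=\emptyset$ after contraction and there is at most one term per clause, so $\operatorname{size}(E)\leq|\chi_v(t_0)|+|A|\leq w+1$.

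I expect the main obstacle to be the bookkeeping in case (ii). The literals of $c$ on shared variables must be added by weakening so that the cut matches $\neg c_{t_2}^{\mathrm{int}}$ exactly. I must also check that contraction really removes all duplicates, so that every formula is a $k$-mDNF. The number of steps is $O(2^{|A|})$, which suffices.
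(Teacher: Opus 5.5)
Your proposal is correct and is essentially the paper's proof: the same clause-by-clause elimination over the $2^{|A|}$ partitions of $A$, with leaves supplied by Lemma~\ref{lem:join-zero-recursion}, $\land$-introduction when $c\notin B_0$, and a cut when $c\in B_0$. The differences are local. The paper first weakens every leaf to contain all pieces, so both premises of each step share one context $G$ and only $G\mlor G$ is contracted; this replaces your absent-piece case and your blanket contraction, which, read literally, can merge equal literals of different clauses' external parts and let a later cut leave another clause's piece incomplete. Also, for $c\in B_1$ the paper $\land$-eliminates $\neg c_{t_2}^{\mathrm{int}}$ to $\neg H$, with $H$ the literals of $c$ on $\mathrm{Var}_{t_2}\setminus\chi_v(t_0)$, rather than weakening $P$ by the literals of $c$ on $\chi_v(t_0)$; either way the cut must be skipped when its literal set is empty.
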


\begin{proof}
Let $A=\{c_1,\ldots,c_a\}$.
For each $c\in A$ and $i\in\{1,2\}$, let
$L_{c,i}:=c_{t_i}^{\mathrm{ext}}$ if $c\in B_i$, and
$L_{c,i}:=\neg c_{t_i}^{\mathrm{int}}$ otherwise.
Let $Q_c:=c_{t_0}^{\mathrm{ext}}$ if $c\in B_0$, and
$Q_c:=\neg c_{t_0}^{\mathrm{int}}$ otherwise.

We first describe how to derive $G\mlor Q_c$ from
$G\mlor L_{c,1}$ and $G\mlor L_{c,2}$, for any $c\in A$ and
any $k$-mDNF $G$.

If $c\in A\setminus B_0$, then
$Q_c=\neg c_{t_0}^{\mathrm{int}}
=\neg c_{t_1}^{\mathrm{int}}\land\neg c_{t_2}^{\mathrm{int}}
=L_{c,1}\land L_{c,2}$, since
$\mathrm{Var}_{t_0}=\mathrm{Var}_{t_1}\cup\mathrm{Var}_{t_2}$.
An application of the $\land$-introduction rule, followed by
contraction, gives $G\mlor Q_c$.

Suppose instead that $c\in B_1$, so $c\notin B_2$.
Let $H$ be the subclause of $c$ consisting of the literals whose
variables belong to $\mathrm{Var}_{t_2}\setminus\chi_v(t_0)$.
Then $H\subseteq c_{t_2}^{\mathrm{int}}$.
By Lemma~\ref{lem:join-var-structure},
$L_{c,1}=c_{t_1}^{\mathrm{ext}}
=c_{t_0}^{\mathrm{ext}}\mlor H=Q_c\mlor H$.
Since $c\notin B_2$, the other formula is
$G\mlor L_{c,2}=G\mlor\neg c_{t_2}^{\mathrm{int}}$.
If $H=\bot$, then $G\mlor L_{c,1}$ is already $G\mlor Q_c$.
Otherwise, apply the $\land$-elimination rule, if needed, to
$G\mlor L_{c,2}$ to obtain $G\mlor\neg H$.
We apply the cut rule to $G\mlor Q_c\mlor H$ and
$G\mlor\neg H$, and then use the contraction rule to obtain
$G\mlor Q_c$.
The case $c\in B_2$ is symmetric.

In the $\land$-introduction and cut steps above, the contractions
reduce $G\mlor G\mlor Q_c$ to $G\mlor Q_c$.

For every partition $A=A_1\sqcup A_2$, we first derive
\[
C_\alpha
\mlor \bigmlor_{c\in A_1}L_{c,1}
\mlor \bigmlor_{c\in A_2}L_{c,2}.
\]
By Lemma~\ref{lem:join-zero-recursion},
$(t_i,\alpha,A_i)$ is inconsistent for some $i\in\{1,2\}$.
The required formula follows from
$D(t_i,\alpha,A_i,A_i\cap B_i)$ by one application of
the weakening rule, if needed.

We proceed by induction on $h=0,\ldots,a$ to derive
\[
C_\alpha
\mlor \bigmlor_{s=1}^{h}Q_{c_s}
\mlor \bigmlor_{c\in A_1}L_{c,1}
\mlor \bigmlor_{c\in A_2}L_{c,2}
\]
for every partition
$A\setminus\{c_1,\ldots,c_h\}=A_1\sqcup A_2$.
The case $h=0$ is established above.
For $h\geq 1$, fix such a partition.
By the induction hypothesis, the formulas for the partitions
$(A_1\cup\{c_h\},A_2)$ and $(A_1,A_2\cup\{c_h\})$
have already been derived.
They have the form $G\mlor L_{c_h,1}$ and
$G\mlor L_{c_h,2}$ with the same $G$.
The preceding construction gives $G\mlor Q_{c_h}$,
which is the required formula.
For $h=a$, we have $A_1=A_2=\emptyset$, and the resulting
formula is $D(t_0,\alpha,A,B_0)$.

It is easy to check that every mDNF $E$ in the derivation
can be written as
\[
E=C_\alpha\mlor\bigmlor_{c\in A}L_c,
\]
where each $L_c$ is either an empty set or, for some
$i\in\{0,1,2\}$, one of the following:
a single term $\neg C'$ with
$C'\subseteq c_{t_i}^{\mathrm{int}}$ and $c\notin B_i$;
or a clause $c_{t_i}^{\mathrm{ext}}$, viewed as an mDNF consisting
of singleton terms, with $c\in B_i$. Let $A':=\{c\in A:L_c\text{ is a single term $\neg C'$ with
$C'\subseteq c_{t_i}^{\mathrm{int}}$}\}$
and $L:=\bigmlor\limits_{c\in A\setminus A'}L_c$.
Then $E$ has the representation stated in the lemma.
Every term in $E$ is either a singleton term or the negation of a
subclause of a clause in $F$, and therefore contains at most
$k$ literals. Hence the construction gives a
$k\mathrm{\text{-}mDNF\text{-}Res}$ derivation.

For each partition $A=A_1\sqcup A_2$, we use one formula
$D(t_i,\alpha,A_i,A_i\cap B_i)$ with $(t_i,\alpha,A_i)$
inconsistent and at most one application of the weakening rule.
Since there are $2^a$ partitions, these derivations have total
length $O(2^a)$.
For each $h\in\{1,\ldots,a\}$, there are $2^{a-h}$ partitions
$A\setminus\{c_1,\ldots,c_h\}=A_1\sqcup A_2$.
For each partition, the induction step requires at most one
application of the $\land$-introduction rule if $c_h\notin B_0$,
and at most one application each of the $\land$-elimination
rule and the cut rule if $c_h\in B_0$.
Thus the induction step from $h-1$ to $h$ uses at most
$2\cdot 2^{a-h}$ rule applications.
Since $\sum_{h=1}^{a}2^{a-h}=2^a-1$, the total length is
$O(2^a)=O(2^w)$, where
$a\leq|\chi_c(t_0)|\leq w+1$.

If $B_0=\emptyset$, then $B_1=B_2=\emptyset$, so each $L_c$
contains at most one term. Consequently,
$\operatorname{size}(E)\leq|\chi_v(t_0)|+|A|
\leq|\chi(t_0)|\leq w+1$.

\end{proof}

\begin{lemma}\label{lem:derive-introduce-var}
Let $t$ be an introduce-variable node with child $t'$
introducing a variable $x$, let
$\alpha:\chi_v(t)\to\{0,1\}$, and let
$B\subseteq A\subseteq\chi_c(t)$.
Let $\alpha':=\alpha|_{\chi_v(t')}$.

If $(t,\alpha,A)$ is inconsistent, then $D(t,\alpha,A,B)$
has a $k\mathrm{\text{-}mDNF\text{-}Res}$ derivation from
the formulas $D(t',\alpha',S,S\cap B)$ with $S\subseteq A$
such that $(t',\alpha',S)$ is inconsistent.

The derivation has length $O(w)$.
Every mDNF $E$ in the derivation,
other than $D(t',\alpha',S,S\cap B)$ and the axiom, can be written as
\[
E=C_\alpha\mlor L\mlor\bigmlor_{c\in A'}\neg C_c,
\]
where $L$ consists only of singleton terms,
$A'\subseteq A\setminus B$, and, for each $c\in A'$,
$C_c= c_{u_c}^{\mathrm{int}}$ for some
$u_c\in\{t,t'\}$.

If $B=\emptyset$, then every mDNF $E$ in the derivation
satisfies $\operatorname{size}(E)\leq w+1$.
\end{lemma}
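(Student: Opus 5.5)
The plan is to start from a single inconsistent child state and transform it locally. Set $\ell:=x$ if $\alpha(x)=1$ and $\ell:=\neg x$ otherwise, so that $\neg\ell$ is the literal on $x$ in $C_\alpha$, and $C_\alpha=C_{\alpha'}\mlor\neg\ell$. Let $A_x:=\{c\in A:\ell\notin c\}$. By Lemma~\ref{lem:introduce-var-zero}, $(t',\alpha',A_x)$ is inconsistent, so we may use $D':=D(t',\alpha',A_x,A_x\cap B)$ as the only premise besides the axiom $x\mlor\neg x$.

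The first step is to record how the pieces change from $t'$ to $t$. Because $t$ introduces $x$, connectivity of the tree decomposition gives $x\notin\mathrm{Var}_{t'}$. Also $\mathrm{Var}_t=\mathrm{Var}_{t'}\cup\{x\}$. Hence for each $c\in A$:
\begin{itemize}
\item $c_t^{\mathrm{int}}$ equals $c_{t'}^{\mathrm{int}}$ plus the literal of $c$ on $x$, if there is one;
\item $c_{t'}^{\mathrm{ext}}$ equals $c_t^{\mathrm{ext}}$ plus that same literal.
\end{itemize}

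For $c\in A_x$, the literal of $c$ on $x$, if present, is $\neg\ell$, because $\ell\notin c$ and $c$ is not tautological. This splits into three cases.
\begin{itemize}
\item If $c\in A_x\cap B$, then $c_{t'}^{\mathrm{ext}}$ is either $c_t^{\mathrm{ext}}$ or $c_t^{\mathrm{ext}}\mlor\neg\ell$. The extra $\neg\ell$ can be absorbed by contraction into the $\neg\ell$ of $C_\alpha$.
\item If $c\in A_x\setminus B$ and $c$ has no literal on $x$, then $\neg c_t^{\mathrm{int}}=\neg c_{t'}^{\mathrm{int}}$ and nothing changes.
\item If $c\in A_x\setminus B$ and $\neg\ell\in c$, then $\neg c_t^{\mathrm{int}}=\neg c_{t'}^{\mathrm{int}}\land\ell$.
\end{itemize}

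The derivation then proceeds in the following order.
\begin{enumerate}
\item Weaken $D'$ by $\neg\ell$, by the singleton terms of $c_t^{\mathrm{ext}}$ for every $c\in B\setminus A_x$, and by the term $\neg c_t^{\mathrm{int}}$ for every $c\in (A\setminus A_x)\setminus B$. In the same step, contract each extra copy of $\neg\ell$ coming from $c_{t'}^{\mathrm{ext}}$ with $c\in A_x\cap B$.
\item For each $c\in A_x\setminus B$ with $\neg\ell\in c$, in turn, apply $\land$-introduction to $G\mlor\neg c_{t'}^{\mathrm{int}}$ and the axiom $\ell\mlor\neg\ell$. This yields $G\mlor\neg\ell\mlor(\neg c_{t'}^{\mathrm{int}}\land\ell)$, and contracting the new $\neg\ell$ with the one already in $G$ (from $C_\alpha$) gives $G\mlor\neg c_t^{\mathrm{int}}$.
\end{enumerate}
After processing all such clauses we obtain exactly $D(t,\alpha,A,B)$. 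There are at most $|A|\le w+1$ such clauses, each costing $O(1)$ steps, so the length is $O(w)$. Every term is a singleton or the negation of a subclause of a clause of $F$, so all formulas are $k$-mDNFs.

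It remains to verify the structural claims. Every derived formula contains $C_\alpha$ and singleton terms; these singletons come from $c_t^{\mathrm{ext}}$ for $c\in B$, and in the intermediate formulas also the literal $\ell$ of the axiom copy $\ell\mlor\neg\ell$. The non-singleton terms are $\neg C_c$ with $c\in A\setminus B$ and $C_c\in\{c_{t'}^{\mathrm{int}},c_t^{\mathrm{int}}\}$, which is the stated form with $u_c\in\{t,t'\}$.

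For the size bound when $B=\emptyset$, each formula consists of the $|\chi_v(t)|$ singleton terms of $C_\alpha$ and at most one term per $c\in A$, so its size is at most $|\chi(t)|\le w+1$.

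The main obstacle is bookkeeping for the multiplicity of $\neg\ell$. The weakening by $\neg\ell$ has to be done first, so that every later copy of $\neg\ell$ can be contracted away. With this ordering, no formula carries more than one extra literal, which keeps the size bound valid; the premise $D'$ and the axiom are excluded from the size claim as the statement allows.
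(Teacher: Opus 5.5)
Your proposal is correct and follows the paper's proof step for step. It uses the same reduction via Lemma~\ref{lem:introduce-var-zero} to the single premise $D(t',\alpha',A_x,A_x\cap B)$, and the same single weakening step that adds $\neg\ell$, the missing $c_t^{\mathrm{ext}}$ and the missing $\neg c_t^{\mathrm{int}}$ while contracting the surplus copies of $\neg\ell$. It then applies one $\land$-introduction with $\ell\mlor\neg\ell$, followed by contraction, for each $c\in A_x\setminus B$ with $\neg\ell\in c$.

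Three small corrections. First, the size claim for $B=\emptyset$ is not restricted the way the structural claim is; it also covers the premise and the axiom. The paper handles these via Lemma~\ref{lem:D-size-tlp} and the fact that $\ell\mlor\neg\ell$ has size two. Second, the singleton left over by the axiom is $\neg\ell$, not $\ell$. Third, that singleton is contracted within the same derivation step, so the pre-contraction formula, which can have size $w+2$, is never a line of the derivation. This, not your ``at most one extra literal'' remark, is what keeps the bound at $w+1$.
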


\begin{proof}
Let $\ell=x$ if $\alpha(x)=1$, and $\ell=\neg x$ otherwise.
Let $A_x:=\{c\in A:\ell\notin c\}$.
By Lemma~\ref{lem:introduce-var-zero},
$(t',\alpha',A_x)$ is inconsistent. We begin the derivation with
$D(t',\alpha',A_x,A_x\cap B)$.
We have $C_\alpha=C_{\alpha'}\mlor\neg\ell$.

For each $c\in A_x\cap B$, either $c$ contains neither $x$ nor
$\neg x$, in which case
$c_{t'}^{\mathrm{ext}}=c_t^{\mathrm{ext}}$, or
$\neg\ell\in c$, in which case
$c_{t'}^{\mathrm{ext}}=c_t^{\mathrm{ext}}\mlor\neg\ell$.
Starting from $D(t',\alpha',A_x,A_x\cap B)$, apply the
weakening rule to add $\neg\ell$, the clauses
$c_t^{\mathrm{ext}}$ for $c\in B\setminus A_x$, and the terms
$\neg c_t^{\mathrm{int}}$ for
$c\in A\setminus(A_x\cup B)$, then use the contraction rule to remove, for each
$c\in A_x\cap B$ with $\neg\ell\in c$, the singleton term
$\neg\ell$ contained in $c_{t'}^{\mathrm{ext}}$, while retaining
the singleton term $\neg\ell$ in $C_\alpha$.
The resulting formula is
\[
D_0:=C_\alpha
\mlor\bigmlor_{c\in B}c_t^{\mathrm{ext}}
\mlor\bigmlor_{c\in A_x\setminus B}\neg c_{t'}^{\mathrm{int}}
\mlor\bigmlor_{c\in A\setminus(A_x\cup B)}
\neg c_t^{\mathrm{int}}.
\]

It remains to replace $\neg c_{t'}^{\mathrm{int}}$ by
$\neg c_t^{\mathrm{int}}$ for $c\in A_x\setminus B$.
If $c$ contains neither $x$ nor $\neg x$, these terms are equal.
Otherwise, $\neg\ell\in c$, and hence
$\neg c_t^{\mathrm{int}}
=\neg c_{t'}^{\mathrm{int}}\land\ell$.

Suppose the current formula is
$G\mlor\neg c_{t'}^{\mathrm{int}}$.
Applying the $\land$-introduction rule to this formula
and the axiom $\ell\mlor\neg\ell$ gives
\[
G\mlor\neg\ell
\mlor\bigl(\neg c_{t'}^{\mathrm{int}}\land\ell\bigr).
\]
Since $G$ contains the singleton term $\neg\ell$ from
$C_\alpha$, one application of the contraction rule gives
$G\mlor\neg c_t^{\mathrm{int}}$.
Repeating this for every $c\in A_x\setminus B$ with
$\neg\ell\in c$, we obtain $D(t,\alpha,A,B)$.

It is easy to check that every mDNF $E$ in the derivation,
other than $D(t',\alpha',A_x,A_x\cap B)$ and the axiom
$\ell\mlor\neg\ell$, can be written as
\[
E=C_\alpha\mlor\bigmlor_{c\in A}L_c,
\]
where $L_c=c_t^{\mathrm{ext}}$ if $c\in B$, and, if
$c\in A\setminus B$, $L_c$ is a single term
$\neg c_{u_c}^{\mathrm{int}}$ for some $u_c\in\{t,t'\}$.
Taking
$A'=A\setminus B$ and
$L=\bigmlor\limits_{c\in B}c_t^{\mathrm{ext}}$
yields the representation of $E$ stated in the lemma.
Every term in such an $E$ is either a singleton term or the
negation of a subclause of a clause in $F$.

Every term of $D(t',\alpha',A_x,A_x\cap B)$ is also either a
singleton term or the negation of a subclause of a clause in $F$,
while both terms of the axiom $\ell\mlor\neg\ell$ are singleton
terms.
Therefore, every term appearing in the derivation contains at most
$k$ literals, and the construction gives a
$k\mathrm{\text{-}mDNF\text{-}Res}$ derivation.

The construction uses at most one application of the weakening
rule and at most one application of the $\land$-introduction rule
for each $c\in A_x\setminus B$.The axiom $\ell\mlor\neg\ell$ is included at most once
in the derivation.
Since $|A_x\setminus B|\leq |A|\leq w+1$, the derivation has
length $O(w)$.

Suppose that $B=\emptyset$. Then $A_x\cap B=\emptyset$, and
Lemma~\ref{lem:D-size-tlp} gives
$\operatorname{size}(D(t',\alpha',A_x,A_x\cap B))\leq w+1$.
The axiom $\ell\mlor\neg\ell$ has size two.
Every other mDNF in the construction consists
of $C_\alpha$ and one term for each $c\in A$, and hence has size
at most
\[
|\chi_v(t)|+|A|
\leq |\chi(t)|
\leq w+1.
\]
Thus every mDNF in the derivation has size at most $w+1$.
\end{proof}

\begin{lemma}\label{lem:derive-introduce-clause}
Let $t$ be an introduce-clause node with child $t'$ introducing
a clause $c$, let $\alpha:\chi_v(t)\to\{0,1\}$, and let
$B\subseteq A\subseteq\chi_c(t)$.

If $(t,\alpha,A)$ is inconsistent, then $D(t,\alpha,A,B)$ has a
$k\mathrm{\text{-}mDNF\text{-}Res}$ derivation from the formulas
$D(t',\alpha,S,S\cap B)$ with $S\subseteq A\setminus\{c\}$ such
that $(t',\alpha,S)$ is inconsistent.

The derivation has length $O(w)$. Apart from the formulas
$D(t',\alpha,S,S\cap B)$ used in the derivation, the initial
clause $c \in F$, the axioms $x\mlor\neg x$ and $\top$, and the final
formula $D(t,\alpha,A,B)$, every mDNF $E$ in the derivation has the form
\[
E=C_\alpha\mlor L\mlor\bigmlor_{d\in A'}\neg C_d,
\]
where $L$ consists only of singleton terms,
$A'\subseteq A\setminus B$, and
$C_d\subseteq d_t^{\mathrm{int}}$ for every $d\in A'$.
If $B=\emptyset$, then every such mDNF $E$ satisfies
$\operatorname{size}(E)\leq w+1$.
\end{lemma}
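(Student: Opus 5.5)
We split into the three cases of Lemma~\ref{lem:introduce-clause-zero}, since $(t,\alpha,A)$ is inconsistent exactly when one of them holds. The structural fact we rely on throughout is Lemma~\ref{lem:introduce-clause-local}: $\mathrm{var}(c)\cap\mathrm{Var}_t=\mathrm{var}(c)\cap\chi_v(t)$. Hence $c_t^{\mathrm{int}}$ consists exactly of the literals of $c$ on variables of $\chi_v(t)$, and $c_t^{\mathrm{ext}}$ is the rest. For every other clause $d\in\chi_c(t')$ we have $\mathrm{Var}_t=\mathrm{Var}_{t'}$, so $d_t^{\mathrm{int}}=d_{t'}^{\mathrm{int}}$ and $d_t^{\mathrm{ext}}=d_{t'}^{\mathrm{ext}}$. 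Thus the $D$-formulas at $t'$ and $t$ differ only in the part contributed by $c$.

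\textbf{Case (a): $c\notin A$.} Then $(t',\alpha,A)$ is inconsistent and $D(t',\alpha,A,A\cap B)=D(t,\alpha,A,B)$, so nothing needs to be derived.

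\textbf{Case (b): $c\in A$, $\alpha$ satisfies $c$, and $(t',\alpha,A\setminus\{c\})$ is inconsistent.} Weaken $D(t',\alpha,A\setminus\{c\},(A\setminus\{c\})\cap B)$ by the part for $c$. If $c\in B$, add the singleton terms of $c_t^{\mathrm{ext}}$; if $c\notin B$, add the term $\neg c_t^{\mathrm{int}}$. This is one step, and its conclusion is $D(t,\alpha,A,B)$.

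\textbf{Case (c): $c\in A$ and $\alpha$ falsifies $c$.} This is the only case needing work. Here every literal of $c_t^{\mathrm{int}}$ is falsified by $\alpha$, so each such literal occurs as a singleton term of $C_\alpha$.
\begin{itemize}
\item If $c\in B$, start from the initial mDNF $c=c_t^{\mathrm{int}}\mlor c_t^{\mathrm{ext}}$. Weaken by $C_\alpha$ and by the parts for $A\setminus\{c\}$. Each literal of $c_t^{\mathrm{int}}$ now appears twice, once from $c$ and once from $C_\alpha$, so contract these duplicates away. The result is $D(t,\alpha,A,B)$.
\item If $c\notin B$, we must produce the term $\neg c_t^{\mathrm{int}}$. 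Start from the axiom $\top$, which is $\neg\bot$, i.e.\ the empty term. For each literal $\ell_1,\dots,\ell_q$ of $c_t^{\mathrm{int}}$ in turn, apply $\land$-introduction with the axiom $\ell_j\mlor\neg\ell_j$. This builds a term $\neg\ell_1\land\cdots\land\neg\ell_j$ together with singleton terms $\ell_1,\dots,\ell_j$. Each such $\ell_i$ is a literal of $C_\alpha$. After $q\le k$ steps we have $\ell_1\mlor\cdots\mlor\ell_q\mlor\neg c_t^{\mathrm{int}}$. Weakening by $C_\alpha$ and the other parts, and contracting the repeated $\ell_i$, gives $D(t,\alpha,A,B)$.
\end{itemize}
The form claimed for intermediate formulas only needs the weakening to be done first, and it is immediate here.

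\textbf{Length and intermediate forms.} The main obstacle is bounding the length by $O(w)$ rather than $O(k)$ in the last subcase. Since $\mathrm{var}(c_t^{\mathrm{int}})\subseteq\chi_v(t)$, we have $q\le|\chi_v(t)|\le w+1$, so this subcase has length $O(w)$; the other cases take $O(1)$ steps.

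For the claimed form, we apply the initial weakening by $C_\alpha$ and the parts for $A\setminus\{c\}$ before building the term. Every intermediate formula is then $C_\alpha\mlor L\mlor\bigmlor_{d\in A'}\neg C_d$ with singleton part $L$ and $C_d\subseteq d_t^{\mathrm{int}}$ (for $d=c$, the partial conjunction). Every term has at most $k$ literals, since each is a singleton or the negation of a subclause of a clause of $F$.

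\textbf{Size when $B=\emptyset$.} Count the terms of each intermediate formula. There is at most one term per clause of $A$. The singleton part consists of $C_\alpha$ together with the not-yet-contracted duplicates $\ell_i$. If we contract each duplicate immediately after the corresponding $\land$-introduction, the formula has at most $|\chi_v(t)|+|A|\le w+1$ terms after each step.
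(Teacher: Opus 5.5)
Your proposal is correct and follows the paper's proof. It uses the same three-case split via Lemma~\ref{lem:introduce-clause-zero}, a single weakening in cases (a), (b) and in case (c) with $c\in B$, and, when $c\notin B$, iterated $\land$-introduction with the axioms $\ell_j\mlor\neg\ell_j$ where each $\ell_j$ is absorbed into $C_\alpha$ by contraction; the $O(w)$ length comes from $\mathrm{var}(c_t^{\mathrm{int}})\subseteq\chi_v(t)$. The differences are cosmetic: you start uniformly from $\top$ and weaken by $C_\alpha$ and the other parts before building $\neg c_t^{\mathrm{int}}$, whereas the paper starts from $\ell_1\mlor\neg\ell_1$ (using $\top$ only when $c_t^{\mathrm{int}}=\bot$) and builds $C_\alpha\mlor(\neg\ell_1\land\cdots\land\neg\ell_j)$ before weakening. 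Both orderings keep every intermediate formula in the required form and of size at most $w+1$.
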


\begin{proof}
We distinguish the three cases in
Lemma~\ref{lem:introduce-clause-zero}.

\smallskip
\noindent
\emph{Case 1: $c\notin A$.}
By Lemma~\ref{lem:introduce-clause-zero}(a),
$(t',\alpha,A)$ is inconsistent.
Since $t$ introduces only the clause $c$, we have
$\chi_v(t)=\chi_v(t')$ and $\mathrm{Var}_t=\mathrm{Var}_{t'}$.
Thus the internal and external parts of every clause in $A$ are
the same at $t$ and $t'$, and hence
$D(t,\alpha,A,B)=D(t',\alpha,A,B)$.

\smallskip
\noindent
\emph{Case 2: $c\in A$ and $\alpha$ satisfies $c$.}
By Lemma~\ref{lem:introduce-clause-zero}(b),
$(t',\alpha,A\setminus\{c\})$ is inconsistent.
We begin with
$D(t',\alpha,A\setminus\{c\},B\setminus\{c\})$.
For every $d\in A\setminus\{c\}$, the internal and external
parts of $d$ are the same at $t$ and $t'$.
If $c\in B$, add
$c_t^{\mathrm{ext}}$ to $D(t',\alpha,A\setminus\{c\},B\setminus\{c\})$ by applying the weakening rule; if $c\notin B$, add
$\neg c_t^{\mathrm{int}}$ to $D(t',\alpha,A\setminus\{c\},B\setminus\{c\})$ by applying the weakening rule.
In either case, the resulting formula is $D(t,\alpha,A,B)$.

\smallskip
\noindent
\emph{Case 3: $c\in A$ and $\alpha$ does not satisfy $c$.}
By Lemma~\ref{lem:introduce-clause-local}, every variable of
$c_t^{\mathrm{int}}$ belongs to $\chi_v(t)$.
Since $\alpha$ does not satisfy $c$, every literal in $c_t^{\mathrm{int}}$ belongs to
$C_\alpha$.

Suppose first that $c\in B$.
Since every literal in $c_t^{\mathrm{int}}$ belongs to
$C_\alpha$, we begin with the initial clause
$c=c_t^{\mathrm{int}}\mlor c_t^{\mathrm{ext}}$ and apply the
weakening rule, if needed, to add the remaining singleton terms
of $C_\alpha$, the clauses $d_t^{\mathrm{ext}}$ for
$d\in B\setminus\{c\}$, and the terms
$\neg d_t^{\mathrm{int}}$ for $d\in A\setminus B$.
The resulting formula is $D(t,\alpha,A,B)$.

It remains to consider the case $c\notin B$.
We first derive
$C_\alpha\mlor\neg c_t^{\mathrm{int}}$.
Let
$c_t^{\mathrm{int}}=\ell_1\lor\cdots\lor\ell_m$. We have $\ell_1,\cdots,\ell_m \in C_\alpha$ by Lemma~\ref{lem:introduce-clause-local}.

Suppose that $m\geq 1$, and, for $j\in\{1,\ldots,m\}$, let
$T_j:=\neg\ell_1\land\cdots\land\neg\ell_j$. 
Starting with the axiom
$\ell_1\mlor\neg\ell_1$, we can use the weakening rule
to add the singleton terms of $C_\alpha$ other than $\ell_1$ to get $C_\alpha\mlor T_1$.
For each $j\in\{2,\ldots,m\}$, apply the
$\land$-introduction rule to
$C_\alpha\mlor T_{j-1}$ and the axiom
$\ell_j\mlor\neg\ell_j$ to get
$C_\alpha\mlor\ell_j\mlor T_j$ and contract it into
$C_\alpha\mlor T_j$.
After processing $\ell_1,\ldots,\ell_m$, we obtain
$C_\alpha\mlor T_m
=C_\alpha\mlor\neg c_t^{\mathrm{int}}$.

Having derived
$C_\alpha\mlor\neg c_t^{\mathrm{int}}$, apply the weakening rule,
if needed, to add the clauses $d_t^{\mathrm{ext}}$ for $d\in B$
and the terms $\neg d_t^{\mathrm{int}}$ for
$d\in A\setminus(B\cup\{c\})$.
The resulting formula is $D(t,\alpha,A,B)$.

If $m=0$, then $\neg c_t^{\mathrm{int}}=\top$.
We begin with the axiom $\top$ and apply the weakening rule,
if needed, to add $C_\alpha$, the clauses
$d_t^{\mathrm{ext}}$ for $d\in B$, and the terms
$\neg d_t^{\mathrm{int}}$ for
$d\in A\setminus(B\cup\{c\})$.
The resulting formula is $D(t,\alpha,A,B)$.

In Case~1, no rule application
is needed. In Case~2, the derivation uses one formula associated
with an inconsistent state at $t'$ and at most one application of
the weakening rule. In Case~3, if $c\in B$, the derivation uses the
initial clause $c$ and at most one application of the weakening
rule. If $c\notin B$ and $c_t^{\mathrm{int}}=\bot$, it
uses the axiom $\top$ and at most one application of the weakening
rule.

It remains to consider Case~3 when $c\notin B$ and
$c_t^{\mathrm{int}}=\ell_1\lor\cdots\lor\ell_m$ with $m\geq1$.
The construction uses the $m$ axioms
$\ell_j\mlor\neg\ell_j$, at most two applications of the weakening
rule, and $m-1$ applications of the $\land$-introduction rule.
By Lemma~\ref{lem:introduce-clause-local}, the variables occurring
in $c_t^{\mathrm{int}}$ belong to $\chi_v(t)$. Since $c$ also
belongs to $\chi_c(t)$, we have
$m+1\leq|\chi(t)|\leq w+1$, and hence $m\leq w$.
Thus the derivation has length $O(w)$.

Apart from the formulas
$D(t',\alpha,S,S\cap B)$ used in the derivation, the initial
clause $c$, the axioms $x\mlor\neg x$ and $\top$, and the final
formula $D(t,\alpha,A,B)$, the only mDNFs that occur in the derivation are
\[
C_\alpha\mlor
(\neg\ell_1\land\cdots\land\neg\ell_j),
\qquad 1\leq j\leq m.
\]
For each such mDNF, take $A'=\{c\}$,
$C_c=\ell_1\lor\cdots\lor\ell_j$, and let $L$ be the empty mDNF.
Then $C_c\subseteq c_t^{\mathrm{int}}$, so the mDNF has the form
stated in the lemma.

It is easy to check from the construction that every term is
either a singleton term or of the form $\neg C'$ for some
subclause $C'$ of a clause in $F$, where $C'$ may be empty.
Thus every term contains at most $k$ literals, and the construction
gives a $k\mathrm{\text{-}mDNF\text{-}Res}$ derivation.

Suppose that $B=\emptyset$. By
Lemma~\ref{lem:D-size-tlp}, every formula
$D(t',\alpha,S,\emptyset)$ used in the derivation, as well as the formula $D(t,\alpha,A,\emptyset)$, has size at most $w+1$.
The initial clause $c \in F$ is used only when $c\in B$, so it is not
used when $B=\emptyset$.
The axioms used in the derivation have size at most
$2\leq w+1$.

Every remaining mDNF has the form
$C_\alpha\mlor
(\neg\ell_1\land\cdots\land\neg\ell_j)$
for some $j\in\{1,\ldots,m\}$, and therefore has size
\[
|\chi_v(t)|+1\leq|\chi(t)|\leq w+1,
\]
in which the first inequality uses $c\in\chi_c(t)$. Thus every mDNF
in the derivation has size at most $w+1$.
\end{proof}

\begin{lemma}\label{lem:derive-forget-var}
Let $t$ be a forget-variable node with child $t'$ forgetting a
variable $x$, let $\alpha:\chi_v(t)\to\{0,1\}$, and let
$B\subseteq A\subseteq\chi_c(t)$. For $b\in\{0,1\}$, let
$\alpha_b:=\alpha\cup\{x\mapsto b\}$.

If $(t,\alpha,A)$ is inconsistent, then
$D(t,\alpha,A,B)$ has a
$k\mathrm{\text{-}mDNF\text{-}Res}$ derivation from
$D(t',\alpha_0,A,B)$ and $D(t',\alpha_1,A,B)$.
The derivation has length one. If $B=\emptyset$, then every mDNF
in the derivation has size at most $w+1$.
\end{lemma}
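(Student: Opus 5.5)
The plan is to obtain $D(t,\alpha,A,B)$ by a single application of the cut rule to $D(t',\alpha_0,A,B)$ and $D(t',\alpha_1,A,B)$, with $x$ as the cut literal, followed by contractions. By Lemma~\ref{lem:forget-var-zero}, both $(t',\alpha_0,A)$ and $(t',\alpha_1,A)$ are inconsistent, so both formulas are available as premises.

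The first step is to compare the parts of the premises with those of the target. Since $t$ forgets only $x$, we have $\mathrm{Var}_t=\mathrm{Var}_{t'}$ (the variable $x$ still lies in a bag of $T_t$, namely $\chi(t')$) and $\chi_c(t)=\chi_c(t')$. Hence, for every $c\in A$,
\[
c_t^{\mathrm{int}}=c_{t'}^{\mathrm{int}}
\qquad\text{and}\qquad
c_t^{\mathrm{ext}}=c_{t'}^{\mathrm{ext}}.
\]
For the clause parts we have $C_{\alpha_0}=C_\alpha\mlor x$ and $C_{\alpha_1}=C_\alpha\mlor\neg x$. Writing $R:=\bigmlor_{c\in B}c_t^{\mathrm{ext}}\mlor\bigmlor_{c\in A\setminus B}\neg c_t^{\mathrm{int}}$, the two premises are therefore
\[
D(t',\alpha_0,A,B)=C_\alpha\mlor R\mlor x
\qquad\text{and}\qquad
D(t',\alpha_1,A,B)=C_\alpha\mlor R\mlor\neg x.
\]

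The second step is the cut itself, with $q=1$ and $\ell_1=x$. Its premises are $D_1\mlor x$ and $D_2\mlor\neg x$ with $D_1=D_2=C_\alpha\mlor R$, and it yields $C_\alpha\mlor R\mlor C_\alpha\mlor R$. Repeated contraction of duplicated terms then reduces this to $C_\alpha\mlor R=D(t,\alpha,A,B)$. This is legitimate even when $R$ itself contains a term with multiplicity greater than one, for example when several clauses of $B$ share an external literal: we only contract the copies introduced by the duplication, each term having exactly twice its target multiplicity. By the definition of a derivation, this is a single step, so the derivation has length one.

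The last step is to check the side conditions. Every term involved is a singleton or the negation of a subclause of a clause of $F$, so all formulas are $k$-mDNFs. When $B=\emptyset$, the premises and the conclusion have size at most $w+1$ by Lemma~\ref{lem:D-size-tlp}. The only real subtlety is that the intermediate formula $E_i'$ before contraction has size about $2(w+1)$. This is acceptable because, as in the remark following the definition of mRes, only the listed formula $E_i$ counts as a member of the derivation. With that convention, every mDNF in the derivation has size at most $w+1$.
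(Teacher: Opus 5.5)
Your proposal is correct and follows the paper's proof essentially step for step. You invoke Lemma~\ref{lem:forget-var-zero}, use $\mathrm{Var}_t=\mathrm{Var}_{t'}$ and $\chi_c(t)=\chi_c(t')$ to write the premises as $G\mlor x$ and $G\mlor\neg x$ with $G=D(t,\alpha,A,B)$, apply one cut on $x$ and contract $G\mlor G$ to $G$, and take the size bound from Lemma~\ref{lem:D-size-tlp}; your remarks on term multiplicities in $R$ and on the uncontracted intermediate formula are correct clarifications that the paper leaves implicit.
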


\begin{proof}
By Lemma~\ref{lem:forget-var-zero}, both
$(t',\alpha_0,A)$ and $(t',\alpha_1,A)$ are inconsistent.
Since $t$ forgets only $x$, we have
$\chi_c(t)=\chi_c(t')$ and
$\mathrm{Var}_t=\mathrm{Var}_{t'}$. Consequently,
$c_t^{\mathrm{int}}=c_{t'}^{\mathrm{int}}$ and
$c_t^{\mathrm{ext}}=c_{t'}^{\mathrm{ext}}$ for every $c\in A$.

Let
\[
G:=
C_\alpha
\mlor\bigmlor_{c\in B}c_t^{\mathrm{ext}}
\mlor\bigmlor_{c\in A\setminus B}\neg c_t^{\mathrm{int}}
=D(t,\alpha,A,B).
\]
Since $C_{\alpha_0}=C_\alpha\mlor x$ and
$C_{\alpha_1}=C_\alpha\mlor\neg x$, we have
\[
D(t',\alpha_0,A,B)=G\mlor x
\qquad\text{and}\qquad
D(t',\alpha_1,A,B)=G\mlor\neg x.
\]
Applying the cut rule to these two mDNFs and then contracting the
repeated terms in $G\mlor G$ gives $G=D(t,\alpha,A,B)$. Thus the
derivation has length one.

It follows directly from the definition of $D$ that all three
mDNFs are $k$-mDNFs. If $B=\emptyset$, their sizes are at most
$w+1$ by Lemma~\ref{lem:D-size-tlp}.
\end{proof}

\begin{lemma}\label{lem:forget-clause-internal}
Let $t$ be a forget-clause node with child $t'$, and suppose that $t$ forgets the clause $c$. Then $c_{t'}^{\mathrm{int}}=c$.
\end{lemma}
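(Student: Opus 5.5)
To prove Lemma~\ref{lem:forget-clause-internal}, we show that every variable of $c$ lies in $\mathrm{Var}_{t'}$. Then every literal of $c$ belongs to $c_{t'}^{\mathrm{int}}$, and since $c_{t'}^{\mathrm{int}}\subseteq c$ by definition, the two clauses coincide.

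Fix $x\in\mathrm{var}(c)$. Since $x$ and $c$ are adjacent in $G^*(F)$, the edge condition of a tree decomposition gives a node $u$ with $\{x,c\}\subseteq\chi(u)$. We claim that $u$ lies in $T_{t'}$. Because $t$ forgets $c$, we have $c\in\chi(t')$ and $c\notin\chi(t)$. The nodes whose bags contain $c$ induce a connected subtree $T(c)$, which contains $t'$ but not its parent $t$. The edge $\{t,t'\}$ is the only edge joining $T_{t'}$ to the rest of $T$. So any node of $T(c)$ outside $T_{t'}$ would be joined to $t'$ inside $T(c)$ by a path through $t$, which is impossible. Hence $V(T(c))\subseteq V(T_{t'})$.

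In particular $u\in V(T_{t'})$, so $x\in\chi_v(u)\subseteq\mathrm{Var}_{t'}$. Since $x$ was an arbitrary variable of $c$, we get $\mathrm{var}(c)\subseteq\mathrm{Var}_{t'}$, and therefore $c_{t'}^{\mathrm{int}}=c$.

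The only step needing care is the connectivity argument that confines $T(c)$ to $T_{t'}$. It is the standard fact that a vertex forgotten at $t$ never appears above or outside $T_{t'}$.
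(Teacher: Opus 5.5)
Your proposal is correct and follows essentially the same argument as the paper. Both proofs take a bag containing $\{x,c\}$ for each $x\in\mathrm{var}(c)$, use connectivity of the bags containing $c$ together with $c\in\chi(t')\setminus\chi(t)$ to place that bag inside $T_{t'}$, and conclude $\mathrm{var}(c)\subseteq\mathrm{Var}_{t'}$.
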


\begin{proof} Let $x\in \mathrm{var}(c)$. Since $\{x,c\}$ is an edge of $G^*(F)$, some bag contains both $x$ and $c$. The bags containing $c$ form a connected subtree. Since $c\in \chi_c(t')$ but $c\notin \chi_c(t)$, no bag containing $c$ can lie outside $T_{t'}$; otherwise the path to $t'$ would pass through $t$, forcing $c\in \chi_c(t)$. Hence the bag containing both $x$ and $c$ lies in $T_{t'}$, so $x\in \mathrm{Var}_{t'}$. Thus every variable of $c$ lies in $\mathrm{Var}_{t'}$, and therefore $c_{t'}^{\mathrm{int}}=c$. \end{proof}

\begin{lemma}\label{lem:derive-forget-clause}
Let $t$ be a forget-clause node with child $t'$ forgetting a
clause $c$, let $\alpha:\chi_v(t)\to\{0,1\}$, and let
$B\subseteq A\subseteq\chi_c(t)$.

If $(t,\alpha,A)$ is inconsistent, then $(t',\alpha,A\cup\{c\})$ is inconsistent, and $D(t,\alpha,A,B)$ is derivable in
$k\mathrm{\text{-}mDNF\text{-}Res}$ from each of the mDNFs
$D(t',\alpha,A\cup\{c\},B)$ and
$D(t',\alpha,A\cup\{c\},B\cup\{c\})$.
Both derivations have length at most one.
If $B=\emptyset$, then every mDNF in either derivation other than
the initial clause $c$ has size at most $w+1$.
\end{lemma}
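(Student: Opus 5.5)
The first claim follows directly from Lemma~\ref{lem:forget-clause-zero}, which says that $(t,\alpha,A)$ is inconsistent if and only if $(t',\alpha,A\cup\{c\})$ is. For the derivations, I would first record that the node $t$ forgets only a clause. Hence $\chi_v(t)=\chi_v(t')$, so $\alpha$ is also an assignment on $\chi_v(t')$ and $C_\alpha$ is the same clause at $t$ and $t'$. Likewise $\mathrm{Var}_t=\mathrm{Var}_{t'}$, so $d_t^{\mathrm{int}}=d_{t'}^{\mathrm{int}}$ and $d_t^{\mathrm{ext}}=d_{t'}^{\mathrm{ext}}$ for every $d\in A\subseteq\chi_c(t)\subseteq\chi_c(t')$. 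Consequently
\[
D(t',\alpha,A\cup\{c\},B)=D(t,\alpha,A,B)\mlor\neg c_{t'}^{\mathrm{int}},
\qquad
D(t',\alpha,A\cup\{c\},B\cup\{c\})=D(t,\alpha,A,B)\mlor c_{t'}^{\mathrm{ext}}.
\]

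By Lemma~\ref{lem:forget-clause-internal}, $c_{t'}^{\mathrm{int}}=c$, and therefore $c_{t'}^{\mathrm{ext}}=\bot$. In the second case the mDNF $D(t',\alpha,A\cup\{c\},B\cup\{c\})$ is thus literally equal to $D(t,\alpha,A,B)$, so the derivation has length zero. In the first case we have $D(t,\alpha,A,B)\mlor\neg c$. Writing $c=\ell_1\lor\cdots\lor\ell_q$ with $q\geq1$ (clauses are nonempty), I would apply the cut rule to the initial mDNF $\ell_1\mlor\cdots\mlor\ell_q$ (playing the role of $D_1\mlor\ell_1\mlor\cdots\mlor\ell_q$ with $D_1=\bot$) and to $D(t,\alpha,A,B)\mlor(\neg\ell_1\land\cdots\land\neg\ell_q)$. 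This yields $D(t,\alpha,A,B)$ in one step.

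The remaining bookkeeping is straightforward. Every term involved is either a singleton or the negation of a subclause of a clause of $F$, so all formulas are $k$-mDNFs. If $B=\emptyset$, the formulas other than the initial clause $c$ are of the form $D(\cdot,\alpha,\cdot,\emptyset)$, which have size at most $w+1$ by Lemma~\ref{lem:D-size-tlp}. The only subtle point, and the one I would check most carefully, is the identification $c_{t'}^{\mathrm{ext}}=\bot$. It rests on Lemma~\ref{lem:forget-clause-internal} and is exactly what makes the second derivation trivial and the cut in the first derivation consume the whole term $\neg c$.
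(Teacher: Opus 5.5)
Your proposal is correct and matches the paper's proof essentially step for step. The paper likewise invokes Lemma~\ref{lem:forget-clause-zero}, uses $\mathrm{Var}_t=\mathrm{Var}_{t'}$ and Lemma~\ref{lem:forget-clause-internal} to get $D(t',\alpha,A\cup\{c\},B)=D(t,\alpha,A,B)\mlor\neg c$ and $D(t',\alpha,A\cup\{c\},B\cup\{c\})=D(t,\alpha,A,B)$, applies one cut with the initial clause $c$ in the first case and no rule in the second, and takes the size bound from Lemma~\ref{lem:D-size-tlp}; note only that for $B=\emptyset$ the second starting formula has $\{c\}$ as its last argument, so it is covered because it equals $D(t,\alpha,A,\emptyset)$, as your identity shows.
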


\begin{proof}
By Lemma~\ref{lem:forget-clause-zero},
$(t',\alpha,A\cup\{c\})$ is inconsistent. Since $t$ forgets only
$c$, we have $\chi_v(t)=\chi_v(t')$ and
$\mathrm{Var}_t=\mathrm{Var}_{t'}$. Hence the internal and external
parts of every $d\in A$ are the same at $t$ and $t'$.
Moreover, Lemma~\ref{lem:forget-clause-internal} gives
$c_{t'}^{\mathrm{int}}=c$ and
$c_{t'}^{\mathrm{ext}}=c\setminus c_{t'}^{\mathrm{int}}$. Therefore,
\[
D(t',\alpha,A\cup\{c\},B)=D(t,\alpha,A,B)\mlor\neg c,
\]
\[
D(t',\alpha,A\cup\{c\},B\cup\{c\})=D(t,\alpha,A,B).
\]
Applying the cut rule to the first mDNF and the initial clause $c$
gives $D(t,\alpha,A,B)$, while no rule application is needed for
the second mDNF. Thus both derivations have length at most one.
All the mDNFs used are $k$-mDNFs.

Suppose that $B=\emptyset$. The size bounds for
$D(t',\alpha,A\cup\{c\},\emptyset)$ and
$D(t,\alpha,A,\emptyset)$ follow from
Lemma~\ref{lem:D-size-tlp}. The other formula at $t'$ equals
$D(t,\alpha,A,\emptyset)$. Hence every mDNF in either derivation,
other than the initial clause $c$, has size at most $w+1$.
\end{proof}

\subsection{From local derivations to refutations}

In this subsection, we present two constructions of
$k\mathrm{\text{-}mDNF\text{-}Res}$ refutations
(Theorems~\ref{thm:td-to-dnfres-tight}
and~\ref{thm:td-to-dnfres-plog}) and show how to convert
them into resolution refutations
(Theorems~\ref{thm:res-bound} and~\ref{thm:res-bound-plog}). 
The constructions in Theorems~\ref{thm:td-to-dnfres-tight}
and~\ref{thm:res-bound} follow the approach outlined in
Subsection~\ref{subsec:log-construction-overview}.
The constructions in Theorems~\ref{thm:td-to-dnfres-plog}
and~\ref{thm:res-bound-plog} follow the approach
outlined in Subsection~\ref{subsec:plog-construction-overview}. Finally, we use these results to prove
Theorem~\ref{thm:main-bounds}.

\begin{theorem}\label{thm:td-to-dnfres-tight}
Let $F$ be an unsatisfiable CNF formula of maximum clause width
$k$, with $n$ variables and $m$ clauses, and let $(T,\chi)$ be a
nice tree decomposition of $G^*(F)$ of width $w$ such that every
leaf and the root of $T$ have empty bags. Then one can construct from $(T,\chi)$ a
$k\mathrm{\text{-}mDNF\text{-}Res}$ refutation of $F$ of length
$|V(T)|2^{O(w)}$.

Every mDNF in the refutation, except for the initial clauses $c \in F$, has size at most $w+1$. Moreover,
for every mDNF $E$ in the refutation, there exist a node
$t_E\in V(T)$, a set $A_E\subseteq\chi_c(t_E)$, an mDNF $L_E$
consisting only of singleton terms, and, for each $c\in A_E$, a
possibly empty subclause $C_c\subseteq c$ such that
\[
E=L_E\mlor\bigmlor_{c\in A_E}\neg C_c.
\]
\end{theorem}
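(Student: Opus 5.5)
We process the nodes of $T$ bottom-up and, for every node $t$ and every inconsistent state $(t,\alpha,A)$, derive the mDNF $D(t,\alpha,A,\emptyset)$; that is, we take the clause-path family with all $P_c=\emptyset$, so $B=\emptyset$ everywhere. The induction invariant is that after $t$ is processed, $D(t,\alpha,A,\emptyset)$ has been derived for every inconsistent state at $t$.

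At a leaf there is nothing to derive by Lemma~\ref{lem:leaf-base}. At an internal node we invoke the local derivation lemma matching its type, each time with $B=\emptyset$:
\begin{itemize}
\item Lemma~\ref{lem:derive-join} with $B_0=B_1=B_2=\emptyset$ for join nodes;
\item Lemma~\ref{lem:derive-introduce-var} for introduce-variable nodes;
\item Lemma~\ref{lem:derive-introduce-clause} for introduce-clause nodes;
\item Lemma~\ref{lem:derive-forget-var} for forget-variable nodes;
\item Lemma~\ref{lem:derive-forget-clause}, using the premise $D(t',\alpha,A\cup\{c\},\emptyset)$, for forget-clause nodes.
\end{itemize}
Every premise these lemmas require is a formula $D(t_i,\alpha',S,\emptyset)$ for an inconsistent state at a child, which exists by the invariant. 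At the root $r$ we have $\chi(r)=\emptyset$, and $(r,\emptyset,\emptyset)$ is inconsistent because $F$ is unsatisfiable and $\mathrm{Cls}_r=F$, $\mathrm{Var}_r=\mathrm{var}(F)$. Its mDNF $D(r,\emptyset,\emptyset,\emptyset)=\bot$ is the empty mDNF, so the concatenated derivations form a refutation.

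For the length, a node $t$ has at most $2^{|\chi_v(t)|}\cdot 2^{|\chi_c(t)|}\le 2^{w+1}$ states. Each local derivation has length $O(2^w)$ (join) or $O(w)$ (the other node types). The total length is therefore $|V(T)|\cdot 2^{w+1}\cdot O(2^w+w)=|V(T)|2^{O(w)}$. Shared premises are simply reused as earlier lines, and the axioms $x\mlor\neg x$ and $\top$ are included as needed, at most $O(w)$ per local derivation.

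For the size claim, the $B=\emptyset$ clauses of the local lemmas give size at most $w+1$ for every intermediate mDNF other than the initial clauses $c$. The axioms have size at most $2$; if $w=0$ no axiom is ever needed, since every bag then has at most one vertex.

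The structural claim is the main bookkeeping step. Each intermediate mDNF is described by the local lemmas as $C_\alpha\mlor L\mlor\bigmlor_{c\in A'}\neg C_c$, where $C_c$ is contained in some internal part $c^{\mathrm{int}}_{u}$ and hence $C_c\subseteq c$. We take $L_E:=C_\alpha\mlor L$, which consists of singleton terms.

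The choice of $t_E$ needs care, because $A'$ must lie in $\chi_c(t_E)$. In the join, introduce-variable and forget-variable cases, $A'\subseteq A\subseteq\chi_c(t)$ for the current node $t$, so $t_E=t$ works. The formulas $D(t',\ldots)$ coming from children are handled by taking $t_E$ to be that child. The remaining exceptional formulas are handled separately: initial clauses and the axiom $x\mlor\neg x$ take $A_E=\emptyset$ and any node; the axiom $\top$ is $\neg\emptyset$, so it takes $A_E=\{c\}$ with $C_c=\emptyset$ at the introduce-clause node $t$ of $c$, where $c\in\chi_c(t)$.
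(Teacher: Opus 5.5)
Your proposal is correct and follows the paper's proof essentially step for step. Both arguments derive $D(t,\alpha,A,\emptyset)$ bottom-up via the local lemmas with every $B$ empty, bound the length by the $2^{w+1}$ states per node, and read off the size and structural claims from the lemmas' $B=\emptyset$ representations; your explicit choice of $t_E$ for the axiom $\top$ is slightly more careful than the paper's. Two small slips: the intermediate formulas at introduce-clause nodes also need $t_E=t$, and your list names forget-variable nodes (which have no intermediate formulas) instead. Also, the $w=0$ discussion is unnecessary, since $G^*(F)$ has an edge, so $w\ge 1$ and the axioms' size bound $2\le w+1$ holds directly.
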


\begin{proof}
We process the nodes of $T$ in bottom-up order. If $t$ is a leaf,
then there is no inconsistent state at $t$ by
Lemma~\ref{lem:leaf-base}. Now let $t$ be a non-leaf node, and
suppose that, for every child $t_i$ of $t$, all inconsistent states
at $t_i$ have already been computed and
$D(t_i,\beta,S,\emptyset)$ has been derived for every such state
$(t_i,\beta,S)$.

Using Lemmas~\ref{lem:join-zero-recursion},
\ref{lem:introduce-var-zero},
\ref{lem:introduce-clause-zero},
\ref{lem:forget-var-zero}, and
\ref{lem:forget-clause-zero}, we can compute all inconsistent states
$(t,\alpha,A)$ of $t$. For each inconsistent state $(t,\alpha,A)$, we construct a
derivation of $D(t,\alpha,A,\emptyset)$. If $t$ is a join node,
we apply Lemma~\ref{lem:derive-join} with
$B_0=B_1=B_2=\emptyset$. If $t$ is a forget-clause node
forgetting $c$, we apply Lemma~\ref{lem:derive-forget-clause}
using the derivation from
$D(t',\alpha,A\cup\{c\},\emptyset)$. At introduce-variable,
introduce-clause, and forget-variable nodes, we apply
Lemmas~\ref{lem:derive-introduce-var},
\ref{lem:derive-introduce-clause}, and
\ref{lem:derive-forget-var} with $B=\emptyset$.
The mDNFs from child nodes used in these derivations
are all of the form $D(t_i,\beta,S,\emptyset)$, where
$(t_i,\beta,S)$ is inconsistent. Hence they have already been
derived by the induction hypothesis.

Let $r$ be the root of $T$. Since $\chi(r)=\emptyset$, the only
state at $r$ is $(r,\emptyset,\emptyset)$. This state is
inconsistent because $F$ is unsatisfiable. Hence we can
derive
\[
D(r,\emptyset,\emptyset,\emptyset)=\bot,
\]
and therefore get a $k\mathrm{\text{-}mDNF\text{-}Res}$
refutation of $F$.

At a node $t$, the number of states is at most
\[
2^{|\chi_v(t)|+|\chi_c(t)|}
=2^{|\chi(t)|}
\leq2^{w+1}.
\]
By the above derivation lemmas, the derivation of
$D(t,\alpha,A,\emptyset)$ has length $2^{O(w)}$ for every
inconsistent state $(t,\alpha,A)$. Therefore, the total
length of the refutation is
\[
|V(T)|\cdot 2^{w+1}\cdot 2^{O(w)}
=|V(T)|2^{O(w)}.
\]

We next verify that every mDNF in the refutation has the
form stated in the theorem. 
The mDNF representing an inconsistent state $(t,\alpha,A)$ is
\[
D(t,\alpha,A,\emptyset)
=
C_\alpha\mlor\bigmlor_{c\in A}\neg c_t^{\mathrm{int}}.
\]
For this mDNF, the form stated in the theorem is obtained by
taking $t_E=t$, $A_E=A$, $L_E=C_\alpha$, and
$C_c=c_t^{\mathrm{int}}$ for every $c\in A$. The initial clauses of $F$ and the axioms
$x\mlor\neg x$ have the stated form with $A_E=\emptyset$.
The axiom $\top$ can be viewed as the negation of the empty subclause of any
clause in $F$ and therefore also has the stated form.
All other intermediate mDNFs have the stated form by
Lemmas~\ref{lem:derive-join},
\ref{lem:derive-introduce-var}, and
\ref{lem:derive-introduce-clause}.

Since $B=\emptyset$ throughout the construction, by
Lemmas~\ref{lem:derive-join},
\ref{lem:derive-introduce-var},
\ref{lem:derive-introduce-clause},
\ref{lem:derive-forget-var}, and
\ref{lem:derive-forget-clause}, every mDNF in the refutation
has size at most $w+1$, except for the initial clauses of
$F$.
\end{proof}

\begin{theorem}\label{thm:res-bound}
Let $F$ be an unsatisfiable CNF formula of maximum clause width
$k$, and let $(T,\chi)$ be a nice tree decomposition of $G^*(F)$
of width $w$ and log-weighted width $w_{\log}$. Then $F$ has a resolution refutation of
length $k|V(T)|2^{O(w_{\log})}$ and width at most $w+k$.
\end{theorem}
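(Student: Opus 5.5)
Starting from the $k\mathrm{\text{-}mDNF\text{-}Res}$ refutation $\pi=(E_1,\ldots,E_s)$ of Theorem~\ref{thm:td-to-dnfres-tight}, of length $|V(T)|2^{O(w)}$, I will simulate it step by step in $\mathrm{mRes}$ by deriving every multiclause of $\operatorname{Exp}(E_i)$, in the order of $i$. Lemma~\ref{lem:mres-to-res} then turns the result into a resolution refutation with no increase in length or width. Since $\operatorname{Exp}(\bot)=\{\bot\}$, deriving $\operatorname{Exp}(E_s)$ yields a refutation.

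\textbf{Size and width of expansions.} Theorem~\ref{thm:td-to-dnfres-tight} lets me write each non-initial $E$ as $L_E\mlor\bigmlor_{c\in A_E}\neg C_c$ with $C_c\subseteq c$, $A_E\subseteq\chi_c(t_E)$, and $\operatorname{size}(E)\le w+1$. Each multiclause of $\operatorname{Exp}(E)$ therefore has width at most $w+1$. The expansion has at most $\prod_{c\in A_E}\max\{|c|,1\}\le 2^{\sum_{c\in\chi_c(t_E)}\max\{\log|c|,1\}}\le 2^{w_{\log}+1}$ elements. For empty $C_c$ (the term $\top$) the expansion is empty, so such formulas are vacuous. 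Initial clauses are single multiclauses of width at most $k$. I also need $2^{O(w)}\le 2^{O(w_{\log})}$, which holds because $w\le w_{\log}$ by Lemma~\ref{lem:width-comparison}; the case $k=1$ is handled separately, since $w\le w_{\log}$ follows directly from the weights being at least one.

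\textbf{Simulating each rule.} Weakening: every element of $\operatorname{Exp}(D\mlor E)$ extends an element of $\operatorname{Exp}(D)$, so one weakening step suffices. $\land$-elimination: each element of $\operatorname{Exp}(D\mlor T')$ already lies in $\operatorname{Exp}(D\mlor T)$, so no step is needed. $\land$-introduction: an element $M_1\mlor M_2\mlor\ell$ with $\ell\in T_1$ (say) is obtained by weakening $M_1\mlor\ell\in\operatorname{Exp}(D_1\mlor T_1)$ by any $M_2\in\operatorname{Exp}(D_2)$. Contraction: contract the duplicated literal if the same literal was chosen from both copies of $T$. Otherwise, with distinct $\ell,\ell'\in T$, the multiclause $M\mlor\ell$ is the target and is obtained by weakening $M\mlor\ell\mlor\ell$, so a single step again suffices. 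Axioms: $x\mlor\neg x$ is its own expansion, and $\top$ has empty expansion.

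\textbf{The cut rule (main obstacle).} The premises are $D_1\mlor\ell_1\mlor\cdots\mlor\ell_q$ and $D_2\mlor\neg(\ell_1\lor\cdots\lor\ell_q)$. For fixed $M_1\in\operatorname{Exp}(D_1)$ and $M_2\in\operatorname{Exp}(D_2)$, the first premise gives the single multiclause $M_1\mlor\ell_1\mlor\cdots\mlor\ell_q$. The second gives the $q$ multiclauses $M_2\mlor\neg\ell_j$. Resolving successively on $\ell_1,\ldots,\ell_q$ gives $M_1\mlor M_2$ in $q\le k$ steps, with contractions of repeated $M_2$ copies after each step. The intermediate multiclauses have width at most $|M_1|+|M_2|+q$. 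Checking this is the delicate part, and I would use the representation from the local lemmas here. In our construction cuts occur only at forget-variable nodes ($q=1$), at forget-clause nodes where the first premise is the initial clause $c$ and $D_1=\bot$ (width $\le |M_2|+k\le w+k$), and at join nodes. At join nodes the sizes are controlled by $\operatorname{size}(E)\le w+1$, and the width stays at most $w+1+k$; I would aim to tighten this to $w+k$ by noting that $\neg H$ replaces one term. Each cut thus costs $O(k)\cdot 2^{O(w_{\log})}$ steps.

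Summing over all $|V(T)|2^{O(w)}$ steps of $\pi$ gives total length $k|V(T)|2^{O(w_{\log})}$ and width at most $w+k$. The precise width accounting at the cut is the step I expect to need the most care.
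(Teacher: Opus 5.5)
Your route is the paper's: expand each line of the $k$-mDNF-Res refutation from Theorem~\ref{thm:td-to-dnfres-tight}, simulate it rule by rule in $\mathrm{mRes}$, bound $|\operatorname{Exp}(D)|\le 2^{w_{\log}+1}$, and replace each cut by $q\le k$ resolutions. The length bound goes through. The gap is in the width bound $w+k$, which you leave open.

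\textbf{Where cuts occur.} You place cuts at join nodes, get only $w+k+1$ there, and plan to ``tighten'' this without saying how. In fact Theorem~\ref{thm:td-to-dnfres-tight} applies Lemma~\ref{lem:derive-join} with $B_0=B_1=B_2=\emptyset$. Every $c\in A$ then lies in $A\setminus B_0$, so the join derivation uses only weakening and $\land$-introduction. The cut and $\land$-elimination in that lemma are needed only for $c\in B_1\cup B_2$, which is empty here. The introduce-node derivations with $B=\emptyset$ also use no cut. So cuts occur only at forget-variable nodes and at forget-clause nodes, where the cut is against the initial clause $c$. At forget-clause nodes your estimate $\le w+k$ is correct, because $D_0$ has size at most $w$.

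\textbf{Uncontracted lines.} The width bound also depends on a point your plan does not make: formulas before contraction must never appear as lines. At a forget-variable node the cut produces $D_0\mlor D_0$. At a join node $\land$-introduction produces $G\mlor G\mlor Q_c$. Your generic bound $|M_1|+|M_2|+q$, combined with treating contraction as a separate step, would allow lines such as $C_0\mlor C_0$ or $C\mlor C\mlor\ell$, of width about $2w$.

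The fix, as in the paper, has two parts:
\begin{enumerate}
\item For each target $C\in\operatorname{Exp}(E_i)$, choose the source $\widetilde C$ that picks the same literals in the duplicated parts, so that $M_1=M_2$.
\item Absorb the contractions from $\widetilde C$ to $C$ into the final weakening or resolution step, which the definition of an $\mathrm{mRes}$ derivation allows.
\end{enumerate}
Then every line, apart from the intermediates of the forget-clause cuts, is an element of some $\operatorname{Exp}(E_i)$. Its width is $\operatorname{size}(E_i)\le w+1$, or at most $k$ for initial clauses.

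\textbf{Contraction case.} Your case ``with distinct $\ell,\ell'\in T$, \ldots\ obtained by weakening $M\mlor\ell\mlor\ell$'' is wrong as stated, since weakening cannot delete a literal. It is also unnecessary: always take the same literal from both copies of $T$ and contract.
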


\begin{proof}
Let $\pi_{\mathrm{mDNF}}$ be the
$k\mathrm{\text{-}mDNF\text{-}Res}$ refutation given by
Theorem~\ref{thm:td-to-dnfres-tight}. For an mDNF
$D=T_1\mlor\cdots\mlor T_s$, define the expansion of $D$
\[
\operatorname{Exp}(D):=
\left\{
\ell_1\mlor\cdots\mlor\ell_s:
\ell_i\in T_i\text{ for every }i\in\{1,\ldots,s\}
\right\}.
\]
$\operatorname{Exp}(D)$ is a set of multiclauses. We use the
conventions $\operatorname{Exp}(\bot)=\{\bot\}$ and
$\operatorname{Exp}(D)=\emptyset$ if $D$ contains the empty term.

We construct an $\mathrm{mRes}$ derivation
$\pi_{\mathrm{mRes}}$ by processing the mDNFs
$D \in \pi_{\mathrm{mDNF}}$ in order and deriving every multiclause in
$\operatorname{Exp}(D)$. If $c\in F$ is an initial clause, then
$\operatorname{Exp}(c)=\{c\}$. Moreover,
$\operatorname{Exp}(x\mlor\neg x)=\{x\mlor\neg x\}$, and $\operatorname{Exp}(\top)=\emptyset$.

Suppose that $D$ is neither an initial mDNF nor an axiom.
By the definition of a $k\mathrm{\text{-}mDNF\text{-}Res}$ derivation, $D$ is obtained from an
mDNF $\widetilde D$ by zero or more applications of the
contraction rule, where $\widetilde D$ is either an earlier
mDNF in $\pi_{\mathrm{mDNF}}$ or is derived from earlier mDNFs
by one application of a rule other than contraction. For one application of the contraction rule
$D_0\mlor T\mlor T\to D_0\mlor T$, every
$C'\in\operatorname{Exp}(D_0\mlor T)$ has the form
$C_0\mlor\ell$, where
$C_0\in\operatorname{Exp}(D_0)$ and $\ell\in T$, and
$C'$ can be obtained from $C_0\mlor\ell\mlor\ell\in
\operatorname{Exp}(D_0\mlor T\mlor T)$ by applications
of the contraction rule. Repeating this argument for each application of the contraction
rule, we obtain that, for every $C\in\operatorname{Exp}(D)$, there is a
multiclause $\widetilde C\in\operatorname{Exp}(\widetilde D)$
such that $C$ can be obtained from $\widetilde C$ by applications
of the contraction rule. 

Next, we show that $C\in\operatorname{Exp}(D)$ is derivable in at most $k$ steps.
It suffices to show that we can derive $\widetilde C$ in at most $k$ steps.
Indeed, if $\widetilde C$ has not yet been derived, the contractions
from $\widetilde C$ to $C$ can be included in the final step
of its derivation. Otherwise, $\widetilde C$ has already been derived,
and $C$ can be obtained from it in at most one step.

Suppose first that $\widetilde D=D_0\mlor D_1$ is obtained
from $D_0$ by the weakening rule, where $D_1$ is a $k$-mDNF.
Since $\widetilde C\in\operatorname{Exp}(D_0\mlor D_1)$,
let $\widetilde C=C_0\mlor C_1$, where
$C_i\in\operatorname{Exp}(D_i)$ for $i\in\{0,1\}$.
Thus $\widetilde C$ follows from $C_0$ by one application
of the $\mathrm{mRes}$ weakening rule.

Suppose that $\widetilde D=D_0\mlor T'$ is obtained from
$D_0\mlor T$ by the $\land$-elimination rule, where
$T'\subseteq T$. $\widetilde C$ has the form
$C_0\mlor\ell$, where
$C_0\in\operatorname{Exp}(D_0)$ and $\ell\in T'$. Since
$\ell\in T$, $\widetilde C$ already belongs to
$\operatorname{Exp}(D_0\mlor T)$.

Suppose next that
$\widetilde D=D_1\mlor D_2\mlor(T_1\land T_2)$ is obtained from
$D_1\mlor T_1$ and $D_2\mlor T_2$ by the
$\land$-introduction rule. Assume
$\widetilde C=C_1\mlor C_2\mlor\ell$, where
$C_i\in\operatorname{Exp}(D_i)$ for $i\in\{1,2\}$ and
$\ell\in T_1\cup T_2$. If $\ell\in T_1$, then
$C_1\mlor\ell\in\operatorname{Exp}(D_1\mlor T_1)$, and
$\widetilde C$ follows from this multiclause by the weakening rule. The case $\ell\in T_2$ is similar.

Finally, suppose that the cut rule gives
$\widetilde D=D_1\mlor D_2$ from
\[
D_1\mlor\ell_1\mlor\cdots\mlor\ell_q
\qquad\text{and}\qquad
D_2\mlor(\neg\ell_1\land\cdots\land\neg\ell_q).
\]
Let $\widetilde C=C_1\mlor C_2$, where
$C_i\in\operatorname{Exp}(D_i)$ for $i\in\{1,2\}$. The expansions
of the two mDNFs used here contain
$C_1\mlor\ell_1\mlor\cdots\mlor\ell_q$ and
$C_2\mlor\neg\ell_j$ for every $j\in\{1,\ldots,q\}$.
Let $C^{0}:=C_1\mlor\ell_1\mlor\cdots\mlor\ell_q$ and,
for $j\in\{1,\ldots,q\}$, let
\[
C^{j}:=C_1\mlor C_2\mlor
\ell_{j+1}\mlor\cdots\mlor\ell_q.
\]
For each $j\in\{1,\ldots,q\}$, $C^{j}$ is derivable from
$C^{j-1}$ and $C_2\mlor\neg\ell_j$ by one application of
the resolution rule and applications of the
contraction rule. Thus $\widetilde C=C^q$ has a derivation of length at most $q$
from $C^0$ and the multiclauses $C_2\mlor\neg\ell_j$,
$j\in\{1,\ldots,q\}$.
Since
$\neg\ell_1\land\cdots\land\neg\ell_q$ is a term of a
$k$-mDNF, $q\leq k$.

We next show that $|\operatorname{Exp}(D)|\leq 2^{w_{\log}+1}$
for every mDNF $D$ in $\pi_{\mathrm{mDNF}}$.
By Theorem~\ref{thm:td-to-dnfres-tight}, for every mDNF $D$ in
$\pi_{\mathrm{mDNF}}$, there are a node $t_D\in V(T)$, a set
$A_D\subseteq\chi_c(t_D)$, an mDNF $L_D$ consisting only of
singleton terms, and subclauses $C_c\subseteq c$ such that
\[
D=L_D\mlor\bigmlor_{c\in A_D}\neg C_c.
\]
Since every term in $L_D$ contains exactly one literal and
$C_c\subseteq c$ for every $c\in A_D$, we have
\[
\begin{aligned}
|\operatorname{Exp}(D)|
&\leq \prod_{c\in A_D}\max\{|C_c|,1\}\\
&\leq \prod_{c\in\chi_c(t_D)}|c|\\
&\leq
2^{\sum_{c\in\chi_c(t_D)}\max\{\log|c|,1\}}
\leq 2^{w_{\log}+1}.
\end{aligned}
\]
For each $C\in\operatorname{Exp}(D)$, at most $k$ multiclauses
are added to $\pi_{\mathrm{mRes}}$. Contraction does not
contribute to the length. Since
$|\pi_{\mathrm{mDNF}}|\leq |V(T)|2^{O(w)}$ and
$w\leq w_{\log}$,
\[
|\pi_{\mathrm{mRes}}|
\leq
k|\pi_{\mathrm{mDNF}}|2^{w_{\log}+1}
\leq
k|V(T)|2^{O(w_{\log})}.
\]

We next consider the width of $\pi_{\mathrm{mRes}}$.
For every $C\in\operatorname{Exp}(D)$, we have
$\operatorname{width}(C)=\operatorname{size}(D)$.
If $D$ is an initial clause of $F$, this width is at most $k$;
otherwise, it is at most $w+1$ by
Theorem~\ref{thm:td-to-dnfres-tight}.
If $D$ is derived by a rule other than the cut rule, then each
$C\in\operatorname{Exp}(D)$ is obtained in at most one step,
so deriving $C$ introduces no intermediate multiclauses.
It therefore remains to consider the intermediate multiclauses
in the derivations for the cut rule.

In the construction of $\pi_{\mathrm{mDNF}}$, the cut rule is
used only at forget-variable and forget-clause nodes. At a
forget-variable node, the two mDNFs used in the cut have the form
$D_0\mlor x$ and $D_0\mlor\neg x$. For each
$C_0\in\operatorname{Exp}(D_0)$, applying the resolution rule to
$C_0\mlor x$ and $C_0\mlor\neg x$ and then applying the contraction rule
gives $C_0$. All these multiclauses have width at most $w+1$.

At a forget-clause node, write the forgotten clause as
$c=\ell_1\lor\cdots\lor\ell_q$, where $q\leq k$. The cut is
applied to $c$ and $D_0\mlor\neg c$. For each
$C_0\in\operatorname{Exp}(D_0)$, the multiclause obtained after
the $j$th resolution and the subsequent contractions is
\[
C_0\mlor\ell_{j+1}\mlor\cdots\mlor\ell_q.
\]
Its width is at most
\[
\operatorname{width}(C_0)+q-j
\leq (w+1)+(k-1)
=w+k.
\]
Thus $\pi_{\mathrm{mRes}}$ has width at most $w+k$. Since
$\operatorname{Exp}(\bot)=\{\bot\}$, its final multiclause is
empty. By Lemma~\ref{lem:mres-to-res}, we have a resolution refutation
of $F$ of length at most $k|V(T)|2^{O(w_{\log})}$ and width at most $w+k$.
\end{proof}

\begin{theorem}\label{thm:td-to-dnfres-plog}
Let $F$ be an unsatisfiable CNF formula of maximum clause width
$k$, with $n$ variables and $m$ clauses, let $(T,\chi,r)$ be a
nice tree decomposition of $G^*(F)$ of width $w$, and let
$\mathcal P=(P_c)_{c\in F}$ be a clause-path family. Then one can
construct a
$k\mathrm{\text{-}mDNF\text{-}Res}$ refutation of $F$ of length
$|V(T)|2^{O(w)}$ from $(T,\chi,r)$ and $\mathcal P$.

Moreover, for every mDNF $E$ in the refutation other than
the initial clauses and axioms, there exist a node
$t_E\in V(T)$, an assignment
$\alpha_E:\chi_v(t_E)\to\{0,1\}$, a set
$A_E\subseteq\chi_c(t_E)$, an mDNF $L_E$ consisting only of
singleton terms, and, for each $c\in A_E$, a node $t_c$
that is either $t_E$ or a child of $t_E$ and a possibly
empty subclause $C_c\subseteq c_{t_c}^{\mathrm{int}}$ such that
$c\in\chi_c(t_c)$, $t_c\notin P_c$, and
\[
E=C_{\alpha_E}\mlor L_E
\mlor\bigmlor_{c\in A_E}\neg C_c.
\]
\end{theorem}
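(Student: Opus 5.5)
The plan follows the proof of Theorem~\ref{thm:td-to-dnfres-tight}, with $D(t,\alpha,A,\emptyset)$ replaced by $D(t,\alpha,A,B_t(A))$, where
\[
B_t(A):=\{c\in A: t\in P_c\}.
\]
We process the nodes of $T$ bottom-up. By Lemma~\ref{lem:leaf-base}, leaves carry no inconsistent states. For a non-leaf node $t$ we assume, inductively, that $D(t_i,\beta,S,B_{t_i}(S))$ has been derived for every inconsistent state $(t_i,\beta,S)$ at every child $t_i$. The inconsistent states at $t$ are computed with Lemmas~\ref{lem:join-zero-recursion}--\ref{lem:forget-clause-zero}. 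For each of them we then invoke the matching local derivation lemma with suitable choices of $B$, and we check that every child formula it consumes has the form $D(t_i,\beta,S,B_{t_i}(S))$. At the root, $\chi(r)=\emptyset$, so $D(r,\emptyset,\emptyset,\emptyset)=\bot$, and this state is inconsistent because $F$ is unsatisfiable.

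The compatibility checks depend on the fact that each $P_c$ is a path from the root of $T(c)$ down to a leaf of $T(c)$.
\begin{itemize}
\item \emph{Join node $t_0$.} Any $c\in\chi_c(t_0)$ with $t_0\in P_c$ continues into exactly one child. We therefore put $B_i:=\{c\in B_{t_0}(A): t_i\in P_c\}$, which gives a partition $B_{t_0}(A)=B_1\sqcup B_2$. For $S\subseteq A$ we have $S\cap B_i=B_{t_i}(S)$, because a path not passing through $t_0$ cannot contain $t_i$: if $t_i$ lay on $P_c$, so would its parent $t_0$, as $c\in\chi_c(t_0)$. Lemma~\ref{lem:derive-join} then applies directly.
\item \emph{Introduce-variable and forget-variable nodes.} The clause sets of $t$ and $t'$ coincide, and $t\in P_c$ if and only if $t'\in P_c$, since $P_c$ passes through a node of $T(c)$ and must continue to its only child, which is also in $T(c)$. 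So $S\cap B_t(A)=B_{t'}(S)$, and Lemmas~\ref{lem:derive-introduce-var} and~\ref{lem:derive-forget-var} apply.
\item \emph{Introduce-clause node introducing $c$.} Here $t$ is a leaf of $T(c)$. For $d\neq c$ the same reasoning gives $S\cap B=B_{t'}(S)$ with $S\subseteq A\setminus\{c\}$, so Lemma~\ref{lem:derive-introduce-clause} applies.
\item \emph{Forget-clause node forgetting $c$.} Here $t'$ is the root of $T(c)$. If $P_c\neq\emptyset$ then $t'\in P_c$, and we use the second derivation of Lemma~\ref{lem:derive-forget-clause}, from $D(t',\alpha,A\cup\{c\},B\cup\{c\})$. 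Otherwise we use the first derivation, from $D(t',\alpha,A\cup\{c\},B)$ together with the initial clause $c$.
\end{itemize}

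For the length bound, each node has at most $2^{w+1}$ states, and each local derivation has length $O(2^w)$ or $O(w)$. Summing over the nodes gives $|V(T)|2^{O(w)}$. Every term occurring in these derivations is a singleton or the negation of a subclause, so all formulas are $k$-mDNFs.

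The main obstacle is the structural description of the intermediate mDNFs. Each local lemma already supplies a representation $C_\alpha\mlor L\mlor\bigmlor_{c\in A'}\neg C_c$. We set $L_E:=L$ and $t_E:=t$, and we must exhibit for each $c\in A'$ a node $t_c$ with the required properties.
\begin{itemize}
\item \emph{Join.} The lemma gives $i_c\in\{0,1,2\}$ with $c\notin B_{i_c}$ and $C_c\subseteq c^{\mathrm{int}}_{t_{i_c}}$. We set $t_c:=t_{i_c}$. Because $B_{i_c}$ was defined as the set of clauses whose path passes through $t_{i_c}$, we get $t_c\notin P_c$. We still need $c\in\chi_c(t_c)$, which holds because $\chi(t_0)=\chi(t_1)=\chi(t_2)$.
\item \emph{Introduce-variable.} We have $u_c\in\{t,t'\}$ and $c\in A\setminus B$, hence $t\notin P_c$. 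Since the two bags carry the same clauses, the equivalence noted above also gives $t'\notin P_c$.
\item \emph{Introduce-clause.} We take $t_c:=t$, and $A'\subseteq A\setminus B$ gives $t\notin P_c$.
\item \emph{Final formulas.} The formulas $D(t,\alpha,A,B_t(A))$ themselves fit the representation with $A_E:=A\setminus B_t(A)$, $L_E$ the external parts of the clauses in $B_t(A)$, and $t_c:=t$.
\end{itemize}
The remaining formulas are initial clauses and axioms, which the statement excludes. It remains to confirm that in each lemma the node $t_c$ really contains $c$ in its bag.
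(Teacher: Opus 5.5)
Your proposal is correct and follows essentially the same route as the paper. It uses the same choice $B_t(A)=\{c\in A:t\in P_c\}$, the same node-by-node compatibility checks based on $P_c$ being a root-to-leaf path of $T(c)$, and the same choice of $t_c$ in the structural part ($t_{i_c}$ at join nodes, $u_c$ at introduce-variable nodes, and $t$ otherwise). The small loose ends you leave, namely $t\in P_d\Leftrightarrow t'\in P_d$ for $d\in A$ at forget-clause nodes and $c\in\chi_c(t_c)$, follow from the single-child argument you already give, which is exactly how the paper handles them.
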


\begin{proof}
For every $t\in V(T)$ and $A\subseteq\chi_c(t)$, let
$B_t(A):=\{c\in A:t\in P_c\}$.
We process the nodes of $T$ in bottom-up order. If $t$ is a leaf,
then there is no inconsistent state at $t$ by
Lemma~\ref{lem:leaf-base}. Now let $t$ be a non-leaf node, and
suppose that, for every child $t_i$ of $t$, all inconsistent states
at $t_i$ have already been computed and
$D(t_i,\beta,S,B_{t_i}(S))$ has been derived for every inconsistent state
$(t_i,\beta,S)$.

Using Lemmas~\ref{lem:join-zero-recursion},
\ref{lem:introduce-var-zero},
\ref{lem:introduce-clause-zero},
\ref{lem:forget-var-zero}, and
\ref{lem:forget-clause-zero}, we can compute all inconsistent states
$(t,\alpha,A)$ at $t$. For each such state, we construct a
derivation of $D(t,\alpha,A,B_t(A))$.

Suppose first that $t$ is a join node with children $t_1,t_2$.
For every $c\in A$, both children belong to $T(c)$.
Since each nonempty $P_c$ is a path from the root of $T(c)$
to a leaf of $T(c)$, it contains $t$ if and only if it contains
exactly one of $t_1,t_2$. Thus $B_{t_1}(A)$ and $B_{t_2}(A)$
form a partition of $B_t(A)$.
We apply Lemma~\ref{lem:derive-join} with
$B_0=B_t(A)$ and $B_i=B_{t_i}(A)$ for $i\in\{1,2\}$.

Suppose next that $t$ has a unique child $t'$. For every clause
$d\in\chi_c(t)\cap\chi_c(t')$, the node $t'$ is the unique
child of $t$ in $T(d)$, so
$t\in P_d$ if and only if $t'\in P_d$.
Consequently, $S\cap B_t(A)=B_{t'}(S)$ for every
$S\subseteq A\cap\chi_c(t')$.
At introduce-variable, introduce-clause, and forget-variable
nodes, we apply
Lemmas~\ref{lem:derive-introduce-var},
\ref{lem:derive-introduce-clause}, and
\ref{lem:derive-forget-var} with $B=B_t(A)$.

Finally, suppose that $t$ is a forget-clause node forgetting $c$.
The node $t'$ is the root of $T(c)$, and hence
\[
B_{t'}(A\cup\{c\})=
\begin{cases}
B_t(A), & P_c=\emptyset,\\
B_t(A)\cup\{c\}, & P_c\neq\emptyset.
\end{cases}
\]
If $P_c=\emptyset$, Lemma~\ref{lem:derive-forget-clause}
gives a derivation of $D(t,\alpha,A,B_t(A))$ from
$D(t',\alpha,A\cup\{c\},B_t(A))$.
If $P_c\neq\emptyset$, Lemma~\ref{lem:derive-forget-clause} gives a derivation from
$D(t',\alpha,A\cup\{c\},B_t(A)\cup\{c\})$.

The mDNFs from child nodes used in these derivations are all of
the form $D(t_i,\beta,S,B_{t_i}(S))$, where
$(t_i,\beta,S)$ is inconsistent. Hence they have already been
derived by the induction hypothesis.

Since $\chi(r)=\emptyset$, the only state at the root is
$(r,\emptyset,\emptyset)$. This state is inconsistent because
$F$ is unsatisfiable. Hence we derive
$D(r,\emptyset,\emptyset,\emptyset)=\bot$, which gives a
$k\mathrm{\text{-}mDNF\text{-}Res}$ refutation of $F$.

At a node $t$, the number of states is at most
$2^{|\chi_v(t)|+|\chi_c(t)|}=2^{|\chi(t)|}\leq2^{w+1}$.
By Lemmas~\ref{lem:derive-join},
\ref{lem:derive-introduce-var},
\ref{lem:derive-introduce-clause},
\ref{lem:derive-forget-var}, and
\ref{lem:derive-forget-clause}, the derivation of
$D(t,\alpha,A,B_t(A))$ has length $2^{O(w)}$ for every
inconsistent state $(t,\alpha,A)$. Therefore, the total length
of the refutation is
\[
|V(T)|\cdot2^{w+1}\cdot2^{O(w)}
=|V(T)|2^{O(w)}.
\]

We next verify that every mDNF in the refutation other than
the initial clauses and axioms has the form stated in the
theorem. The mDNF representing an inconsistent state
$(t,\alpha,A)$ is
\[
D(t,\alpha,A,B_t(A))
=C_\alpha
\mlor\bigmlor_{c\in B_t(A)}c_t^{\mathrm{ext}}
\mlor\bigmlor_{c\in A\setminus B_t(A)}
\neg c_t^{\mathrm{int}}.
\]
For this mDNF, take $t_E=t$, $\alpha_E=\alpha$,
$A_E=A\setminus B_t(A)$, and
$L_E=\bigmlor\limits_{c\in B_t(A)}c_t^{\mathrm{ext}}$.
For every $c\in A_E$, take $t_c=t$ and
$C_c=c_t^{\mathrm{int}}$.
Since $c\notin B_t(A)$, we have $t_c\notin P_c$.
Then $D(t,\alpha,A,B_t(A))$ has the stated form.

It remains to consider the other intermediate mDNFs in the derivation.

First, let $t_0$ be a join node with children $t_1,t_2$.
By Lemma~\ref{lem:derive-join}, every other intermediate mDNF $E$
in the derivation of $D(t_0,\alpha,A,B_{t_0}(A))$
has the form
\[
E=C_\alpha\mlor L\mlor\bigmlor_{c\in A'}\neg C_c,
\]
where $L$ consists only of singleton terms, $A'\subseteq A$,
and, for each $c\in A'$, there is an index $i_c\in\{0,1,2\}$
such that $C_c\subseteq c_{t_{i_c}}^{\mathrm{int}}$ and
$c\notin B_{t_{i_c}}(A)$.
Take $t_E=t_0$, $\alpha_E=\alpha$, $A_E=A'$, $L_E=L$,
and $t_c=t_{i_c}$.
Then $t_c\notin P_c$, and, since
$c\in A\subseteq\chi_c(t_0)$ and $\chi(t_0)=\chi(t_1)=\chi(t_2)$,
$c\in\chi_c(t_c)$.
Thus $E$ has the form stated in the theorem.

Next, let $t$ be an introduce-variable node with child $t'$.
By Lemma~\ref{lem:derive-introduce-var}, every other intermediate mDNF $E$
in the derivation of $D(t,\alpha,A,B_t(A))$ has the form
\[
E=C_\alpha\mlor L\mlor\bigmlor_{c\in A'}\neg C_c,
\]
where $L$ consists only of singleton terms,
$A'\subseteq A\setminus B_t(A)$, and, for each $c\in A'$,
$C_c=c_{u_c}^{\mathrm{int}}$ for some $u_c\in\{t,t'\}$.
Take $t_E=t$, $\alpha_E=\alpha$, $A_E=A'$, $L_E=L$,
and $t_c=u_c$.
Since $A'\subseteq\chi_c(t)=\chi_c(t')$ and
$\chi_v(t')\subseteq\chi_v(t)$, we have
$c\in\chi_c(t_c)$ for every $c\in A_E$.
Moreover, $t\in P_c$ if and only if $t'\in P_c$, and
$c\notin B_t(A)$, so $t_c\notin P_c$.
Thus $E$ has the form stated in the theorem.

Next, let $t$ be an introduce-clause node.
By Lemma~\ref{lem:derive-introduce-clause}, every other intermediate mDNF $E$
in the derivation of $D(t,\alpha,A,B_t(A))$ has the form
\[
E=C_\alpha\mlor L\mlor\bigmlor_{c\in A'}\neg C_c,
\]
where $L$ consists only of singleton terms,
$A'\subseteq A\setminus B_t(A)$, and
$C_c\subseteq c_t^{\mathrm{int}}$ for every $c\in A'$.
Take $t_E=t$, $\alpha_E=\alpha$, $A_E=A'$, $L_E=L$,
and $t_c=t$ for every $c\in A_E$.
Then $c\in\chi_c(t_c)$ and $t_c\notin P_c$, since
$c\in A\setminus B_t(A)$.
Thus $E$ has the form stated in the theorem.

The derivations at forget-variable and forget-clause nodes contain no other intermediate
mDNFs. This completes the proof.
\end{proof}

\begin{theorem}\label{thm:res-bound-plog}
Let $F$ be an unsatisfiable CNF formula of maximum clause width
$k$, let $(T,\chi,r)$ be a nice tree decomposition of $G^*(F)$,
and let $\mathcal P=(P_c)_{c\in F}$ be a clause-path family.
Let $w_{\mathrm{plog}}:=w_{\mathrm{plog}}(T,\chi,r,\mathcal P)$.
Then $F$ has a resolution refutation of length
$k|V(T)|2^{O(w_{\mathrm{plog}})}$.
\end{theorem}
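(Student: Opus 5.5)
We start from the $k\mathrm{\text{-}mDNF\text{-}Res}$ refutation $\pi_{\mathrm{mDNF}}$ of Theorem~\ref{thm:td-to-dnfres-plog}, which has length $|V(T)|2^{O(w)}$ with $w\leq w_{\mathrm{plog}}$. We convert it to $\mathrm{mRes}$ as in Theorem~\ref{thm:res-bound}, but using a \emph{modified expansion}. Every mDNF $E$ other than initial clauses and axioms has the form $C_{\alpha_E}\mlor L_E\mlor\bigmlor_{c\in A_E}\neg C_c$ with $C_c\subseteq c_{t_c}^{\mathrm{int}}$ and $t_c\notin P_c$. Define $\operatorname{Exp}'(E)$ as the set of multiclauses $C_{\alpha_E}\mlor L_E\mlor\bigmlor_{c\in A_E}M_c$, where $M_c$ ranges over the following choices:
\begin{itemize}
\item the empty multiclause, if $C_c$ contains a literal of $C_{\alpha_E}$ (for $t_c=t_E$), or more generally a literal whose variable lies on $P_c$ and agrees with $C_{\alpha_E}$;
\item a single literal $\neg\ell$ with $\ell\in C_c$ and $\mathrm{var}(\ell)\notin\bigcup_{u\in P_c}\chi_v(u)$.
\end{itemize}
Literals of $C_c$ on path variables that contradict $C_{\alpha_E}$ are dropped, since the corresponding ordinary expansion element is a weakening of a tautology. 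Thus every element of the original $\operatorname{Exp}(E)$ is either a contraction/weakening of an element of $\operatorname{Exp}'(E)$ or a weakening of an axiom $x\mlor\neg x$. The key structural fact to verify first: if $t_c\notin P_c$, $c\in\chi_c(t_c)$, and a variable $x$ of $c_{t_c}^{\mathrm{int}}$ lies in a bag on $P_c$, then $x\in\chi_v(t_c)$. This follows by connectivity, because the path from $t_c$'s subtree to $P_c$ passes through $t_c$ or its ancestor in $T(c)$. When $t_c$ is a child of $t_E$ (join nodes), $\chi_v(t_c)=\chi_v(t_E)$, so $C_{\alpha_E}$ still contains a literal on $x$.

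\textbf{Size bound.} We get $|\operatorname{Exp}'(E)|\leq\prod_{c\in A_E}(\rho_{\mathcal P}(c)+1)$, where $c\in\chi_c(t_c)$ with $t_c\notin P_c$, so the weight of $c$ in $\chi(t_c)$ is $\max\{\log(\rho_{\mathcal P}(c)+1),1\}$. Since $\chi(t_c)=\chi(t_E)$ or $\chi(t_c)\supseteq$ the relevant clauses of $\chi_c(t_E)$ at the node types involved (join: equal bags; introduce-variable: same clause set), we have $A_E\subseteq\chi_c(t_E)$ with each $c$ weighted as at some node. Hence $\prod\leq 2^{w_{\mathrm{plog}}+1}$, after checking that all $c\in A_E$ can be evaluated simultaneously in a single bag $\chi(t_E)$ or $\chi(t_c)$ with identical clause parts. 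For initial clauses and axioms, $\operatorname{Exp}'$ is the plain expansion, of size one.

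\textbf{Step simulation.} As in Theorem~\ref{thm:res-bound}, we go rule by rule. Weakening, contraction, $\land$-elimination, and $\land$-introduction each give each element of $\operatorname{Exp}'(\widetilde D)$ from an element of a premise's modified expansion by one weakening, or from an axiom. The subtle case is $\land$-elimination $T\to T'$ and the change of $t_c$ (e.g.\ $c_{t'}^{\mathrm{int}}\to c_t^{\mathrm{int}}$, or $\neg c_{t_i}^{\mathrm{int}}$ becoming $\neg c_{t_0}^{\mathrm{int}}$). Here we must check that a choice ``empty'' or ``literal $\neg\ell$'' for $T'$ is also available for $T$; this holds because $C_{\alpha}$ is unchanged or only grows. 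Cuts occur at forget-variable nodes (resolve on $x$, fine), at forget-clause nodes with $c$ on a term $\neg c$ where $t'\notin P_c$, and at join nodes with $c\in B_1$, cutting $Q_c\mlor H$ against $\neg H$. For a cut against $\neg(\ell_1\land\dots)$, we resolve successively against those $\ell_j$ for which the modified expansion of the negative premise contains $C_2\mlor\neg\ell_j$. For the $\ell_j$ where the choice was ``empty'' (its variable is in $C_\alpha$ with the matching literal), the literal $\ell_j$ in the positive premise coincides with a literal of $C_\alpha$ and is removed by contraction instead. For the $\ell_j$ with conflicting polarity, the target multiclause contains $\ell_j$'s complement via $C_\alpha$, so resolving against the axiom-based weakening handles it. This costs at most $q\leq k$ steps per element.

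\textbf{Conclusion and main obstacle.} The total length is $|\pi_{\mathrm{mDNF}}|\cdot 2^{w_{\mathrm{plog}}+1}\cdot O(k)=k|V(T)|2^{O(w_{\mathrm{plog}})}$, and Lemma~\ref{lem:mres-to-res} finishes the proof. The main obstacle is the join-node cut, where $H$ consists of literals on variables of $\mathrm{Var}_{t_2}\setminus\chi_v(t_0)$. Since $c\in B_2$ fails but $c\in B_1$, $P_c$ passes through $t_1$, so variables of $H$ lie off $P_c$ only if they are not in $\chi_v(t_0)$. By the key fact applied at $t_2$, those variables are exactly the non-path ones. So the modified expansion of $G\mlor\neg H$ offers each $\neg\ell_j$ individually, and the resolution chain goes through. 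I would verify this case carefully first.
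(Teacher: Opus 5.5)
Your architecture matches the paper's: a reduced expansion in which each term $\neg C_c$ contributes nothing or one literal on a variable off $P_c$, and the connectivity claim that a variable of $C_c$ lying in a bag of $P_c$ is already in $\chi_v(t_E)$. The count $\prod_{c\in A_E}(\rho_{\mathcal P}(c)+1)$ is also the same.

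\textbf{The size bound fails at join nodes.} You turn the product into $2^{w_{\mathrm{plog}}+1}$ by evaluating all of $A_E$ in a single bag, and that step fails. Take a join node $t_0$ with children $t_1,t_2$ and a partition with $c_1\in A_1$, $c_2\in A_2$, $t_2\in P_{c_1}$ and $t_1\in P_{c_2}$. For this partition, Lemma~\ref{lem:derive-join} produces the mDNF $C_\alpha\mlor\neg (c_1)_{t_1}^{\mathrm{int}}\mlor\neg (c_2)_{t_2}^{\mathrm{int}}\mlor\cdots$. Here $c_1$ has weight $1$ at $t_0$ and $t_2$, and $c_2$ has weight $1$ at $t_0$ and $t_1$. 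So no bag is charged $\log(\rho_{\mathcal P}(c_1)+1)+\log(\rho_{\mathcal P}(c_2)+1)$, and the product can be close to $2^{2w_{\mathrm{plog}}}$, well above $2^{w_{\mathrm{plog}}+1}$.

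The fix, which is what the paper does, is to charge each $c\in A_E$ to its own node $t_c$. Since $c\in\chi_c(t_c)$ and $t_c\notin P_c$, we have $\rho_{\mathcal P}(c)+1\le 2^{\omega_{\mathcal P}(t_c,c)}$. Summing over the at most three bags of $t_E$ and its children gives $2^{3(w_{\mathrm{plog}}+1)}$, which still suffices.

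\textbf{Your polarities are reversed.} An expansion element picks $\neg\ell$ from the term $\neg C_c$, where $\ell\in C_c$. That literal can be contracted away (your ``empty'' choice) when $\neg\ell\in C_{\alpha_E}$. The element is tautological when $\ell\in C_{\alpha_E}$. Taken literally, your rule ``empty if $C_c$ contains a literal of $C_{\alpha_E}$'' is unsound: for $\ell\in C_\alpha$ the mDNF $C_\alpha\mlor\neg\ell$ is a tautology, but $C_\alpha$ does not follow from it.

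The same reversal runs through your treatment of the cuts. With the signs corrected, an $\ell_j$ with $\neg\ell_j\in C_\alpha$ means the negative premise's reduced expansion already contains the target (this happens at forget-clause nodes). An $\ell_j\in C_\alpha$ is absorbed by contraction or by a weakened axiom.

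\textbf{How your simulation differs from the paper's.} With these two corrections your plan can be completed, but you re-verify every rule directly on reduced expansions. That is doable but only sketched. It must track that $C_{\alpha_E}$ and the nodes $t_c$ change between premises and conclusion; for example, at forget-variable cuts the premises carry the larger $C_{\alpha_b}$, so ``$C_\alpha$ is unchanged or only grows'' is not a uniform justification.

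The paper avoids this case analysis. It shows two things:
\begin{itemize}
\item each element of $\operatorname{Exp}_{\mathrm{red}}(D)$ is a contraction of an element of $\operatorname{Exp}(D)$;
\item each non-tautological element of $\operatorname{Exp}(D)$ is one weakening away from an element of $\operatorname{Exp}_{\mathrm{red}}(D)$.
\end{itemize}
It then reuses the conversion of Theorem~\ref{thm:res-bound} as a black box, re-deriving its at most $k+1$ full-expansion premises from reduced expansions or axioms in at most two steps each. This gives $O(k)$ steps per element uniformly. It also shows that the join-node cut you single out is not where the difficulty lies; the join node matters for the size bound instead.
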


\begin{proof}
Let $w:=w(T,\chi)$, and let $\pi_{\mathrm{mDNF}}$ be the
$k\mathrm{\text{-}mDNF\text{-}Res}$ refutation given by
Theorem~\ref{thm:td-to-dnfres-plog}.
For every mDNF $D$ in this refutation other than the initial
clauses and axioms, fix an expression
\[
D=C_{\alpha_D}\mlor L_D
\mlor\bigmlor_{c\in A_D}\neg C_c
\]
satisfying the conditions in
Theorem~\ref{thm:td-to-dnfres-plog}. Let
\[
A_D^0:=
\{c\in A_D:\ell\notin C_{\alpha_D}
\text{ for every }\ell\in\neg C_c\}.
\]
Define
\[
\begin{aligned}
\operatorname{Exp}_{\mathrm{red}}(D):=
\bigl\{E={}&C_{\alpha_D}\mlor L_D
\mlor\bigmlor_{c\in A'}\ell_c :
A_D^0\subseteq A'\subseteq A_D,\\
&\ell_c\in\neg C_c,\quad
\ell_c,\neg\ell_c\notin C_{\alpha_D}
\text{ for every }c\in A'\bigr\}.
\end{aligned}
\]
For every initial clause $c\in F$, define
$\operatorname{Exp}_{\mathrm{red}}(c):=\{c\}$.
For the axioms, define
$\operatorname{Exp}_{\mathrm{red}}(x\mlor\neg x):=\{x\mlor\neg x\}$
and $\operatorname{Exp}_{\mathrm{red}}(\top):=\emptyset$.

We first show that
$|\operatorname{Exp}_{\mathrm{red}}(D)|
\leq2^{3(w_{\mathrm{plog}}+1)}$.
For an initial clause or an axiom, $|\operatorname{Exp}_{\mathrm{red}}(D)|
\leq 1$.
For every other $D$, by Theorem~\ref{thm:td-to-dnfres-plog},
$D$ has the following form:
\[
D=C_{\alpha_D}\mlor L_D
\mlor\bigmlor_{c\in A_D}\neg C_c,
\]
where $t_D\in V(T)$,
$\alpha_D:\chi_v(t_D)\to\{0,1\}$,
$A_D\subseteq\chi_c(t_D)$, and $L_D$ consists only of
singleton terms.
For each $c\in A_D$, there is a node $t_c$ that is either
$t_D$ or a child of $t_D$ such that
$c\in\chi_c(t_c)$, $t_c\notin P_c$, and
$C_c\subseteq c_{t_c}^{\mathrm{int}}$.

\par\medskip
\noindent
\textbf{Claim.}
For every $c\in A_D$, if $x\in\operatorname{var}(C_c)$ and
$x\in\chi_v(u)$ for some $u\in P_c$, then
$x\in\chi_v(t_D)$.

\par\medskip
\noindent
\emph{Proof of the claim.}
Let $x\in\operatorname{var}(C_c)$ and $u\in P_c$ with
$x\in\chi_v(u)$.
Since $C_c\subseteq c_{t_c}^{\mathrm{int}}$, we have
$x\in\mathrm{Var}_{t_c}$, so there is a node
$u'\in V(T_{t_c})$ such that $x\in\chi_v(u')$.
Since $c\in\chi_c(t_c)$, the root of $T(c)$ is an ancestor
of or equal to $t_c$.
If $u\in V(T_{t_c})$, the path from the root of $T(c)$
to $u$ would contain $t_c$. This path is contained in $P_c$,
contrary to $t_c\notin P_c$.
Thus $u\notin V(T_{t_c})$.
Since $u'\in V(T_{t_c})$ and $t_c$ is either $t_D$
or a child of $t_D$, the path in $T$ from $u'$ to $u$
contains $t_D$.
Since $x\in\chi_v(u')\cap\chi_v(u)$ and the nodes whose
bags contain $x$ form a connected subtree of $T$,
we have $x\in\chi_v(t_D)$. This proves the claim.

\par\medskip
\noindent

Let $E\in\operatorname{Exp}_{\mathrm{red}}(D)$. By definition,
\[
E=C_{\alpha_D}\mlor L_D
\mlor\bigmlor_{c\in A'}\ell_c
\]
for some $A_D^0\subseteq A'\subseteq A_D$ and literals
$\ell_c\in\neg C_c$ satisfying
$\ell_c,\neg\ell_c\notin C_{\alpha_D}$ for every $c\in A'$.
Since $C_{\alpha_D}$ contains one literal for every variable
in $\chi_v(t_D)$, we have
$\operatorname{var}(\ell_c)\notin\chi_v(t_D)$.
By the claim,
$\operatorname{var}(\ell_c)\notin\chi_v(u)$ for every $u\in P_c$.
Since $\ell_c\in\neg C_c$ and $C_c\subseteq c$, the literal
$\neg\ell_c$ belongs to $c$, and its variable belongs to
no bag on $P_c$. There are $\rho_{\mathcal P}(c)$ such
literals in $c$, so there are at most $\rho_{\mathcal P}(c)$
possible values of $\ell_c$.

The multiclause $E$ is determined by $A'$ and the literals
$\ell_c$ for $c\in A'$.
For each $c\in A_D$, there are at most $\rho_{\mathcal P}(c)$
possible values of $\ell_c$ when $c\in A'$, together with
the possibility that $c\notin A'$.
Hence
$|\operatorname{Exp}_{\mathrm{red}}(D)|
\leq\prod_{c\in A_D}(\rho_{\mathcal P}(c)+1)$.

Let $U_D$ be the set consisting of $t_D$ and its children.
Since $t_c\notin P_c$, the definition of
$\omega_{\mathcal P}(t_c,c)$ gives
$\rho_{\mathcal P}(c)+1\leq
2^{\omega_{\mathcal P}(t_c,c)}$.
For every $c\in A_D$, we also have $t_c\in U_D$ and
$c\in\chi_c(t_c)$, so
\[
\begin{aligned}
|\operatorname{Exp}_{\mathrm{red}}(D)|
&\leq 2^{\sum_{c\in A_D}\omega_{\mathcal P}(t_c,c)}\\
&\leq 2^{\sum_{u\in U_D}
          \sum_{c\in\chi_c(u)}\omega_{\mathcal P}(u,c)}\\
&\leq 2^{|U_D|(w_{\mathrm{plog}}+1)}
\leq 2^{3(w_{\mathrm{plog}}+1)},
\end{aligned}
\]
where the last two inequalities follow from the definition of
$w_{\mathrm{plog}}$ and the fact that $t_D$ has at most two
children.

We then show that every multiclause in
$\operatorname{Exp}_{\mathrm{red}}(D)$ is derivable in at most
one step from a multiclause in $\operatorname{Exp}(D)$, and
that every non-tautological multiclause in
$\operatorname{Exp}(D)$ is derivable in at most one step from
a multiclause in $\operatorname{Exp}_{\mathrm{red}}(D)$.

If $D$ is an initial clause or an axiom, then
$\operatorname{Exp}_{\mathrm{red}}(D)=\operatorname{Exp}(D)$,
so no rule application is needed.

Now suppose that $D$ is neither an initial clause nor an axiom.
Let $C\in\operatorname{Exp}_{\mathrm{red}}(D)$ have the form
$C=C_{\alpha_D}\mlor L_D\mlor\bigmlor\limits_{c\in A'}\ell_c$
in the definition above.
For every $c\in A_D\setminus A'$, choose a literal $\ell_c$ such that 
$\ell_c\in\neg C_c$ and $\ell_c\in C_{\alpha_D}$; such a literal exists
by the condition on $A'$.
Then
\[
\widetilde C:=
C\mlor\bigmlor_{c\in A_D\setminus A'}\ell_c
\in\operatorname{Exp}(D).
\]
Since each additional $\ell_c$ belongs to $C_{\alpha_D}$,
$C$ is obtained from $\widetilde C$ in at most one step by
applications of the contraction rule.

Conversely, let $\widetilde C\in\operatorname{Exp}(D)$.
If $\widetilde C$ is tautological, then it contains $x$ and
$\neg x$ for some variable $x$, and is therefore derivable
from the axiom $x\mlor\neg x$ by at most one application of
the weakening rule.

Suppose now that $\widetilde C$ is non-tautological, and let
$\widetilde C=C_{\alpha_D}\mlor L_D
\mlor\bigmlor\limits_{c\in A_D}\ell_c$,
where $\ell_c\in\neg C_c$ for every $c\in A_D$.
Let $A':=\{c\in A_D:\ell_c\notin C_{\alpha_D}\}$ and
$C:=C_{\alpha_D}\mlor L_D\mlor\bigmlor\limits_{c\in A'}\ell_c$.
For every $c\in A_D\setminus A'$, we have
$\ell_c\in\neg C_c\cap C_{\alpha_D}$.
For every $c\in A'$, we also have
$\neg\ell_c\notin C_{\alpha_D}$, since $\widetilde C$
is non-tautological.
Thus $C\in\operatorname{Exp}_{\mathrm{red}}(D)$, and
$\widetilde C=C\mlor\bigmlor\limits_{c\in A_D\setminus A'}\ell_c$
is derivable from $C$ by at most one application of the
weakening rule.

If $D$ is an initial clause, $\operatorname{Exp}_{\mathrm{red}}(D)$ consists
of that initial clause. The reduced expansions of the axioms
are either empty or consist of an axiom.

Next, we construct an $\mathrm{mRes}$ derivation by deriving every
multiclause in $\operatorname{Exp}_{\mathrm{red}}(D)$ for each
$D$ in $\pi_{\mathrm{mDNF}}$, in the order of $\pi_{\mathrm{mDNF}}$.

Now suppose that $D$ is neither an initial clause nor an axiom,
and that every multiclause in
$\operatorname{Exp}_{\mathrm{red}}(D')$ has already been
derived for every earlier mDNF $D'$ in the derivation $\pi_{\mathrm{mDNF}}$.

Fix $C\in\operatorname{Exp}_{\mathrm{red}}(D)$, and let $\widetilde C\in\operatorname{Exp}(D)$ be a
multiclause from which $C$ is derivable in at most one step.
The construction in the proof of Theorem~\ref{thm:res-bound}
shows that $\widetilde C$ can be derived in at most $k$ steps
from at most $k+1$ multiclauses, each belonging to
$\operatorname{Exp}(D')$ for some earlier mDNF $D'$
in $\pi_{\mathrm{mDNF}}$.

Every non-tautological multiclause in $\operatorname{Exp}(D')$
is then derivable in at most one step from a multiclause in
$\operatorname{Exp}_{\mathrm{red}}(D')$.
Every tautological multiclause in $\operatorname{Exp}(D')$
is derivable using an axiom and at most one application of
the weakening rule.
Thus any multiclause in $\operatorname{Exp}(D')$ can be derived from $\operatorname{Exp}_{\mathrm{red}}(D')$
in at most two additional steps.

So we can derive $C\in\operatorname{Exp}_{\mathrm{red}}(D)$ in the following way: we first derive the at most $k+1$ multiclauses needed to
derive $\widetilde C$, using at most two additional steps
for each. We then derive $\widetilde C$ in at most $k$ steps
and obtain $C$ from $\widetilde C$ in at most one further step.
Thus deriving each $C\in\operatorname{Exp}_{\mathrm{red}}(D)$
requires $O(k)$ steps.

Consequently, the resulting $\mathrm{mRes}$ derivation has length
\[
O(k)\,|\pi_{\mathrm{mDNF}}|\,
2^{3(w_{\mathrm{plog}}+1)}
\leq k|V(T)|2^{O(w_{\mathrm{plog}})},
\]
where we use
$|\pi_{\mathrm{mDNF}}|\leq|V(T)|2^{O(w)}$ and
$w\leq w_{\mathrm{plog}}$.
Since $\operatorname{Exp}_{\mathrm{red}}(\bot)=\{\bot\}$,
this derivation is a refutation.
By Lemma~\ref{lem:mres-to-res}, we have a resolution
refutation of $F$ of length
$k|V(T)|2^{O(w_{\mathrm{plog}})}$.
\end{proof}

\begin{remark} \label{rem:one-sided-regular}
If the nice tree decomposition and clause-path family used
in Theorem~\ref{thm:res-bound-plog} are obtained from a
one-sided tree decomposition by the construction in
Theorem~\ref{thm:plog-one-sided}, then the resulting
resolution refutation is regular.

Indeed, in the refutation constructed in
Theorem~\ref{thm:td-to-dnfres-plog}, the cut rule is used
only at join, forget-clause, and forget-variable nodes.
The construction in Theorem~\ref{thm:plog-one-sided}
ensures that every variable of each clause $c\in F$
appears in a bag on $P_c$.
By the connectedness condition of tree decompositions,
$\operatorname{var}(c_t^{\mathrm{int}})
\subseteq\chi_v(t)$ whenever $c\in\chi_c(t)$ and
$t\notin P_c$: every variable of $c_t^{\mathrm{int}}$
appears in a bag on $P_c$ and a bag in $T_t$, so it belongs to
$\chi_v(t)$.
Consequently, the subclause $H$ in the join construction
of Lemma~\ref{lem:derive-join} is empty, so no cut rule is
used there. At a forget-clause node forgetting $c$,
the construction uses
$D(t',\alpha,A\cup\{c\},B\cup\{c\})$,
which already equals $D(t,\alpha,A,B)$.

Thus the cut rule is used only at forget-variable nodes, and each
such local derivation uses one cut on the forgotten variable.
The conversion in Theorem~\ref{thm:res-bound-plog}
therefore uses resolution only at these nodes, with at
most one resolution step on each directed path in
a local derivation. Since each variable is forgotten
at a unique node, the resulting refutation is regular.
\end{remark}

Now, we can prove Theorem~\ref{thm:main-bounds}.

\MainBounds*

\begin{proof}
If $k=1$, then $F$ contains the unit clauses $x$ and $\neg x$
for some variable $x$.
Applying the resolution rule to these clauses gives the
empty clause.
All four assertions follow. Assume that $k\geq2$.

By Lemma~\ref{lem:nice-empty-incidence}, $G^*(F)$ has a nice
tree decomposition $(T,\chi,r)$ of width $\mathrm{tw}^*(F)$
with $O((n+m)(\mathrm{tw}^*(F)+1))$ nodes.
By Theorem~\ref{thm:td-to-dnfres-tight}, we can construct a
$k\mathrm{\text{-}mDNF\text{-}Res}$ refutation of length
$(n+m)2^{O(\mathrm{tw}^*(F))}$.
By Lemma~\ref{lem-mdnfres-to-dnfres}, we can convert this
refutation into a $k\mathrm{\text{-}DNF\text{-}Res}$
refutation of the same asymptotic length.
This proves (1).

For (2), by Lemma~\ref{lem:nice-empty-log-incidence},
$G^*(F)$ has a nice tree decomposition
$(T_{\log},\chi_{\log},r_{\log})$ of log-weighted width
$\mathrm{tw}_{\log}^*(F)$ with
$O((n+m)(\mathrm{tw}_{\log}^*(F)+1))$ nodes.
By Lemma~\ref{lem:width-comparison}, its width is at most $\mathrm{tw}_{\log}^*(F)$.
By Theorem~\ref{thm:res-bound}, we can construct a resolution
refutation of length $k(n+m)2^{O(\mathrm{tw}_{\log}^*(F))}$
and width at most $\mathrm{tw}_{\log}^*(F)+k$.
This proves (2).

For (3), use the decomposition $(T,\chi,r)$ chosen for (1).
By Lemma~\ref{lem:width-comparison},
$w_{\log}(T,\chi)+1\leq(\mathrm{tw}^*(F)+1)\log k$.
By Theorem~\ref{thm:res-bound}, we can construct a resolution
refutation of width at most $\mathrm{tw}^*(F)+k$ and length
\[
\begin{aligned}
k|V(T)|2^{O(w_{\log}(T,\chi))}
&\leq k(n+m)(\mathrm{tw}^*(F)+1)
  2^{O((\mathrm{tw}^*(F)+1)\log k)}\\
&=(n+m)k^{O(\mathrm{tw}^*(F))}.
\end{aligned}
\]
This proves (3).

Finally, by the definition of
$\mathrm{tw}_{\mathrm{plog}}^*(F)$ and
Lemma~\ref{lem:plog-decomposition-size}, there are a nice
tree decomposition
$(T_{\mathrm{plog}},\chi_{\mathrm{plog}},r_{\mathrm{plog}})$
of $G^*(F)$ and a clause-path family $\mathcal P$
whose partially log-weighted width is
$\mathrm{tw}_{\mathrm{plog}}^*(F)$, with
$|V(T_{\mathrm{plog}})|
=O(k(n+m)(\mathrm{tw}_{\mathrm{plog}}^*(F)+1))$.
By Theorem~\ref{thm:res-bound-plog}, we can construct a
resolution refutation of length
\[
k|V(T_{\mathrm{plog}})|
2^{O(\mathrm{tw}_{\mathrm{plog}}^*(F))}
\leq k^2(n+m)2^{O(\mathrm{tw}_{\mathrm{plog}}^*(F))},
\]
which proves (4).
\end{proof}

\section{Regular resolution upper bounds}\label{sec:regular-resolution}
Let $F$ be an unsatisfiable CNF formula, and let $(T,\chi,r)$ be a
nice tree decomposition of $G^*(F)$ whose leaves and root have empty
bags. Let $\operatorname{var}(F)=\{x_1,\ldots,x_n\}$.

For each clause $c\in F$, let $q_c:=|c|$ and fix a representation
of $c$ as $\ell_1^c\lor\cdots\lor\ell_{q_c}^c$. Let
$c'\subseteq c$ be a subclause of $c$, and suppose that
$c'=\ell_{i_1}^c\lor\cdots\lor\ell_{i_s}^c$, in which
$1\leq i_1<\cdots<i_s\leq q_c$, and define
\[
\begin{aligned}
\mathcal C_1(c')
&:=
\left\{
\neg\ell_{i_h}^c
\mlor\bigmlor_{j=h+1}^{s}\ell_{i_j}^c:
1\leq h\leq s
\right\},\\
\mathcal C_0(c')
&:=
\left\{
\bigmlor_{j=h}^{s}\ell_{i_j}^c:
1\leq h\leq s+1
\right\}.
\end{aligned}
\]
Here $\bigmlor\limits_{j=s+1}^{s}\ell_{i_j}^c$ denotes the
multiclause containing no literals. For $1\leq h\leq s$, the
literal $\ell_{i_h}^c\in c'$ is called the \emph{leading literal}
of the multiclause $\neg\ell_{i_h}^c \mlor \bigmlor\limits_{j=h+1}^{s}\ell_{i_j}^c$.

For $A\subseteq\chi_c(t)$, let $\mathbf 1_A(c)=1$ if $c\in A$
and $\mathbf 1_A(c)=0$ otherwise. For an inconsistent state
$(t,\alpha,A)$, define
\[
\mathcal E(t,\alpha,A):=
\left\{
C_\alpha\mlor
\bigmlor_{c\in\chi_c(t)}C_c
\;\middle|\;
C_c\in
\mathcal C_{\mathbf 1_A(c)}(c_t^{\mathrm{int}})
\text{ for every }c\in\chi_c(t)
\right\}.
\]

For the motivation behind the definition of
$\mathcal E(t,\alpha,A)$, see
Subsection~\ref{subsec:regular-construction-overview}.

\begin{lemma}\label{lem:td-to-mres}
Let $F$ be an unsatisfiable CNF formula, and let $(T,\chi)$
be a nice tree decomposition of $G^*(F)$ of log-weighted width
$w_{\log}$ such that every leaf and the root of $T$ have empty
bags. Then one can construct from $(T,\chi)$ an $\mathrm{mRes}$
refutation $\pi$ of $F$ of length $|V(T)|2^{O(w_{\log})}$
such that, on every directed path in the proof DAG of $\pi$,
each variable is used in at most one variable-eliminating
resolution.
\end{lemma}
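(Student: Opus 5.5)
We will process the nodes of $T$ bottom-up and, for every inconsistent state $(t,\alpha,A)$, derive in $\mathrm{mRes}$ every multiclause of $\mathcal E(t,\alpha,A)$. At the root the only state is $(r,\emptyset,\emptyset)$ and $\mathcal E(r,\emptyset,\emptyset)=\{\bot\}$, so this yields a refutation. For the length, note that $|\mathcal C_1(c')|\le|c|$ and $|\mathcal C_0(c')|\le|c|+1$. Hence
\[
|\mathcal E(t,\alpha,A)|\le\prod_{c\in\chi_c(t)}(|c|+1)\le 2^{O(w_{\log}+1)},
\]
since $\log(|c|+1)\le 2\max\{\log|c|,1\}$. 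Each local derivation will use $O(k)\le 2^{O(w_{\log})}$ steps per target multiclause, and there are at most $2^{|\chi(t)|}$ states per node. The total length is therefore $|V(T)|2^{O(w_{\log})}$.

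Alongside the derivations we will maintain three invariants for every $M\in\mathcal E(t,\alpha,A)$:
\begin{enumerate}
\item every multiclause in the derivation of $M$ has variables only in $\mathrm{Var}_t$;
\item on every directed path ending at $M$, each variable is eliminated at most once;
\item no variable of $\operatorname{var}(M)$ is ever eliminated on such a path.
\end{enumerate}
We will also keep a fourth, auxiliary invariant: every variable eliminated below $M$ lies in $\mathrm{Var}_t\setminus\chi_v(t)$, or is a variable of some $c\in\chi_c(t)$ whose literal has already been discharged in $M$. The point of (3) is combinatorial. When combining derivations from children, a variable-eliminating step on $y$ is safe exactly when $y$ occurs in the premises but was never eliminated earlier. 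Tree-decomposition connectivity (Lemma~\ref{lem:join-var-structure}) then guarantees that the variables eliminated in disjoint subtrees are disjoint from each other and from the current bag.

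The local steps go by node type, following the structure of the lemmas of Section~4 but at the level of multiclauses.

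\emph{Introduce-variable $x$.} Targets are obtained by weakening with the literal on $x$ in $C_\alpha$, or by contraction using Lemma~\ref{lem:introduce-var-zero}. When the literal $\ell$ satisfied by $\alpha$ lies in $c$, membership in $\mathcal C_1$ shifts: a multiclause whose leading literal is $\ell$ contains $\neg\ell$, which already sits in $C_\alpha$, so it follows by contraction or weakening from $\mathcal C_0$-type or child multiclauses. No variable-eliminating step is used here, and $x$ is new, so (3) persists.

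\emph{Introduce-clause $c$.} In case (c) of Lemma~\ref{lem:introduce-clause-zero}, each member of $\mathcal C_1(c_t^{\mathrm{int}})$ is $\neg\ell_{i_h}\mlor(\text{later literals})$ with $\ell_{i_h}\in C_\alpha$. It is therefore a weakening of the axiom $\ell_{i_h}\mlor\neg\ell_{i_h}$, hence of $C_\alpha$-type, and needs no resolution. In case (b) the members of $\mathcal C_0$ come from weakening child multiclauses.

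\emph{Forget-variable $x$.} Resolve $M\mlor x$ against $M\mlor\neg x$. Since $x\notin\operatorname{var}(M)$ for the relevant $M$, this is one variable-eliminating step on $x$. By invariant (3) at $t'$, $x$ was never eliminated below, so (2) holds. Invariant (3) at $t$ needs $x\notin\operatorname{var}(M)$ for every later descendant. That follows because $x$ is forgotten: any later multiclause containing $x$ would have to come from a clause containing $x$ inside $\mathrm{Var}_t$, and such multiclauses only ever include literals of $c^{\mathrm{int}}$.

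\emph{Forget-clause $c$.} Here $c_{t'}^{\mathrm{int}}=c$ by Lemma~\ref{lem:forget-clause-internal}. We will chain resolutions of the initial clause $c=\ell_1\lor\cdots\lor\ell_q$ against the child multiclauses $M\mlor\neg\ell_h\mlor\ell_{h+1}\mlor\cdots\mlor\ell_q$ from $\mathcal C_1(c)$, for $h=1,\ldots,q$ in order. The extra literals $\ell_{h+1},\ldots,\ell_q$ make each step variable-eliminating only for $\operatorname{var}(\ell_h)$, while the remaining literals merge via contraction. This is exactly why $\mathcal C_1$ was designed with the trailing literals. Each $\operatorname{var}(\ell_h)$ appears in the child multiclause, so invariant (3) there ensures it was not eliminated before.

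\emph{Join.} For each $M$ we combine along the partitions $A_1\cup A_2=A$ from Lemma~\ref{lem:join-zero-recursion}. For $c\in A$, a target in $\mathcal C_1(c_t^{\mathrm{int}})$ whose leading literal lies in $c_{t_1}^{\mathrm{int}}$ is matched with the corresponding $\mathcal C_1$ multiclause at $t_1$ and the $\mathcal C_0$ suffix multiclause at $t_2$. For $c\notin A$, two $\mathcal C_0$ suffixes are combined. The combination uses weakening only, or resolution steps on bag variables that merely contract, and these are not variable-eliminating.

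The main obstacle will be the join node. There we must check that the $2^{|A|}$-way case split of Lemma~\ref{lem:join-zero-recursion} can be carried out with non-eliminating steps only, and that the orderings of $c_{t_1}^{\mathrm{int}}$ and $c_{t_2}^{\mathrm{int}}$ inside the fixed representation of $c$ interleave compatibly with the suffix structure of $\mathcal C_0$ and $\mathcal C_1$. My first attempt would be to induct on the clauses of $A$ one at a time, exactly as in Lemma~\ref{lem:derive-join}. At each step I would use the $\mathcal C_0$ multiclauses of the other child as the ``not yet satisfied'' side. I would verify that every literal merged is either a bag literal, which is handled by contraction, or one that is absent from the other side. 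The paths from $t_1$ and from $t_2$ eliminate disjoint variable sets outside $\chi_v(t)$ by Lemma~\ref{lem:join-var-structure}, so invariant (2) survives the union. Finally, Lemma~\ref{lem:mres-to-res} converts $\pi$ into a regular resolution refutation.
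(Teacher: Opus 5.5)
Your overall scheme matches the paper's: you derive all of $\mathcal E(t,\alpha,A)$ bottom-up, keep invariants (1)--(3), weaken axioms in case (c) at introduce-clause nodes, and chain resolutions against $\mathcal C_1(c)$ at forget-clause nodes. However, the join node, which you yourself flag as the main obstacle, is left open, and the mechanism you sketch cannot work. You want to combine a $\mathcal C_1$-multiclause from $t_1$ with $\mathcal C_0$-multiclauses from $t_2$ (``the other child''). Lemma~\ref{lem:join-zero-recursion} only guarantees, for each cover $A_1\cup A_2=A$, that \emph{one} of $(t_1,\alpha,A_1)$, $(t_2,\alpha,A_2)$ is inconsistent. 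So the multiclauses you need at the other child have in general never been derived. You also never specify a rule that would merge premises from the two subtrees.

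The missing idea is that no merging is needed. Take a target $M\in\mathcal E(t,\alpha,A)$ with components $C_c$, and read the cover off $M$ itself. Let $a_c$ be the leading literal of $C_c$ and set $A_i:=\{c\in A:a_c\in c_{t_i}^{\mathrm{int}}\}$; this covers $A$ because $c_t^{\mathrm{int}}=c_{t_1}^{\mathrm{int}}\lor c_{t_2}^{\mathrm{int}}$. Choose $i$ with $(t_i,\alpha,A_i)$ inconsistent, and delete from $M$ every literal whose variable lies outside $\mathrm{Var}_{t_i}$. The result is as follows.
\begin{itemize}
\item For $c\in A_i$ this leaves the member of $\mathcal C_1(c_{t_i}^{\mathrm{int}})$ with leading literal $a_c$.
\item For $c\in A\setminus A_i$ the literal $\neg a_c$ is deleted, and what remains is a suffix of $c_{t_i}^{\mathrm{int}}$, i.e.\ a member of $\mathcal C_0(c_{t_i}^{\mathrm{int}})$.
\item For $c\notin A$, a suffix restricts to a suffix.
\end{itemize}
Hence the restriction lies in $\mathcal E(t_i,\alpha,A_i)$, and $M$ follows from it by one weakening that adds only variables outside $\mathrm{Var}_{t_i}$. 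By invariant (1) these variables never occur below, so (2) and (3) for $M$ follow from (3) for the restriction. This is exactly what the $\mathcal C_0$ family is for: it is used at the \emph{same} child, not the other one. It also removes your worry about how the orders of $c_{t_1}^{\mathrm{int}}$ and $c_{t_2}^{\mathrm{int}}$ interleave, since you only restrict and never merge.

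There are two smaller issues.
\begin{itemize}
\item At an introduce-variable node you do not treat the case where some $c\in A_x$ has leading literal $a_c=\neg\ell$. Dropping the clauses whose leading literal is on $x$ can then leave a child state that omits part of $A_x$, and that state need not be inconsistent. Here $M$ contains $\ell$ (from $C_c$) and $\neg\ell$ (from $C_\alpha$), so it must be obtained by weakening the axiom $\ell\mlor\neg\ell$. Only when no such $c$ exists does the reduced child state contain $A_x$, and hence is inconsistent.
\item At a forget-variable node, the claim $x\notin\operatorname{var}(M)$ is false in general, since a clause of $\chi_c(t)$ may still contain $x$. The resolution is then simply not variable-eliminating. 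Invariant (3) should be checked locally: every variable of $M$ occurs in both $M\mlor x$ and $M\mlor\neg x$, and the step eliminates $x$ only if $x\notin\operatorname{var}(M)$. Your argument about later descendants is not correct as stated and is not needed.
\end{itemize}
Your fourth auxiliary invariant is also unnecessary.
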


\begin{proof}
We construct an $\mathrm{mRes}$ derivation containing every
multiclause in $\mathcal E(t,\alpha,A)$ for every inconsistent
state $(t,\alpha,A)$, and prove by induction that, for every
$M\in\mathcal E(t,\alpha,A)$, the following three induction properties hold:
\begin{enumerate}
\item every multiclause in the derivation of $M$ contains only
variables from $\mathrm{Var}_t$;
\item on every directed path in the proof DAG ending at $M$,
each variable is resolved in at most one variable-eliminating
resolution;
\item on every directed path in the proof DAG ending at $M$,
no variable in $\operatorname{var}(M)$ is resolved in a
variable-eliminating resolution.
\end{enumerate}

The construction proceeds from the leaves to the root.
By Lemma~\ref{lem:leaf-base}, no leaf has an inconsistent state.
Suppose that, for every child $t_i$ of a node $t$ and every
inconsistent state $(t_i,\beta,S)$, all multiclauses in
$\mathcal E(t_i,\beta,S)$ have been derived and satisfy all three induction properties. Fix an inconsistent state $(t,\alpha,A)$
and a multiclause $M\in\mathcal E(t,\alpha,A)$.

\emph{Join node.}
Suppose that $t$ has children $t_1,t_2$.
Fix $M\in\mathcal E(t,\alpha,A)$, and suppose that
\[
M=C_\alpha\mlor
\bigmlor_{c\in\chi_c(t)}C_c.
\]
For each $c\in A$, let $a_c$ be the leading literal of $C_c$.
For $i\in\{1,2\}$, let
$A_i:=\{c\in A:a_c\in c_{t_i}^{\mathrm{int}}\}$.
By Lemma~\ref{lem:join-var-structure}(1),
$c_t^{\mathrm{int}}
=c_{t_1}^{\mathrm{int}}\lor c_{t_2}^{\mathrm{int}}$
for every $c\in\chi_c(t)$, so $A_1\cup A_2=A$.
By Lemma~\ref{lem:join-zero-recursion}, at least one of
$(t_1,\alpha,A_1)$ and $(t_2,\alpha,A_2)$ is inconsistent.
Fix $i\in\{1,2\}$ such that $(t_i,\alpha,A_i)$ is inconsistent.

For each $c\in\chi_c(t)$, let $C'_c$ be obtained from $C_c$
by deleting all literals whose variables do not belong to
$\mathrm{Var}_{t_i}$.
We verify that
$C'_c\in\mathcal C_{\mathbf 1_{A_i}(c)}
(c_{t_i}^{\mathrm{int}})$.

If $c\in A_i$, then $a_c\in c_{t_i}^{\mathrm{int}}$.
By the definition of $\mathcal C_1$, the multiclause $C'_c$
belongs to $\mathcal C_1(c_{t_i}^{\mathrm{int}})$ and has
leading literal $a_c$.

If $c\in A\setminus A_i$, let $a_c=\ell_p^c$.
Since $a_c\in c$ and $a_c\notin c_{t_i}^{\mathrm{int}}$,
we have $\operatorname{var}(a_c)\notin\mathrm{Var}_{t_i}$.
Thus $\neg a_c$ is deleted from $C_c$. The multiclause $C'_c$
therefore consists of the literals
$\ell_j^c\in c_{t_i}^{\mathrm{int}}$ with $j>p$, and hence
belongs to $\mathcal C_0(c_{t_i}^{\mathrm{int}})$.

If $c\in\chi_c(t)\setminus A$, then
$C_c\in\mathcal C_0(c_t^{\mathrm{int}})$.
It is easy to check from the definition of $\mathcal C_0$
that $C'_c\in\mathcal C_0(c_{t_i}^{\mathrm{int}})$.

Since $\chi(t_i)=\chi(t)$, we obtain
\[
M':=C_\alpha\mlor
\bigmlor_{c\in\chi_c(t)}C'_c
\in\mathcal E(t_i,\alpha,A_i).
\]
Thus $M'$ has already been derived. Each $C'_c$ is obtained
from $C_c$ by deleting literals, so either $M=M'$ or $M$
is derived from $M'$ by the weakening rule.

We verify the three induction properties.
By definition, $\operatorname{var}(M)\subseteq\mathrm{Var}_t$.
Since $\mathrm{Var}_{t_i}\subseteq\mathrm{Var}_t$, the first
induction property for $M'$ gives the first induction
property for $M$.
The second induction property also holds, since either
$M=M'$ or $M$ is derived from $M'$ by one weakening step.

For the third induction property, each $C'_c$ contains
exactly the literals of $C_c$ whose variables belong to
$\mathrm{Var}_{t_i}$. Since
$\operatorname{var}(C_\alpha)=\chi_v(t)\subseteq
\mathrm{Var}_{t_i}$, we have
$\operatorname{var}(M')=
\operatorname{var}(M)\cap\mathrm{Var}_{t_i}$.
Thus every variable in
$\operatorname{var}(M)\setminus\operatorname{var}(M')$
does not belong to $\mathrm{Var}_{t_i}$.
By the first induction property for $M'$, none of these
variables appears in its derivation, so none has been resolved.
By the third induction property for $M'$, no variable in
$\operatorname{var}(M')$ is resolved in a variable-eliminating
resolution on any directed path in the proof DAG ending at $M'$.
Since deriving $M$ from $M'$ requires no resolution,
the third induction property also holds for $M$.

\smallskip
\noindent
\emph{Introduce-variable node.}
Suppose that $t$ introduces a variable $x$ and has child $t'$.
Let $\alpha':=\alpha|_{\chi_v(t')}$. Define $\ell:=x$ if
$\alpha(x)=1$ and $\ell:=\neg x$ otherwise, and let
$A_x:=\{c\in A:\ell\notin c\}$.
By Lemma~\ref{lem:introduce-var-zero},
$(t',\alpha',A_x)$ is inconsistent.

Fix $M\in\mathcal E(t,\alpha,A)$, and suppose that
\[
M=C_\alpha\mlor
\bigmlor_{c\in\chi_c(t)}C_c.
\]
For each $c\in A$, let $a_c$ be the leading literal of $C_c$, and
define $B:=\{c\in A:\operatorname{var}(a_c)\neq x\}$.

If $A_x\not\subseteq B$, choose $c\in A_x\setminus B$. Since
$c\notin B$, we have $\operatorname{var}(a_c)=x$. Since
$c\in A_x$, we have $\ell\notin c$, and hence $a_c=\neg\ell$.
Thus $C_c$ contains $\ell$, whereas $C_\alpha$ contains
$\neg\ell$. Therefore, $M$ can be derived from the axiom
$\ell\mlor\neg\ell$ using the weakening rule.

We may therefore assume that $A_x\subseteq B$. It follows
immediately from the definition of $N$ that
$N(t',\alpha',B)\subseteq N(t',\alpha',A_x)=\emptyset$.
Hence $(t',\alpha',B)$ is inconsistent. For each
$c\in\chi_c(t)$, we choose a multiclause $C'_c$ such that
$C_{\alpha'}\mlor\bigmlor\limits_{c\in\chi_c(t)}C'_c
\in\mathcal E(t',\alpha',B)$ and, for every
$c\in\chi_c(t)$, either $C_c=C'_c$ or $C_c$ can be derived from
$C'_c$ using the weakening rule to add $\ell$ or $\neg\ell$.
Since $C_\alpha=C_{\alpha'}\mlor\neg\ell$, the multiclause $M$
can be derived from
$C_{\alpha'}\mlor\bigmlor\limits_{c\in\chi_c(t)}C'_c$
using the weakening rule.

We first consider the case $c\in A$. If $c$ contains neither $x$
nor $\neg x$, then
$c_t^{\mathrm{int}}=c_{t'}^{\mathrm{int}}$ and hence
$\mathcal C_1(c_t^{\mathrm{int}})
=\mathcal C_1(c_{t'}^{\mathrm{int}})$, so set $C'_c:=C_c$.

Suppose next that $\neg\ell\in c$, and let
$a_c=\ell_i^c$ and $\neg\ell=\ell_j^c$. Since $\ell\notin c$, we have $c\in A_x\subseteq B$. If $i=j$,
then $a_c=\neg\ell$, and hence $\operatorname{var}(a_c)=x$.
By the definition of $B$, this implies $c\notin B$, a
contradiction. Therefore $i\neq j$. If $j<i$, then $C_c$ is
also a member of $\mathcal C_1(c_{t'}^{\mathrm{int}})$ with
leading literal $a_c$, so set $C'_c:=C_c$. If $i<j$, let $C'_c$
be the member of $\mathcal C_1(c_{t'}^{\mathrm{int}})$ with
leading literal $a_c$. Then $C_c=C'_c\mlor\neg\ell$.

Suppose now that $\ell\in c$. Let
$a_c=\ell_i^c$ and $\ell=\ell_j^c$. If $i=j$, then $c\notin B$. Let $C'_c$ be the member of
$\mathcal C_0(c_{t'}^{\mathrm{int}})$ consisting of the literals
$\ell_h^c\in c_{t'}^{\mathrm{int}}$ with $h>j$. Then
$C_c=\neg\ell\mlor C'_c$.
If $j<i$, then $c\in B$ and
$C_c\in\mathcal C_1(c_{t'}^{\mathrm{int}})$, so set
$C'_c:=C_c$.
If $i<j$, then $c\in B$. Let $C'_c$ be the member of
$\mathcal C_1(c_{t'}^{\mathrm{int}})$ with leading literal $a_c$.
Then $C_c=C'_c\mlor\ell$.

Finally, suppose that $c\in\chi_c(t)\setminus A$. Then
$c\notin B$. Since
$c_{t'}^{\mathrm{int}}
=c_t^{\mathrm{int}}\setminus\{\ell,\neg\ell\}$, it is easy to
check from the definition of $\mathcal C_0$ that there is a
$C'_c\in\mathcal C_0(c_{t'}^{\mathrm{int}})$ such that either
$C_c=C'_c$ or $C_c$ can be derived from $C'_c$ using the
weakening rule to add $\ell$ or $\neg\ell$.

We verify the three induction properties.
By definition, $\operatorname{var}(M)\subseteq\mathrm{Var}_t$.
If $M$ is derived from the axiom $\ell\mlor\neg\ell$ by
weakening, all three induction properties hold, since
$x\in\mathrm{Var}_t$ and the derivation contains no resolution.

Otherwise, let
$M':=C_{\alpha'}\mlor
\bigmlor\limits_{c\in\chi_c(t)}C'_c$.
By construction, $M$ is derived from $M'$ by weakening,
and every added literal is either $x$ or $\neg x$.
The first induction property follows from
$\mathrm{Var}_{t'}\subseteq\mathrm{Var}_t$ and
$\operatorname{var}(M)\subseteq\mathrm{Var}_t$.
The second induction property follows because every directed
path in the proof DAG ending at $M$ extends a path ending
at $M'$ by one weakening step.

For the third induction property,
$\operatorname{var}(M)\setminus\operatorname{var}(M')
\subseteq\{x\}$.
Since $x\notin\mathrm{Var}_{t'}$, the first induction property
for $M'$ implies that its derivation contains neither $x$
nor $\neg x$, so $x$ has not been resolved.
Together with the third induction property for $M'$, this
proves the third induction property for $M$.

\smallskip
\noindent
\emph{Introduce-clause node.}
Suppose that $t$ introduces a clause $c$ and has child $t'$.
Fix $M\in\mathcal E(t,\alpha,A)$, and suppose that
\[
M=C_\alpha\mlor
\bigmlor_{d\in\chi_c(t)}C_d.
\]
Since $t$ introduces only $c$, we have
$\chi_v(t)=\chi_v(t')$ and $\mathrm{Var}_t=\mathrm{Var}_{t'}$.
Consequently, $d_t^{\mathrm{int}}=d_{t'}^{\mathrm{int}}$ for every
$d\in\chi_c(t')$.

Suppose first that $c\notin A$. By
Lemma~\ref{lem:introduce-clause-zero}(a), the state
$(t',\alpha,A)$ is inconsistent. The multiclause
$M':=C_\alpha\mlor
\bigmlor\limits_{d\in\chi_c(t')}C_d$
belongs to $\mathcal E(t',\alpha,A)$. Since
$M=M'\mlor C_c$, $M$ is derived from $M'$ by the
weakening rule.

Suppose next that $c\in A$ and $\alpha$ satisfies $c$. By
Lemma~\ref{lem:introduce-clause-zero}(b), the state
$(t',\alpha,A\setminus\{c\})$ is inconsistent. Again,
$M':=C_\alpha\mlor
\bigmlor\limits_{d\in\chi_c(t')}C_d$
belongs to $\mathcal E(t',\alpha,A\setminus\{c\})$, and $M$ is derived from $M'$ by the weakening rule.

In both cases, the weakening rule adds only literals belonging to
$C_c$. By Lemma~\ref{lem:introduce-clause-local},
$\operatorname{var}(C_c)\subseteq\chi_v(t)$, while
$\operatorname{var}(C_\alpha)=\chi_v(t)$. Hence
$\operatorname{var}(C_c)\subseteq\operatorname{var}(M')$, and
therefore $\operatorname{var}(M)=\operatorname{var}(M')$.

Finally, suppose that $c\in A$ and $\alpha$ does not satisfy $c$.
Let $a_c$ be the leading literal of $C_c$. By Lemma~\ref{lem:introduce-clause-local},
$\operatorname{var}(a_c)\in\chi_v(t)$. Since $\alpha$ does not
satisfy $c$, the literal $a_c$ is falsified by $\alpha$. By the
definition of $C_\alpha$, this implies that $a_c$ belongs to
$C_\alpha$, whereas $\neg a_c$ belongs to $C_c$. Thus $M$ is
derived from the axiom $a_c\mlor\neg a_c$ by the weakening rule.

We now verify the three induction properties.
By the definition of $\mathcal E(t,\alpha,A)$,
$\operatorname{var}(M)\subseteq\mathrm{Var}_t$.
If $M$ is derived from the axiom $a_c\mlor\neg a_c$,
then $\operatorname{var}(a_c)\in\chi_v(t)\subseteq\mathrm{Var}_t$,
so every multiclause in the derivation contains only variables
from $\mathrm{Var}_t$. The derivation contains no resolution,
and hence all three induction properties hold.

In the other two cases, $M$ is derived from
$M'=C_\alpha\mlor
\bigmlor\limits_{d\in\chi_c(t')}C_d$
by the weakening rule.
By the first induction property, every multiclause in the
derivation of $M'$ contains only variables from
$\mathrm{Var}_{t'}=\mathrm{Var}_t$.
Since $\operatorname{var}(M)\subseteq\mathrm{Var}_t$,
the first induction property also holds for $M$.

Every directed path in the proof DAG ending at $M$ consists
of a directed path ending at $M'$ and the weakening step
from $M'$ to $M$. The second induction property therefore
holds for $M$.
As shown above, $\operatorname{var}(M)=\operatorname{var}(M')$.
By the third induction property for $M'$, no variable in
$\operatorname{var}(M)$ is resolved in a variable-eliminating
resolution on any directed path in the proof DAG ending at
$M'$. Since the final step is weakening, the third induction
property also holds for $M$.

\smallskip
\noindent
\emph{Forget-variable node.}
Suppose that $t$ forgets a variable $x$ and has child $t'$. For
$b\in\{0,1\}$, let $\alpha_b:=\alpha\cup\{x\mapsto b\}$. By
Lemma~\ref{lem:forget-var-zero}, both $(t',\alpha_0,A)$ and
$(t',\alpha_1,A)$ are inconsistent.

Fix $M\in\mathcal E(t,\alpha,A)$, and suppose that
\[
M=C_\alpha\mlor
\bigmlor_{c\in\chi_c(t)}C_c.
\]
Since $t$ forgets only $x$, we have
$\chi_c(t)=\chi_c(t')$ and
$\mathrm{Var}_t=\mathrm{Var}_{t'}$. Hence
$c_t^{\mathrm{int}}=c_{t'}^{\mathrm{int}}$ for every
$c\in\chi_c(t)$.

For $b\in\{0,1\}$, let
$M_b:=C_{\alpha_b}\mlor
\bigmlor\limits_{c\in\chi_c(t)}C_c$.
Then $M_b\in\mathcal E(t',\alpha_b,A)$. Moreover,
$C_{\alpha_0}=C_\alpha\mlor x$ and
$C_{\alpha_1}=C_\alpha\mlor\neg x$, so
$M_0=M\mlor x$ and $M_1=M\mlor\neg x$.
Applying the resolution rule to $M_0$ and $M_1$ on $x$, followed
by contraction, derives $M$.

We verify the three induction properties.
Since $\mathrm{Var}_{t'}=\mathrm{Var}_t$ and
$\operatorname{var}(M)\subseteq\mathrm{Var}_t$, the first
induction property for $M_0$ and $M_1$ gives the first
induction property for $M$.

Both $M_0$ and $M_1$ contain the variable $x$.
By their third induction property, $x$ has not been resolved
in a variable-eliminating resolution on any directed path
in the proof DAG ending at either multiclause.
Since the step deriving $M$ resolves only $x$, their second
induction property therefore gives the second induction
property for $M$.

Every variable in $\operatorname{var}(M)$ belongs to both
$\operatorname{var}(M_0)$ and $\operatorname{var}(M_1)$.
Their third induction property thus applies to all these
variables. The final resolution on $x$ is a
variable-eliminating resolution if and only if
$x\notin\operatorname{var}(M)$.
Hence the third induction property also holds for $M$.

\smallskip
\noindent
\emph{Forget-clause node.}
Suppose that $t$ forgets a clause $c$ and has child $t'$. By
Lemma~\ref{lem:forget-clause-zero},
$(t',\alpha,A\cup\{c\})$ is inconsistent.

Fix $M\in\mathcal E(t,\alpha,A)$, and suppose that
\[
M=C_\alpha\mlor
\bigmlor_{d\in\chi_c(t)}C_d.
\]
Since $t$ forgets only $c$, we have
$\chi_v(t')=\chi_v(t)$,
$\chi_c(t')=\chi_c(t)\cup\{c\}$, and
$\mathrm{Var}_{t'}=\mathrm{Var}_t$. Hence
$d_{t'}^{\mathrm{int}}=d_t^{\mathrm{int}}$ for every
$d\in\chi_c(t)$. Moreover,
Lemma~\ref{lem:forget-clause-internal} gives
$c_{t'}^{\mathrm{int}}=c$.

Recall that $c=\ell_1^c\lor\cdots\lor\ell_{q_c}^c$.
For each $i\in\{1,\ldots,q_c\}$, let
\[
P_i:=M\mlor\neg\ell_i^c
\mlor\bigmlor_{j=i+1}^{q_c}\ell_j^c.
\]
By the definition of $\mathcal C_1$,
$\neg\ell_i^c\mlor
\bigmlor\limits_{j=i+1}^{q_c}\ell_j^c
\in\mathcal C_1(c_{t'}^{\mathrm{int}})$.
For every $d\in\chi_c(t)$, we also have
$C_d\in\mathcal C_{\mathbf 1_A(d)}(d_{t'}^{\mathrm{int}})$.
Therefore $P_i\in\mathcal E(t',\alpha,A\cup\{c\})$, so each
$P_i$ has already been derived.

Let $R_0:=c$, and, for $1\leq i\leq q_c$, let
$R_i:=M\mlor\bigmlor\limits_{j=i+1}^{q_c}\ell_j^c$.
Starting with the initial clause $R_0$, we derive $R_i$ from
$R_{i-1}$ and $P_i$ by applying the resolution rule on
$\ell_i^c$, followed by contraction, for
$i=1,\ldots,q_c$. Since $R_{q_c}=M$, this derives $M$.

We verify the three induction properties.
By Lemma~\ref{lem:forget-clause-internal},
$\operatorname{var}(c)\subseteq\mathrm{Var}_{t'}=\mathrm{Var}_t$.
The first induction property for each $P_i$ applies to its
derivation. Since
$\operatorname{var}(R_i)\subseteq
\operatorname{var}(M)\cup\operatorname{var}(c)
\subseteq\mathrm{Var}_t$,
the first induction property also holds for the derivation of $M$.

We prove the second and third induction properties for $R_i$
by induction on $i=0,\ldots,q_c$.
Both hold for the initial clause $R_0=c$.
Fix $i\geq1$, and let $x:=\operatorname{var}(\ell_i^c)$.
Since the literals of $c$ have distinct variables, the
resolution using $R_{i-1}$ and $P_i$ is a
variable-eliminating resolution if and only if
$x\notin\operatorname{var}(M)$, equivalently,
$x\notin\operatorname{var}(R_i)$.

The variable $x$ belongs to both
$\operatorname{var}(R_{i-1})$ and $\operatorname{var}(P_i)$.
By their third induction property, $x$ has not been resolved
in a variable-eliminating resolution on any directed path
in the proof DAG ending at either multiclause.
Thus their second induction property also holds for $R_i$,
since the step deriving $R_i$ resolves only $x$.

Every variable in $\operatorname{var}(R_i)$ belongs to
$\operatorname{var}(P_i)$ and, when $i\geq2$, to
$\operatorname{var}(R_{i-1})$.
Their third induction property therefore applies to these
variables. When $i=1$, the derivation of $R_0$ contains no
resolution because $R_0$ is an initial clause.
The final resolution is variable-eliminating only when
$x\notin\operatorname{var}(R_i)$, so the third induction
property holds for $R_i$.
Since $R_{q_c}=M$, all three induction properties hold for $M$. 

\smallskip
\noindent

Since $F$ is unsatisfiable and $\chi(r)=\emptyset$,
the state $(r,\emptyset,\emptyset)$ is inconsistent.
Moreover,
$\mathcal E(r,\emptyset,\emptyset)=\{\bot\}$.
Hence the construction gives an $\mathrm{mRes}$
refutation of $F$.

We now bound the length of the refutation.
For each inconsistent state $(t,\alpha,A)$, the definition
of $\mathcal E(t,\alpha,A)$ gives
\[
\begin{aligned}
|\mathcal E(t,\alpha,A)|
&\leq
\prod_{c\in A}|c_t^{\mathrm{int}}|
\prod_{c\in\chi_c(t)\setminus A}
\bigl(|c_t^{\mathrm{int}}|+1\bigr)\\
&\leq
\prod_{c\in\chi_c(t)}(|c|+1)
\leq
2^{2\sum_{c\in\chi_c(t)}\max\{\log|c|,1\}}
\leq 2^{2(w_{\log}+1)}.
\end{aligned}
\]
Here the third inequality uses
$|c|+1\leq2^{2\max\{\log|c|,1\}}$.
The number of states at $t$ is
$2^{|\chi_v(t)|+|\chi_c(t)|}\leq2^{w_{\log}+1}$.
Thus at most $2^{3(w_{\log}+1)}$ multiclauses need to be
derived at each node.

For each multiclause $M$, the construction uses at most
one rule application at an introduce-variable,
introduce-clause, forget-variable, or join node, and at most
one additional axiom.
At a forget-clause node forgetting $c$, it uses the initial
clause $c$ and $|c|$ applications of the resolution rule.
Since $c\in\chi_c(t')$ for the child $t'$ of that node,
$\log|c|\leq w_{\log}+1$, and hence
$|c|+1=2^{O(w_{\log})}$.

The multiclauses required from the children have already
been derived and can be used in each local derivation.
Consequently, the total number of additional multiclauses
at each node is $2^{O(w_{\log})}$.
Summing over all nodes gives
$|\pi|\leq |V(T)|2^{O(w_{\log})}$.
\end{proof}

\RegularBounds*

\begin{proof}
Let $w_{\log}:=\mathrm{tw}_{\log}^*(F)$.
By Lemma~\ref{lem:nice-empty-log-incidence}, $G^*(F)$ has
a nice tree decomposition $(T,\chi,r)$ of log-weighted width
$w_{\log}$ with $O((n+m)(w_{\log}+1))$ nodes.
Lemmas~\ref{lem:td-to-mres} and~\ref{lem:mres-to-res}
give a regular resolution refutation of $F$ of length
$|V(T)|2^{O(w_{\log})}=(n+m)2^{O(w_{\log})}$.

If $F$ has maximum clause width $k\geq2$, then by
Lemma~\ref{lem:width-comparison}, $F$ has a regular resolution
refutation of length $(n+m)k^{O(\mathrm{tw}^*(F))}$.
\end{proof}

\section{Further discussion}
Next, we show some corollaries of our main results.

\subsection{FPT-sized resolution refutations under restrictions on clause width} \label{subsec:clause-width}

Using our upper bound, we can prove that, if
$f(n)=n^{o(1)}$ and satisfies certain computability conditions,
the class of unsatisfiable CNF formulas with $n$ variables
and maximum clause width at most $f(n)$ has FPT-sized resolution
refutations parameterized by incidence
treewidth. The computability conditions on $f$ arise from the
computability requirement in the definition of FPT-sized refutations.

\begin{theorem}\label{thm:subpolynomial-clause-width}
Let $f:\mathbb Z_{\geq1}\to[1,\infty)$ satisfy
$f(n)=n^{o(1)}$.
Then the following statements hold.
\begin{enumerate}
    \item There exists a function
    $g:\mathbb Z_{\geq0}\to\mathbb Z_{\geq1}$,
    depending only on $f$, such that every unsatisfiable
    CNF formula $F$ with $n$ variables, $m$ clauses, and
    maximum clause width $k\leq f(n)$ has a resolution
    refutation of length at most
    $g(\mathrm{tw}^*(F))(n+m)^2$.

    \item If there exists a computable function
    $N:\mathbb Z_{\geq1}\to\mathbb Z_{\geq2}$ such that
    $f(n)\leq n^{1/q}$ for all integers $q\geq1$ and
    $n\geq N(q)$, then the function $g$ in (1) can be
    chosen computable. Consequently, the class of
    unsatisfiable CNF formulas with $n$ variables and
    maximum clause width at most $f(n)$ has FPT-sized resolution refutations parameterized by incidence
    treewidth.
\end{enumerate}
\end{theorem}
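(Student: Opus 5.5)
The plan is to start from Theorem~\ref{thm:main-bounds}(3). Write $w:=\mathrm{tw}^*(F)$. That theorem gives a resolution refutation of length at most $(n+m)k^{c(w+1)}$ for an absolute constant $c\geq1$; for $k=1$ the bound is trivial. The task is therefore to turn the factor $k^{c(w+1)}\leq f(n)^{c(w+1)}$ into something of the form $g(w)\cdot(n+m)$, and we will absorb the leftover into $(n+m)^2$.

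The key step is a case split on $n$. For each $q\geq1$, since $f(n)=n^{o(1)}$, there is a threshold $N_q$ such that $f(n)\leq n^{1/q}$ for all $n\geq N_q$. Take $q:=c(w+1)$.
\begin{itemize}
\item If $n\geq N_q$, then $k^{c(w+1)}\leq n^{c(w+1)/q}=n\leq n+m$. The refutation length is then at most $(n+m)^2$.
\item If $n<N_q$, then $k\leq f(n)\leq M_q:=\max\{f(1),\ldots,f(N_q)\}$, which is a finite quantity depending only on $f$ and $w$. The length is then at most $(n+m)M_q^{c(w+1)}$.
\end{itemize}
Define $g(w):=\max\{1,\lceil M_{c(w+1)}^{c(w+1)}\rceil\}$. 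Then in both cases the length is at most $g(w)(n+m)^2$, which proves (1). Alternatively, one can avoid $M_q$ by using the trivial bound $k\leq n<N_q$, which gives $g(w)=N_{c(w+1)}^{c(w+1)}$; this version is better suited to part (2). The constant $c$ must be fixed once, from the $O(\cdot)$ in Theorem~\ref{thm:main-bounds}(3), so that $q$ depends only on $w$.

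For (2), we use the hypothesis and set $N_q:=N(q)$. We then apply the variant of the small-$n$ case: there $k\leq n<N(q)$, so $k^{c(w+1)}<N(c(w+1))^{c(w+1)}$, and we take $g(w):=N(c(w+1))^{c(w+1)}$. This $g$ is computable because $N$ is computable and $c$ is a fixed integer. Together with the definition of FPT-sized refutations (exponent $2$ on $(n+m)$), this gives the consequence stated in (2).

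The main obstacle is the bookkeeping for computability in (2). One must avoid any reference to values of $f$, which may be uncomputable, and rely only on $k\leq n$ and the computable thresholds $N(q)$. A second point to handle is making the hidden constant in $k^{O(\mathrm{tw}^*(F))}$ explicit, together with the dependence on $w+1$ rather than $w$, so that the choice $q=c(w+1)$ is legitimate. It helps to note that $k^{O(w)}$ absorbs the $(\mathrm{tw}^*+1)$ factor coming from the number of nodes, and this is already accounted for in the statement of (3).
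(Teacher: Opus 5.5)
Your proposal is correct and follows essentially the same argument as the paper. Both extract an explicit constant from Theorem~\ref{thm:main-bounds}(3) and split on whether $n$ reaches the threshold $N(q)$ for $q$ proportional to the incidence treewidth. In the small case both use $k\leq n<N(q)$, so that $g$ is built from $N$ alone and is computable whenever $N$ is.

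The differences are cosmetic. You put $w+1$ in the exponent and absorb the leading constant, whereas the paper keeps a factor $a$ and uses $\mathrm{tw}^*(F)\geq1$. You also mention a non-computable variant of part (1) that uses $\max f$.
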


\begin{proof}
By Theorem~\ref{thm:main-bounds}(3), there exists a
positive integer $a$ such that every unsatisfiable CNF
formula $F$ with $n$ variables, $m$ clauses, and maximum
clause width $k$ has a resolution refutation of length
at most $a(n+m)k^{a\cdot\mathrm{tw}^*(F)}$.

Since $f(n)=n^{o(1)}$, there exists a function
$N:\mathbb Z_{\geq1}\to\mathbb Z_{\geq2}$ such that
$f(n)\leq n^{1/q}$ whenever $q\geq1$ and $n\geq N(q)$.
Define $g(0):=a$ and
$g(q):=aN(aq)^{aq}$ for every integer $q\geq1$.

Let $F$ be an unsatisfiable CNF formula with $n$ variables,
$m$ clauses, and maximum clause width $k\leq f(n)$.
Since $F$ is unsatisfiable and has no empty clauses,
$G^*(F)$ contains an edge, so $\mathrm{tw}^*(F)\geq1$.

If $n\geq N(a\mathrm{tw}^*(F))$, then
$f(n)\leq n^{1/(a\mathrm{tw}^*(F))}$ by the choice of $N$,
and hence
$k^{a\mathrm{tw}^*(F)}
\leq f(n)^{a\mathrm{tw}^*(F)}\leq n$.
If $n<N(a\mathrm{tw}^*(F))$, each clause contains at most
$n$ literals, so $k\leq n<N(a\mathrm{tw}^*(F))$ and
$k^{a\mathrm{tw}^*(F)}
\leq N(a\mathrm{tw}^*(F))^{a\mathrm{tw}^*(F)}$.
Since $g(\mathrm{tw}^*(F))\geq a$ and $n\geq1$,
both cases imply
$a k^{a\mathrm{tw}^*(F)}\leq g(\mathrm{tw}^*(F))n$.
Thus $F$ has a resolution refutation of length at most
\[
a(n+m)k^{a\mathrm{tw}^*(F)}
\leq g(\mathrm{tw}^*(F))n(n+m)
\leq g(\mathrm{tw}^*(F))(n+m)^2,
\]
proving (1).

Under the additional assumption in (2), choose $N$
to be computable.
The function $g$ defined above is then computable,
and the polynomial degree in the length bound is
independent of incidence treewidth.
This proves (2).
\end{proof}

\subsection{Implications for lower bounds} \label{subsec:lower-bounds}

To our knowledge, no lower bound on resolution refutation
length exceeding $n^{O(1)}2^{O(\mathrm{tw}^*(F))}$ is known. Although our results only partially improve previous upper
bounds, they also help rule out certain formula families
as candidates for proving lower bounds stronger than
those currently known by particular methods.

For example, for every unsatisfiable CNF formula $F$
with $n$ variables, $m$ clauses, and maximum clause
width $k$, Theorem~\ref{thm:main-bounds}(3) constructs
a resolution refutation of width at most
$\mathrm{tw}^*(F)+k$.
Hence a direct application of the famous length--width relation
of Ben-Sasson and
Wigderson~\cite{BenSassonWigderson2001}
can yield a length lower bound of at most
$2^{O(\mathrm{tw}^*(F)^2/n)}$.
Since $\mathrm{tw}^*(F)\leq n$, this approach cannot
obtain a better lower bound than
$2^{O(\mathrm{tw}^*(F))}$. Therefore, improving the lower bounds beyond
$2^{O(\mathrm{tw}^*(F))}$ requires more than a direct
application of the length--width relation.

Furthermore, Theorems~\ref{thm:main-bounds}
and~\ref{thm:subpolynomial-clause-width} can help us identify desirable
properties of candidate formula families for proving
lower bounds. 
In particular, formula families with log-weighted or partially
log-weighted incidence treewidth bounded by
$O(\mathrm{tw}^*(F)+\log n)$ can be ruled out as candidates for
proving lower bounds beyond
$(n+m)^{O(1)}2^{O(\mathrm{tw}^*(F))}$.
For example, the bound in Theorem~\ref{thm:main-bounds}(2)
suggests that candidate formula families should satisfy
$\mathrm{tw}_{\log}^*(F)=\omega(\mathrm{tw}^*(F)+\log n)$.
Such a separation requires every tree decomposition
$(T,\chi)$ of $G^*(F)$ of width
$O(\mathrm{tw}^*(F))$ to have a node $t\in V(T)$ such that
\[
\sum_{C\in\chi_c(t)}\max\{\log|C|,1\}
=\omega(\mathrm{tw}^*(F)+\log n).
\]
Since $|\chi(t)|=O(\mathrm{tw}^*(F))$, 
$\chi(t)$ must contain sufficiently many clauses of sufficiently large width.

\subsection{FPT-length refutations with restricted extension variables}
\label{sec:extension-variables}

Atserias and Bonet~\cite{AtseriasBonet2004} established
a correspondence between
$k\mathrm{\text{-}DNF\text{-}Res}$ and resolution with extension variables representing conjunctions of at most $k$ literals.
By this correspondence and
Theorem~\ref{thm:main-bounds}(1), we obtain
resolution refutations of FPT length parameterized
by incidence treewidth after introducing extension
variables.
In fact, we only need to introduce variables representing
nonempty subclauses of clauses in the
original formula, as shown in the following theorem.

The refutation in
Theorem~\ref{thm:td-to-dnfres-tight} is constructed
directly from an incidence tree decomposition.
The proof of Theorem~\ref{thm:ext-to-res} follows that
construction and therefore also uses the structure of
the incidence tree decomposition.
In contrast, as discussed in
Section~\ref{sec:related-work}, the approach used in \cite{SamerSzeider2010CSP,Furer2012} first transforms the
original formula into an equisatisfiable CNF formula
of small primal treewidth and then constructs a
refutation using a primal tree decomposition of the
new formula.
Our construction does not require this intermediate
transformation and uses the incidence tree decomposition
of the original formula directly.

Let $F$ be a CNF formula.
For each nonempty subclause $C$ of a clause in $F$,
introduce a distinct new variable $y_C$ not in $\mathrm{var}(F)$.
Define
\[
F':=
F
\cup
\{y_C\lor\neg\ell:
  \emptyset\neq C\subseteq c,\ c\in F,\ \ell\in C\}
\cup
\{\neg y_C\lor C:
  \emptyset\neq C\subseteq c,\ c\in F\}.
\]
For each nonempty subclause $C$ of a clause in $F$,
the clauses $\neg y_C\lor C$ and
$y_C\lor\neg\ell$ for all $\ell\in C$ together
express $y_C\leftrightarrow C$.
Hence $F$ and $F'$ are equisatisfiable, and $y_C$
represents $C$.

The proof of the following theorem is given in
Appendix~\ref{app:ext-to-res}.

\begin{restatable}{theorem}{ExtToRes}\label{thm:ext-to-res}
Let $F$ be an unsatisfiable CNF formula of maximum
clause width $k$, with $n$ variables and $m$ clauses,
and let
\[
F'=
F
\cup
\{y_C\lor\neg\ell:
  \emptyset\neq C\subseteq c,\ c\in F,\ \ell\in C\}
\cup
\{\neg y_C\lor C:
  \emptyset\neq C\subseteq c,\ c\in F\}.
\]
Then $F'$ has a resolution refutation of length
$k(n+m)2^{O(\mathrm{tw}^*(F))}$.
\end{restatable}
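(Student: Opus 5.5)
The plan is to simulate the $k\mathrm{\text{-}mDNF\text{-}Res}$ refutation of Theorem~\ref{thm:td-to-dnfres-tight}, built on a nice tree decomposition $(T,\chi,r)$ of $G^*(F)$ of width $\mathrm{tw}^*(F)$ with $O((n+m)\mathrm{tw}^*(F))$ nodes (Lemma~\ref{lem:nice-empty-incidence}), inside resolution over $F'$. The key structural fact from Theorem~\ref{thm:td-to-dnfres-tight} is that every mDNF in that refutation has the form $E=L_E\mlor\bigmlor_{c\in A_E}\neg C_c$ with $C_c\subseteq c$. So every non-singleton term is the negation of a subclause of a clause of $F$. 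For nonempty $C_c$, the term $\neg C_c$ is equivalent to the literal $\neg y_{C_c}$. I translate $E$ into the multiclause $E^\dagger:=L_E\mlor\bigmlor_{c\in A_E,\,C_c\neq\emptyset}\neg y_{C_c}$. If some $C_c=\emptyset$, the term is $\top$ and $E$ is trivially true; in the tight construction this occurs only when the axiom $\top$ is weakened, and I expect those lines need not be translated at all, since every use of such an $E$ should be absorbed (via $\land$-introduction with $\top$ giving a term $\neg C$ that the translation derives from the axiom).

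I would then derive, in $\mathrm{mRes}$ from $F'$, the multiclause $E^\dagger$ for every mDNF $E$ of the refutation, in order. Each rule is simulated with $O(k)$ steps. Weakening and contraction translate directly. For $\land$-elimination $\neg C\to\neg C'$ with $C'\subseteq C$, I resolve $\neg y_C$ against $\neg y_{C'}\lor C'$ and then against $y_C\lor\neg\ell$ for $\ell\in C'$. This is $O(k)$ resolutions producing $\neg y_{C'}$ (and similarly when $C'$ is empty or a singleton term). For $\land$-introduction, which combines $\neg C_1$ and $\neg C_2$ into $\neg(C_1\cup C_2)$, I start from $\neg y_{C_1\cup C_2}\lor C_1\lor C_2$ and resolve away the literals of $C_1$ using $y_{C_1}\lor\neg\ell$ chained through $\neg y_{C_1}$-premises. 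Concretely: from $\neg y_{C_1}\mlor G_1$ and $y_{C_1}\lor\neg \ell$ I get $\neg\ell\mlor G_1$ for each $\ell\in C_1$; I do the same for $C_2$; then I resolve these $\le k$ clauses against $\neg y_{C_1\cup C_2}\lor C_1\lor C_2$. The intro-variable case, where one side is a singleton axiom literal, is a special case. For cut against $\neg C$, as at forget-clause nodes, the premise is a clause $G_1\mlor C$ and the other is $G_2\mlor\neg y_C$. I resolve $\neg y_C$ with $y_C\lor\neg\ell$ to get $G_2\mlor\neg\ell$ for each $\ell\in C$, then perform the $\le k$ resolutions against $G_1\mlor C$ exactly as in the proof of Theorem~\ref{thm:res-bound}. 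The join case also uses cut with $\neg H$, which is handled identically. Singleton-literal cuts at forget-variable nodes are single resolutions.

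Since each mDNF line costs $O(k)$ $\mathrm{mRes}$ lines, the total length is $O(k)\cdot|V(T)|2^{O(\mathrm{tw}^*(F))}=k(n+m)2^{O(\mathrm{tw}^*(F))}$. The final line $\bot$ translates to $\bot$, and Lemma~\ref{lem:mres-to-res} converts the result into a resolution refutation of $F'$. The crucial contrast with Theorem~\ref{thm:res-bound} is that each $E$ yields a single multiclause rather than an exponential-size expansion.

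The main obstacle is bookkeeping the translation uniformly for terms that are singletons, which are literals of $F$ rather than $y$-variables. In particular, I must make the translation of a one-literal term $\neg\ell$ consistent: it could be read either as the literal $\neg\ell$ or as $\neg y_{\{\ell\}}$. I would fix the convention that single-literal negated subclauses are translated as $\neg y_{\{\ell\}}$, and bridge to the literal with one resolution against $y_{\{\ell\}}\lor\neg\ell$ or $\neg y_{\{\ell\}}\lor\ell$ whenever a rule needs the raw literal. I would also check that every $\land$-introduction in Lemmas~\ref{lem:derive-join}, \ref{lem:derive-introduce-var} and \ref{lem:derive-introduce-clause} joins terms whose union is again a subclause of a single clause of $F$, so that the needed $y$-variable exists. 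This holds by the structural statement of those lemmas.
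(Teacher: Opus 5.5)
Your proposal is correct and follows essentially the paper's route. The paper also translates each line of the tight $k\mathrm{\text{-}mDNF\text{-}Res}$ refutation into a single multiclause, replacing each negated-subclause term by $\neg y_C$, simulates every rule in $O(k)$ $\mathrm{mRes}$ steps using the defining clauses of $y_C$, skips lines containing $\top$ (an $\land$-introduction with $\top$ is just a weakening of the \emph{other} premise, which at join nodes is a derived line, not an axiom), and finishes with Lemma~\ref{lem:mres-to-res}. The one difference is cosmetic: the paper translates every singleton term as its own literal and uses $y_C$ only when $|C|\geq 2$, so the translation depends only on the term, and the $\neg\ell$ versus $\neg y_{\{\ell\}}$ bridging you identify as the main obstacle never comes up.
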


\section{Conclusion and future directions}

We introduced two weighted variants of incidence treewidth
and established the bounds in
Theorems~\ref{thm:main-bounds} and~\ref{thm:regular-main-bounds}.
These results improve previous upper bounds, but it remains
open whether every unsatisfiable CNF formula has an FPT-sized
resolution refutation parameterized by incidence treewidth.
It seems that fully answering this question is still difficult.

To answer this question positively, one would need to construct
FPT-sized resolution refutations parameterized by incidence
treewidth for all unsatisfiable CNF formulas.
Our results improve previous upper bounds and identify further
classes of formulas with short refutations, but extending
these results to all unsatisfiable CNF formulas appears
to require new techniques.

To answer this question negatively, one would need to prove
lower bounds that rule out an FPT bound on resolution
refutation length parameterized by incidence treewidth.
As discussed in Subsection~\ref{subsec:lower-bounds},
our constructions exclude some simple formula families
as candidates for such lower bounds.
Establishing a negative answer may therefore require
formulas with more complex structure.

Our results also suggest several further questions.

One direction is to study how to handle long clauses in the construction of refutations.
Cal\`i and Razgon~\cite{CaliRazgon2022} use a technique
in their preprint to handle additional long clauses.
It would be interesting to investigate whether this technique
can be adapted to our constructions.

Another question is whether every unsatisfiable CNF formula
has an FPT-sized regular resolution refutation parameterized
by partially log-weighted incidence treewidth.
Theorem~\ref{thm:regular-main-bounds} proves that such
refutations exist when parameterized by log-weighted
incidence treewidth, and we believe that an analogous result
may hold for partially log-weighted incidence treewidth.

Finally, Amir~\cite{Amir2010} gave approximation algorithms
for weighted treewidth, which can be used to approximate our log-weighted
incidence treewidth. It would be interesting to investigate whether there are
efficient algorithms for computing or approximating
partially log-weighted incidence treewidth.

\bibliography{references}

\appendix

\section{Proofs of Lemmas~\ref{lem:nice-empty-log-incidence}
and~\ref{lem:plog-decomposition-size}}
\label{app:weighted-decompositions}

\NiceLogDecomposition*

\begin{proof}
Let $(T_0,\chi_0)$ be a tree decomposition of $G^*(F)$
of log-weighted width
$w_{\log}:=\mathrm{tw}_{\log}^*(F)$.
By~\cite[Section~10.4]{KleinbergTardos2006},
there exists a tree decomposition $(T_1,\chi_1)$ of $G^*(F)$
with at most $n+m$ nodes such that every bag of
$(T_1,\chi_1)$ is a bag of $(T_0,\chi_0)$.
Hence $w_{\log}(T_1,\chi_1)\leq w_{\log}$.
Let $w:=w(T_1,\chi_1)$. Then $w\leq w_{\log}$.

We transform $(T_1,\chi_1)$ into a nice tree decomposition
$(T,\chi,r)$ by the following standard operations.
Choose an arbitrary node $r_1\in V(T_1)$ as the root of $T_1$.
Replace each node $t$ with $d\geq2$ children by a rooted
binary tree with $d$ leaves, all of whose bags equal
$\chi_1(t)$, and attach one original child to each leaf.
For each pair of adjacent bags $X,Y$, with $X$ above $Y$,
replace their connecting edge by a path whose bags are
obtained from $X$ by deleting the vertices of $X\setminus Y$
one at a time and then adding the vertices of $Y\setminus X$
one at a time.
Add a path from a new empty root bag to the current root bag
by adding its vertices one at a time.
Similarly, extend each nonempty leaf bag to an empty leaf bag
by deleting its vertices one at a time.
Finally, contract any node with its unique child
when their bags are equal.

The resulting decomposition $(T,\chi,r)$ is a nice tree decomposition.
Every bag in $(T,\chi,r)$ is contained in a bag of $(T_1,\chi_1)$,
so $w_{\log}(T,\chi)\leq w_{\log}$.
Then $w_{\log}(T,\chi)=w_{\log}$ by the definition of
$\mathrm{tw}_{\log}^*(F)$.
The tree obtained by replacing each node with $d\geq2$
children by a binary tree with $d$ leaves and $2d-1$ nodes
has $O(|V(T_1)|)$ nodes, since the sum of the numbers
of children over all nodes of $T_1$ is $|V(T_1)|-1$.
Each path used to connect adjacent bags or to obtain an empty
root or leaf bag has $O(w+1)$ nodes.
So we have $|V(T)|=O(|V(T_1)|(w+1))
=O((n+m)(w_{\log}+1))$.
\end{proof}

\PlogDecompositionSize*

\begin{proof}
Choose a clause-path family $\mathcal P$ such that
$w_{\mathrm{plog}}(T,\chi,r,\mathcal P)=w$. For each vertex $v$
of $G^*(F)$, choose a node $s_v$ whose bag contains $v$. For each
edge $\{x,c\}$ of $G^*(F)$, choose a node $s_{x,c}$ whose bag
contains both $x$ and $c$.

Let $M$ be the set consisting of the root $r$, all nodes
$s_v$ and $s_{x,c}$, and the endpoints of every nonempty
path $P_c$. Then $|M|=O(k(n+m))$.

Let $S$ be the minimal subtree of $T$ containing all nodes in $M$.
For each $u$ which is a leaf of $S$, choose a leaf $t_u$ of $T$
that is a descendant of $u$, and let $Q_u$ be the unique path
in $T$ from $u$ to $t_u$. Let $\widehat S$ be the subtree of $T$
induced by
\[
V(S)\cup\bigcup_{u\in\mathcal L(S)}V(Q_u).
\]
That is, $\widehat S$ is obtained from $S$ by extending each leaf
of $S$ to a leaf of $T$.

Next, we show that
$\bigl(\widehat S,\left.\chi\right|_{V(\widehat S)}\bigr)$
is a tree decomposition of $G^*(F)$. Indeed, every vertex $v$
belongs to the bag $\chi(s_v)$, and every edge $\{x,c\}$
is contained in the bag $\chi(s_{x,c})$. All these nodes
belong to $M\subseteq V(\widehat S)$. Moreover, for each
vertex $v$ of $G^*(F)$, the set
$\{t\in V(\widehat S):v\in\chi(t)\}$ is the intersection
of two subtrees of $T$ and is therefore connected.

In $\widehat S$, a join node of $T$ may have only one child, in
which case the node and its child have the same bag. Let $T'$ be
obtained from $\widehat S$ by repeatedly contracting an edge
$uv$ such that $v$ is the unique child of $u$ and
$\chi(u)=\chi(v)$. For each $t\in V(T')$, let $X_t$ be the set of
nodes of $\widehat S$ contracted into $t$. All nodes in $X_t$
have the same bag, so we may define
$\chi'(t):=\chi(u)$ for any $u\in X_t$. Let $r'$ be the root of
$T'$. It is easy to check that the contractions preserve the
tree decomposition properties, and $(T',\chi',r')$ is a nice
tree decomposition.

If $P_c\neq\emptyset$, then its endpoints belong to $M$, and hence
$P_c\subseteq\widehat S$. Let $P'_c$ be the path obtained from
$P_c$ by the contractions used to construct $T'$. Then $P'_c$
is a path from the root of $T'(c)$ to a leaf of $T'(c)$. If
$P_c=\emptyset$, set $P'_c=\emptyset$.
$\mathcal P'=(P'_c)_{c\in F}$ is a clause-path family.

Since $P'_c$ is obtained from $P_c$ by contracting some nodes that
have the same bag as their unique child,
\[
\bigcup_{u\in P'_c}\chi'_v(u)
=
\bigcup_{u\in P_c}\chi_v(u).
\]
Hence $\rho_{\mathcal P'}(c)=\rho_{\mathcal P}(c)$ for every
$c\in F$.

Consider an edge $uv$ contracted in the construction of $T'$,
where $v$ is the unique child of $u$ and
$\chi(u)=\chi(v)$. If $v\in P_c$, then $u\in P_c$, since $u$
belongs to $T(c)$ and lies between the root of $T(c)$ and $v$.
Conversely, suppose that $u\in P_c$. Since $v\in T(c)$, the node
$u$ is not a leaf of $T(c)$. If $v\notin P_c$, then $P_c$ passes
through another child of $u$. This child belongs to
$\widehat S$ because $P_c\subseteq\widehat S$, contradicting the
fact that $v$ is the unique child of $u$ in $\widehat S$.
Therefore
\[
u\in P_c\quad\Longleftrightarrow\quad v\in P_c.
\]
For every $t\in V(T')$, all nodes of $X_t$ are contracted
into $t$ in the construction of $T'$.
Since $u\in P_c$ if and only if $v\in P_c$ for every
contracted edge $uv$, either all nodes of $X_t$
belong to $P_c$ or none do.
Since $P'_c$ is the image of $P_c$ under these contractions
and $c\in F$ was arbitrary, it follows that, for every
$t\in V(T')$, $u\in X_t$, and $c\in F$,
\[
t\in P'_c\quad\Longleftrightarrow\quad u\in P_c.
\]
Since $\chi'(t)=\chi(u)$,
\[
|\chi'_v(t)|
+\sum_{c\in\chi'_c(t)}\omega_{\mathcal P'}(t,c)
=
|\chi_v(u)|
+\sum_{c\in\chi_c(u)}\omega_{\mathcal P}(u,c)
\leq w+1.
\]
Thus
$w_{\mathrm{plog}}(T',\chi',r',\mathcal P')\leq w$.

It remains to bound $|V(T')|$. Since every vertex in
a bag has weight at least one, every bag of $T$ contains at most
$w+1$ vertices.

Let $Z$ consist of the nodes in $M$, the leaves of $\widehat S$,
and the nodes of $\widehat S$ with two children. Every leaf of
$S$ belongs to $M$, so $\widehat S$ has $O(|M|)$ leaves.
Since $\widehat S$ is a binary tree and has $O(|M|)$ leaves,
$\widehat S$ has $O(|M|)$ nodes with two children.
Hence $|Z|=O(|M|)$.
Let $\widehat S-Z$ denote the graph obtained from $\widehat S$ by
deleting the nodes in $Z$ and all edges incident with them. Every
connected component of $\widehat S-Z$ is a path, and
$\widehat S-Z$ has $O(|M|)$ connected components.

For every connected component $R$ of $\widehat S-Z$, there are
unique nodes $u_R,v_R\in Z$ such that $R\cup\{u_R,v_R\}$
is the path in $\widehat S$ from $u_R$ to $v_R$.

Let $a\in\chi(t)$ for some $t\in V(R)$. The node $s_a$
belongs to $M\subseteq Z$, and its bag contains $a$.
Since $R\cap M=\emptyset$, the path in $\widehat S$ from $t$
to $s_a$ is not contained in $R$, so it contains either
$u_R$ or $v_R$. Since the bags containing $a$ form a connected
subtree of $T$, we have $a\in\chi(u_R)\cup\chi(v_R)$.
So we have
\[
\bigcup_{t\in V(R)}\chi(t)
\subseteq
\chi(u_R)\cup\chi(v_R).
\]

For each
$a\in\bigcup_{t\in V(R)}\chi(t)$, the nodes of $R$ whose bags
contain $a$ form a subpath of $R$. Hence $a$ is introduced at
most once and forgotten at most once on $R$. No node of $R$ has
two children in $\widehat S$. Moreover, if two consecutive nodes
of $R$ have the same bag, then the edge between them is contracted
in the construction of $T'$. Consequently, any two consecutive
nodes of $T'$ corresponding to nodes of $R$ differ by the
introduction or forgetting of one vertex. Since
\[
\left|\bigcup_{t\in V(R)}\chi(t)\right|
\leq |\chi(u_R)\cup\chi(v_R)|
\leq 2(w+1),
\]
we have
\[
\left|\left\{t\in V(T'):X_t\cap V(R)\neq\emptyset\right\}\right|
\leq 4(w+1)+1.
\]

Let $\mathcal R$ be the set of connected components of
$\widehat S-Z$. Then
\[
\begin{aligned}
|V(T')|
&\leq |Z|+
\sum_{R\in\mathcal R}
\left|\left\{t\in V(T'):X_t\cap V(R)\neq\emptyset\right\}\right|\\
&\leq |Z|+(|Z|-1)\bigl(4(w+1)+1\bigr)\\
&=O\left(
|M|(w)
\right)\\
&=O\left(
k(n+m)w
\right).
\end{aligned}
\]
\end{proof}

\section{Proofs of Lemmas~\ref{lem:mres-to-res}
and~\ref{lem-mdnfres-to-dnfres}}
\label{app:multiset-proofs}

\MResToRes*

\begin{proof}
To simplify the presentation of the proof, we add the
repetition rule
\[
\frac{C}{C}
\]
to the resolution proof system in this proof. It is easy to see that any refutation using this
rule can be converted into a resolution refutation without
this rule, of no greater length or width. Moreover, if the
original refutation is regular, the resulting refutation
is also regular. Thus it suffices to construct the refutation in the
resolution proof system with the repetition rule.

Let $\pi'$ be the sequence obtained from $\pi$ by replacing
each multiclause $E$ with $\operatorname{red}(E)$ and deleting
all tautological clauses. We show that $\pi'$ is a resolution
refutation of $F$ by specifying how each clause is derived.

If $E$ is an initial multiclause in $F$ in $\pi$, then
$\operatorname{red}(E)=E$ is an initial clause of $F$.
If $E=x\mlor\neg x$ is an axiom, then
$\operatorname{red}(E)$ is tautological and is therefore
deleted when constructing $\pi'$.

Suppose that $E$ is obtained from an earlier multiclause
$E'$ by zero or more applications of the contraction rule.
Then $\operatorname{red}(E)=\operatorname{red}(E')$.
If $\operatorname{red}(E)$ is non-tautological, it follows
from $\operatorname{red}(E')$ by the repetition rule.

We now consider a multiclause $E$ obtained by applying the
weakening rule or the resolution rule and then applying
the contraction rule, if needed.
We need only consider the case that
$\operatorname{red}(E)$ is non-tautological.

Suppose first that $E$ is obtained using the weakening rule. Then there are an earlier multiclause $E'$ in $\pi$ and a
multiclause $D$ such that $E'\mlor D$ is derived from $E'$
by the weakening rule, and $E$ is obtained from $E'\mlor D$
by applying the contraction rule, if needed.

If a multiclause $C'$ is obtained from a multiclause $C$ by
applying the contraction rule, then
$\operatorname{red}(C')=\operatorname{red}(C)$.
Hence $\operatorname{red}(E)=\operatorname{red}(E'\mlor D)
=\operatorname{red}(E')\lor\operatorname{red}(D)$.
Thus $\operatorname{red}(E')$ is non-tautological, and $\operatorname{red}(E)$
can be derived from $\operatorname{red}(E')$ by the weakening rule.

Suppose next that $E$ is obtained using the resolution rule.
Then there are earlier multiclauses $P=C\mlor x$ and
$Q=D\mlor\neg x$ in $\pi$ such that $R:=C\mlor D$ is derived
from $P$ and $Q$ by the resolution rule, and $E$ is obtained
from $R$ by applying the contraction rule, if needed.

We have
$\operatorname{red}(E)=\operatorname{red}(R)
=\operatorname{red}(C)\lor\operatorname{red}(D)$.
We next show that $\operatorname{red}(E)=\operatorname{red}(R)$
can be derived from earlier clauses of $\pi'$ by at most one
application of the weakening rule or the resolution rule.

If neither $x$ nor $\neg x$ belongs to $\operatorname{red}(R)$,
then neither $C$ nor $D$ contains $x$ or $\neg x$.
Consequently, the resolution step using $P$ and $Q$ is a variable-eliminating resolution.
Moreover,
$\operatorname{red}(P)=\operatorname{red}(C)\lor x$ and
$\operatorname{red}(Q)=\operatorname{red}(D)\lor\neg x$
are non-tautological.
Hence $\operatorname{red}(P)$ and $\operatorname{red}(Q)$
have been derived in $\pi'$. Applying the resolution rule
to $\operatorname{red}(P)$ and $\operatorname{red}(Q)$ on $x$ derives
$\operatorname{red}(C)\lor\operatorname{red}(D)
=\operatorname{red}(R)$.

If $x\in\operatorname{red}(R)$, then
$\operatorname{red}(R)=\operatorname{red}(R)\lor x
=\operatorname{red}(C)\lor x\lor\operatorname{red}(D)
=\operatorname{red}(P)\lor\operatorname{red}(D)$.
Since $\operatorname{red}(R)$ is non-tautological,
$\operatorname{red}(P)$ is also non-tautological and has been derived in $\pi'$. Thus $\operatorname{red}(R)$
is derived from $\operatorname{red}(P)$ by the weakening rule.

Similarly, if $\neg x\in\operatorname{red}(R)$, then
$\operatorname{red}(R)$ is derived from $\operatorname{red}(Q)$
by the weakening rule.

We have shown that every clause of $\pi'$ is either an initial
clause of $F$ or is derived from earlier clauses by the weakening rule, the resolution rule, or the
repetition rule.
Since $\pi$ ends with the empty multiclause, $\pi'$ ends with
the empty clause.
Therefore $\pi'$ is a resolution refutation of $F$.

By the definition of $\pi'$, we have $|\pi'|\leq|\pi|$.
For every multiclause $E$ in $\pi$, the clause
$\operatorname{red}(E)$ contains each literal of $E$ exactly
once. Thus
$\operatorname{width}(\operatorname{red}(E))\leq\operatorname{width}(E)$,
and consequently
$\operatorname{width}(\pi')\leq\operatorname{width}(\pi)$.

A clause $\operatorname{red}(E)$ in $\pi'$ is derived by
resolution on $x$ if and only if $E$ is derived by a
variable-eliminating resolution on $x$ in $\pi$.
The proof DAG of $\pi'$ can be obtained from the proof DAG of $\pi$ by
deleting the vertices for tautological multiclauses and, for
each resolution step replaced by weakening, deleting the
incoming edge from the multiclause not used by the weakening
rule. Each remaining vertex for a multiclause $E$ represents
$\operatorname{red}(E)$ in $\pi'$.
Hence every directed path in the proof DAG of $\pi'$ is also
a directed path in the proof DAG of $\pi$.

If a directed path in the
proof DAG of $\pi'$ contained two resolutions on the same
variable, the same path in the proof DAG of $\pi$ would contain
two variable-eliminating resolutions on that variable. So 
if on every directed path in the proof DAG of $\pi$, each
variable is resolved in at most one variable-eliminating resolution
step, then the resulting refutation is regular.
\end{proof}

\MDNFResToDNFRes*

\begin{proof}
We construct a $k\mathrm{\text{-}DNF\text{-}Res}$ derivation
$\pi'$ from $F$ by processing the mDNFs of $\pi$ in order. For every mDNF $E$ in $\pi$, we derive
$\operatorname{red}(E)$.

If $E$ is an initial clause of $F$, then
$\operatorname{red}(E)=E$ is also an initial clause of $F$.
If $E=x\mlor\neg x$ is an axiom, then
$\operatorname{red}(E)=x\lor\neg x$ is an axiom.
For the axiom $\top$, choose a variable
$x\in\operatorname{var}(F)$.
Applying the $\land$-elimination rule to the axiom
$x\lor\neg x$ gives $\top\lor\neg x$.
A second application of the $\land$-elimination rule
gives $\top$.

For every other mDNF $E$, the definition of a derivation
provides an mDNF $\widetilde E$ from which $E$ is obtained
by zero or more applications of the contraction rule.
Since the contraction rule does not change the set of
terms, we have
$\operatorname{red}(E)=\operatorname{red}(\widetilde E)$.
There are two cases.

If $\widetilde E$ is an earlier mDNF in $\pi$, then
$\operatorname{red}(\widetilde E)$ has already been derived,
so no additional DNF is needed.

Otherwise, $\widetilde E$ is derived from earlier mDNFs
by one application of a rule other than contraction.
We consider each such rule and show how to derive
$\operatorname{red}(E)=\operatorname{red}(\widetilde E)$
from the reductions of those earlier mDNFs.

Suppose first that $\widetilde E=D_0\mlor D_1$ is obtained
from $D_0$ by the weakening rule.
Then
\[
\operatorname{red}(E)
=\operatorname{red}(D_0)\lor\operatorname{red}(D_1).
\]
Hence $\operatorname{red}(E)$ follows from
$\operatorname{red}(D_0)$ by the weakening rule.

Suppose next that $\widetilde E=D_0\mlor T'$ is obtained
from $D_0\mlor T$ by the $\land$-elimination rule, where
$T'\subseteq T$.
Let $S:=\operatorname{red}(D_0)$.
Then $\operatorname{red}(E)=S\lor T'$.

If $m_{D_0}(T)=0$, then
$\operatorname{red}(D_0\mlor T)=S\lor T$, with $T\notin S$.
Applying the $\land$-elimination rule to $S\lor T$ gives
$S\lor T'=\operatorname{red}(E)$.
If $m_{D_0}(T)>0$, then
$\operatorname{red}(D_0\mlor T)=S$.
In this case, $\operatorname{red}(E)=S\lor T'$ follows
from $S$ by the weakening rule.

Suppose next that
$\widetilde E=D_1\mlor D_2\mlor(T_1\land T_2)$ is obtained
from $D_1\mlor T_1$ and $D_2\mlor T_2$ by the
$\land$-introduction rule.
Let $S_i:=\operatorname{red}(D_i)$ for $i\in\{1,2\}$,
and let $T:=T_1\land T_2$.
Then
\[
\operatorname{red}(E)=S_1\lor S_2\lor T.
\]

If $m_{D_1}(T_1)=m_{D_2}(T_2)=0$, then the two earlier
mDNFs reduce to $S_1\lor T_1$ and $S_2\lor T_2$, with
$T_i\notin S_i$ for $i\in\{1,2\}$.
Applying the $\land$-introduction rule to these two
DNFs gives $S_1\lor S_2\lor T=\operatorname{red}(E)$.
Otherwise, $m_{D_i}(T_i)>0$ for some $i\in\{1,2\}$,
so $\operatorname{red}(D_i\mlor T_i)=S_i$.
Since $\operatorname{red}(E)=S_1\lor S_2\lor T$,
we can derive $\operatorname{red}(E)$ from $S_i$
by the weakening rule.

Finally, suppose that $\widetilde E=D_1\mlor D_2$ is
obtained by the cut rule from
\[
D_1\mlor\ell_1\mlor\cdots\mlor\ell_q
\qquad\text{and}\qquad
D_2\mlor(\neg\ell_1\land\cdots\land\neg\ell_q).
\]
Let $a_1,\ldots,a_r$ be the distinct literals among
$\ell_1,\ldots,\ell_q$, let
$N:=\neg a_1\land\cdots\land\neg a_r$, and let
$S_i:=\operatorname{red}(D_i)$ for $i\in\{1,2\}$.
The two earlier mDNFs reduce to
\[
P:=S_1\lor a_1\lor\cdots\lor a_r
\qquad\text{and}\qquad
Q:=S_2\lor N.
\]
Moreover, $\operatorname{red}(E)=S_1\lor S_2$.

Let $P^{-}$ be obtained from $P$ by deleting the
singleton terms $a_1,\ldots,a_r$, and let $Q^{-}$ be
obtained from $Q$ by deleting the term $N$.
Then
$P=P^{-}\lor a_1\lor\cdots\lor a_r$ and
$Q=Q^{-}\lor N$.
Applying the cut rule to $P$ and $Q$ gives
$U:=P^{-}\lor Q^{-}$.
Since $P^{-}\subseteq S_1$ and $Q^{-}\subseteq S_2$,
every term of $U$ belongs to
$\operatorname{red}(E)=S_1\lor S_2$.
Thus $\operatorname{red}(E)$ follows from $U$ by
at most one application of the weakening rule.

Every term appearing in $\pi'$ either appears in $\pi$
or is a singleton term or the empty term.
Thus every term in $\pi'$ contains at most $k$ literals.
Since $\pi$ ends with $\bot$ and
$\operatorname{red}(\bot)=\bot$, the derivation $\pi'$
is a $k\mathrm{\text{-}DNF\text{-}Res}$ refutation of $F$.
For each mDNF in $\pi$, the construction adds at most
three DNFs to $\pi'$.
Hence $|\pi'|\leq 3|\pi|$.
\end{proof}

\section{Proofs of the lemmas in
Subsection~\ref{subsec:inconsistent-states}} \label{app:inconsistent-states}

\JoinVarStructure*

\begin{proof}
Since $t$ is a join node,
\[
\chi_v(t)=\chi_v(t_1)=\chi_v(t_2)\subseteq\mathrm{Var}_{t_1}\cap\mathrm{Var}_{t_2},
\qquad
\chi_c(t)=\chi_c(t_1)=\chi_c(t_2)\subseteq\mathrm{Cls}_{t_1}\cap\mathrm{Cls}_{t_2}.
\]
Moreover, $T_t$ consists of $t$, $T_{t_1}$, and $T_{t_2}$. Therefore,
\[
\begin{aligned}
\mathrm{Var}_t
&=\chi_v(t)\cup\mathrm{Var}_{t_1}\cup\mathrm{Var}_{t_2}
 =\mathrm{Var}_{t_1}\cup\mathrm{Var}_{t_2},\\
\mathrm{Cls}_t
&=\chi_c(t)\cup\mathrm{Cls}_{t_1}\cup\mathrm{Cls}_{t_2}
 =\mathrm{Cls}_{t_1}\cup\mathrm{Cls}_{t_2}.
\end{aligned}
\]

It remains to prove the intersection identity. The inclusion $\chi_v(t)\subseteq\mathrm{Var}_{t_1}\cap\mathrm{Var}_{t_2}$ follows from $\chi_v(t)=\chi_v(t_1)=\chi_v(t_2)$. Conversely, if $x\in\mathrm{Var}_{t_1}\cap\mathrm{Var}_{t_2}$, then $T_{t_1}$ and $T_{t_2}$ both contain a node whose bag contains $x$. The unique path between these two nodes passes through $t$. Since the nodes whose bags contain $x$ induce a connected subtree of $T$, every node on this path has a bag containing $x$. Hence $x\in\chi_v(t)$. Together with $\chi_v(t)\subseteq\mathrm{Var}_{t_1}\cap\mathrm{Var}_{t_2}$, we have $\mathrm{Var}_{t_1}\cap\mathrm{Var}_{t_2}=\chi_v(t)$.

Finally, let $c\in\mathrm{Cls}_t\setminus\chi_c(t)$ with $c\in\mathrm{Cls}_{t_i}$. Since the nodes whose bags contain $c$ induce a connected subtree and $c\notin\chi(t)$, all such nodes belong to $T_{t_i}$. For every $x\in\mathrm{var}(c)$, some bag contains both $x$ and $c$, and hence $x\in\mathrm{Var}_{t_i}$. Thus we have $\mathrm{var}(c)\subseteq\mathrm{Var}_{t_i}$. Consequently, any literal of $c$ satisfied by $\tau$ is also satisfied by $\tau|_{\mathrm{Var}_{t_i}}$, so the latter satisfies $c$.
\end{proof}

\IntroduceNoX*

\begin{proof}
Suppose that some $c\in\mathrm{Cls}_{t'}\setminus\chi_c(t')$ contains a literal on $x$. Since $\{x,c\}$ is an edge of $G^*(F)$, there is a node $s\in V(T)$ such that $\{x,c\}\subseteq\chi(s)$.

Since $c\in\mathrm{Cls}_{t'}$, some node in $T_{t'}$ has a bag containing $c$. If $s\notin V(T_{t'})$, the path from $s$ to this node passes through $t'$. As the nodes whose bags contain $c$ induce a connected subtree, we would have $c\in\chi_c(t')$, a contradiction. Hence $s\in V(T_{t'})$.

Now both $\chi(s)$ and $\chi(t)$ contain $x$, and the path from $s$ to $t$ passes through $t'$. Since the nodes whose bags contain $x$ induce a connected subtree, we have $x\in\chi_v(t')$, contradicting that $t$ introduces $x$.
\end{proof}

\IntroduceClauseLocal*

\begin{proof}
The inclusion $\mathrm{var}(c)\cap\chi_v(t)\subseteq\mathrm{var}(c)\cap\mathrm{Var}_t$ is obvious. For the reverse inclusion, let $x\in\mathrm{var}(c)\cap\mathrm{Var}_t$. Since $c\in\chi_c(t)\setminus\chi_c(t')$, no node in $T_{t'}$ has a bag containing $c$; otherwise, the path from such a node to $t$ would pass through $t'$, forcing $c\in\chi_c(t')$. Suppose that $x\notin\chi_v(t)$. Since $x\in\mathrm{Var}_t$, some node in $T_{t'}$ has a bag containing $x$. On the other hand, the edge $\{x,c\}$ is covered by a bag outside $T_{t'}$. The path between these two nodes passes through $t$, and the nodes whose bags contain $x$ induce a connected subtree. Hence we have $x\in\chi_v(t)$, a contradiction. Therefore, $\mathrm{var}(c)\cap\mathrm{Var}_t=\mathrm{var}(c)\cap\chi_v(t)$.

If $\tau:\mathrm{Var}_t\to\{0,1\}$ satisfies $c$, then some satisfied literal of $c$ has its variable in $\mathrm{Var}_t$, hence by the equality above its variable lies in $\chi_v(t)$. Therefore $\tau|_{\chi_v(t)}$ also satisfies $c$.
\end{proof}

\JoinZeroRecursion*

\begin{proof}
Since $t$ is a join node, $\chi_v(t)=\chi_v(t_1)=\chi_v(t_2)$ and $\chi_c(t)=\chi_c(t_1)=\chi_c(t_2)$.

$(\Rightarrow)$ Assume $(t,\alpha,A)$ is inconsistent. Suppose there exist $A_1,A_2\subseteq \chi_c(t)$ with $A_1\cup A_2=A$ such that both $(t_1,\alpha,A_1)$ and $(t_2,\alpha,A_2)$ are consistent. Then there are assignments $\tau_i:\mathrm{Var}_{t_i}\to\{0,1\}$ extending $\alpha$ such that every clause in $A_i$ and every clause in $\mathrm{Cls}_{t_i}\setminus \chi_c(t)$ is satisfied by $\tau_i$ for $i=1,2$.

By Lemma~\ref{lem:join-var-structure}(1), $\mathrm{Var}_{t_1}\cap \mathrm{Var}_{t_2}=\chi_v(t)$ and $\mathrm{Var}_{t_1}\cup \mathrm{Var}_{t_2}=\mathrm{Var}_t$. Since $\tau_1$ and $\tau_2$ agree on $\chi_v(t)$, they combine into $\tau:\mathrm{Var}_t\to\{0,1\}$ extending $\alpha$.

We claim that $\tau\in N(t,\alpha,A)$. First, if $c\in A$, then $c\in A_1$ or $c\in A_2$, hence $c$ is satisfied by $\tau$. Next, let $c\in \mathrm{Cls}_t\setminus \chi_c(t)$. Then by Lemma~\ref{lem:join-var-structure}(2),  $c\in \mathrm{Cls}_{t_i}\setminus \chi_c(t)=\mathrm{Cls}_{t_i}\setminus \chi_c(t_i)$ for some $i\in\{1,2\}$, so it is satisfied by $\tau_i$. Thus $c$ is also satisfied by $\tau$. So $\tau\in N(t,\alpha,A)$, contradicting the assumption.

$(\Leftarrow)$ Assume that for all $A_1,A_2\subseteq \chi_c(t)$ with $A_1\cup A_2=A$, at least one of $(t_1,\alpha,A_1)$ or $(t_2,\alpha,A_2)$ is inconsistent. Suppose that $(t,\alpha,A)$ is consistent, and $\tau \in N(t,\alpha,A)$.

Let $\tau_i:=\tau|_{\mathrm{Var}_{t_i}}$ and define $A_i:=\{c\in A:\tau_i \text{ satisfies } c\}$. Then $A_1\cup A_2=A$, since any clause $c\in A$ satisfied by $\tau$ has a literal $\ell$ satisfied by $\tau$, and $\mathrm{var}(\ell) \in \mathrm{Var}_{t}=\mathrm{Var}_{t_1} \cup \mathrm{Var}_{t_2}$ by Lemma~\ref{lem:join-var-structure}(1). 

We claim that $(t_i,\alpha,A_i)$ is consistent. Indeed, $\tau_i$ extends $\alpha$. If $c\in A_i$, then $c$ is satisfied by $\tau_i$ by definition. If $c\in \mathrm{Cls}_{t_i}\setminus \chi_{c}(t_i)=\mathrm{Cls}_{t_i}\setminus \chi_{c}(t)$, then $c\in \mathrm{Cls}_t\setminus \chi_c(t)$, so $\tau$ satisfies $c$. It follows that $\tau_i$ satisfies $c$ by Lemma~\ref{lem:join-var-structure}(2). Thus $\tau_i\in N(t_i,\alpha,A_i)$.

Hence both $(t_1,\alpha,A_1)$ and $(t_2,\alpha,A_2)$ are consistent, contradicting the assumption. Therefore $(t,\alpha,A)$ is inconsistent.
\end{proof}

\IntroduceVarZero*

\begin{proof}
Let $\alpha':=\alpha|_{\chi_v(t')}$.
Since $t$ introduces a variable, we have $\chi_v(t)=\chi_v(t') \cup \{x \}$, $\mathrm{Var}_{t}=\mathrm{Var}_{t'} \cup \{ x\}$, $\mathrm{Cls}_{t'}=\mathrm{Cls}_t$ and $\chi_c(t')=\chi_c(t)$. Therefore, we have $\mathrm{Cls}_{t'}\setminus \chi_c(t')=\mathrm{Cls}_t\setminus \chi_c(t)$.

$(\Leftarrow)$
Suppose $(t,\alpha,A)$ is consistent, and let $\tau\in N(t,\alpha,A)$. Set $\tau':=\tau|_{\mathrm{Var}_{t'}}$. Then $\tau'|_{\chi_v(t')}=\tau|_{\chi_v(t')}=\alpha|_{\chi_v(t')}=\alpha'$. 
If $c\in A_x$, then $c$ is not satisfied by $\tau(x)=\alpha(x)$. Since $\tau$ satisfies $c$, some other literal of $c$ is satisfied by $\tau$, and hence by $\tau'$. 
Moreover, by Lemma~\ref{lem:introduce-no-x}, every clause in $\mathrm{Cls}_{t'} \setminus \chi_c(t')$ contains no literal on $x$, and hence is satisfied by $\tau'$ since it is satisfied by $\tau$.
Thus $\tau'\in N(t',\alpha',A_x)$, contradicting the assumption.

$(\Rightarrow)$
Suppose $(t',\alpha',A_x)$ is consistent, and let $\tau'\in N(t',\alpha',A_x)$. Extend $\tau'$ to $\tau$ by setting $\tau(x):=\alpha(x)$. For $c \in A$, if $c\in A_x$, then $c$ is satisfied by $\tau'$; otherwise the literal of $c$ on $x$ is satisfied by $\tau$. Moreover, every clause in $\mathrm{Cls}_t\setminus \chi_c(t)=\mathrm{Cls}_{t'}\setminus \chi_c(t')$ is satisfied by $\tau'$, and hence by $\tau$.  
Thus $\tau\in N(t,\alpha,A)$, contradicting the assumption.
\end{proof}

\IntroduceClauseZero*

\begin{proof}
Since $t$ introduces $c$, we have $\chi_v(t)=\chi_v(t')$, $\mathrm{Var}_t=\mathrm{Var}_{t'}$, $\chi_c(t)=\chi_c(t')\cup\{c\}$, and $\mathrm{Cls}_t=\mathrm{Cls}_{t'}\cup\{c\}$. In particular, $\mathrm{Cls}_t\setminus\chi_c(t)=\mathrm{Cls}_{t'}\setminus\chi_c(t')$.

Suppose first that $c\notin A$. Then $A\subseteq\chi_c(t')$. For every assignment $\tau$ on $\mathrm{Var}_t=\mathrm{Var}_{t'}$, we have $\tau\in N(t,\alpha,A)$ if and only if $\tau$ extends $\alpha$, satisfies every clause in $A$, and satisfies every clause in $\mathrm{Cls}_{t'}\setminus\chi_c(t')$. These are precisely the conditions for $\tau\in N(t',\alpha,A)$. Hence $N(t,\alpha,A)=N(t',\alpha,A)$, and (a) follows.

Now suppose that $c\in A$ and $\alpha$ satisfies $c$. Every assignment extending $\alpha$ satisfies $c$. Therefore, for every assignment $\tau$ on $\mathrm{Var}_t=\mathrm{Var}_{t'}$, we have $\tau\in N(t,\alpha,A)$ if and only if $\tau\in N(t',\alpha,A\setminus\{c\})$. Hence $N(t,\alpha,A)=N(t',\alpha,A\setminus\{c\})$, and (b) follows.

Finally, suppose that $c\in A$ and $\alpha$ does not satisfy $c$. If $\tau\in N(t,\alpha,A)$, then $\tau$ satisfies $c$. By Lemma~\ref{lem:introduce-clause-local}, $\tau|_{\chi_v(t)}$ also satisfies $c$. But $\tau$ extends $\alpha$, so $\tau|_{\chi_v(t)}=\alpha$, contradicting the assumption on $\alpha$. Thus $N(t,\alpha,A)=\emptyset$, which gives (c).

This proves the lemma.
\end{proof}

\ForgetVarZero*

\begin{proof}
Since $t$ forgets a variable, we have $\chi_v(t) \cup \{x \}=\chi_v(t')$ and $\mathrm{Var}_t=\mathrm{Var}_{t'}$, $\chi_c(t)=\chi_c(t')$ and $\mathrm{Cls}_t=\mathrm{Cls}_{t'}$, hence $\mathrm{Cls}_t\setminus \chi_c(t)=\mathrm{Cls}_{t'}\setminus \chi_c(t')$.

We show
\[
N(t,\alpha,A)\neq\emptyset \Longleftrightarrow N(t',\alpha_0,A)\neq\emptyset \text{ or } N(t',\alpha_1,A)\neq\emptyset.
\]

$(\Rightarrow)$
Let $\tau\in N(t,\alpha,A)$ and set $b:=\tau(x)$. Then $\tau$ is defined on $\mathrm{Var}_t=\mathrm{Var}_{t'}$, extends $\alpha_b$ and satisfies every clause in $A$. Moreover, every clause in $\mathrm{Cls}_{t'}\setminus \chi_c(t')=\mathrm{Cls}_t\setminus \chi_c(t)$ is satisfied by $\tau$. Thus $\tau\in N(t',\alpha_b,A)$.

$(\Leftarrow)$
Let $\tau\in N(t',\alpha_b,A)$ for some $b\in\{0,1\}$. Since $\mathrm{Var}_{t'}=\mathrm{Var}_t$, we view $\tau$ as an assignment on $\mathrm{Var}_t$. As $\alpha_b$ extends $\alpha$, $\tau$ extends $\alpha$ and satisfies every clause in $A$. Moreover, every clause in $\mathrm{Cls}_t\setminus \chi_c(t)=\mathrm{Cls}_{t'}\setminus \chi_c(t')$ is satisfied by $\tau$. Thus $\tau\in N(t,\alpha,A)$.
\end{proof}

\ForgetClauseZero*

\begin{proof}
Since $t$ forgets a clause $c$, we have $\chi_v(t)=\chi_v(t')$, $\mathrm{Var}_t=\mathrm{Var}_{t'}$, $\chi_c(t')=\chi_c(t)\cup\{c\}$, and $\mathrm{Cls}_{t'}=\mathrm{Cls}_t$. Hence $\mathrm{Cls}_t\setminus \chi_c(t)=\bigl(\mathrm{Cls}_{t'}\setminus \chi_c(t')\bigr)\cup\{c\}$.

We need to show that
\[
N(t,\alpha,A)\neq\emptyset
\Longleftrightarrow
N(t',\alpha,A\cup\{c\})\neq\emptyset.
\]

$(\Rightarrow)$
Let $\tau\in N(t,\alpha,A)$. Then $\tau$ extends $\alpha$, satisfies every clause in $A$, and satisfies every clause in $\mathrm{Cls}_t\setminus \chi_c(t)$. Since $\mathrm{Cls}_t\setminus \chi_c(t)=\bigl(\mathrm{Cls}_{t'}\setminus \chi_c(t')\bigr)\cup\{c\}$, it follows that $\tau$ satisfies every clause in $\mathrm{Cls}_{t'}\setminus \chi_c(t')$ and satisfies $c$. Thus $\tau\in N(t',\alpha,A\cup\{c\})$.

$(\Leftarrow)$
Let $\tau\in N(t',\alpha,A\cup\{c\})$. Then $\tau$ extends $\alpha$, satisfies every clause in $A\cup\{c\}$, and satisfies every clause in $\mathrm{Cls}_{t'}\setminus \chi_c(t')$. Since $\mathrm{Cls}_t\setminus \chi_c(t)=\bigl(\mathrm{Cls}_{t'}\setminus \chi_c(t')\bigr)\cup\{c\}$, it follows that $\tau$ satisfies every clause in $A$ and every clause in $\mathrm{Cls}_t\setminus \chi_c(t)$. Hence $\tau\in N(t,\alpha,A)$.
\end{proof}

\section{Proof of Theorem~\ref{thm:ext-to-res}}
\label{app:ext-to-res}

\ExtToRes*

\begin{proof}
By Lemma~\ref{lem:nice-empty-incidence} and
Theorem~\ref{thm:td-to-dnfres-tight}, we obtain a
$k\mathrm{\text{-}mDNF\text{-}Res}$ refutation $\pi$
of $F$ of length $(n+m)2^{O(\mathrm{tw}^*(F))}$.
Every term in $\pi$ is either a singleton term or
of the form $\neg C$ for some possibly empty
subclause $C\subseteq c$ of a clause $c\in F$.

For every nonempty term $T$ in $\pi$, let
$p_T:=\ell$ if $T=\{\ell\}$, and let
$p_T:=\neg y_C$ if $T=\neg C$ and $|T|\geq2$.
For an mDNF $D=T_1\mlor\cdots\mlor T_r$ containing
no empty term, let
\[
\mathcal C(D):=p_{T_1}\mlor\cdots\mlor p_{T_r}.
\]
Thus $m_{\mathcal C(D)}(p_T)=m_D(T)$ for every
term $T$ in $D$, and $\mathcal C(\bot)=\bot$.
We construct an $\mathrm{mRes}$ derivation $\pi'$
from $F'$ by processing the mDNFs of $\pi$ in order
and deriving $\mathcal C(D)$ whenever $D$ contains
no empty term.
We skip every mDNF containing the empty term $\top$.

For every nonempty term $T$ in $\pi$ and every
$\ell\in T$, let
\[
P_{T,\ell}:=\neg p_T\mlor\ell,
\qquad
Q_T:=p_T\mlor\bigmlor_{\ell\in T}\neg\ell.
\]
If $T=\neg C$ and $|T|\geq2$, then
$p_T=\neg y_C$ and $\neg\ell\in C$.
Hence $P_{T,\ell}=y_C\mlor\ell$ and
$Q_T=\neg y_C\mlor\bigmlor\limits_{\ell\in C}\ell$
are initial multiclauses of $F'$.
If $T=\{\ell\}$, both $P_{T,\ell}$ and $Q_T$
are the axiom $\ell\mlor\neg\ell$.

If $D=c$ is an initial clause in $F$, then
$\mathcal C(D)=c$ is an initial multiclause of $F'$.
If $D=x\mlor\neg x$ is an axiom, then
$\mathcal C(D)=x\mlor\neg x$ is also an axiom.
We skip the axiom $\top$.

Suppose that $D$ contains no empty term and is
obtained from an earlier mDNF $E$ by zero or more
applications of the contraction rule.
The contraction rule does not change the set of
terms, so $E$ also contains no empty term.
By applying the contraction rule to $p_T$ for each application of the contraction rule to
a term $T$,
we can derive $\mathcal C(D)$ from
$\mathcal C(E)$ in at most one step.

We now consider an mDNF $D$ obtained by one
application of a rule other than the contraction
rule from earlier mDNFs, followed by zero or more applications of
the contraction rule.
Let $E$ be the mDNF obtained before applying the
contraction rule.
We need only consider the case that $D$ contains
no empty term.
Since the contraction rule does not change the
set of terms, $E$ also contains no empty term.
We first show how to derive $\mathcal C(E)$.

Suppose that the weakening rule derives
$E=D_0\mlor D_1$ from $D_0$.
Neither $D_0$ nor $D_1$ contains an empty term.
By applying the weakening rule to $\mathcal C(D_0)$,
we derive
$\mathcal C(E)=\mathcal C(D_0)\mlor\mathcal C(D_1)$
in one step.

Suppose that the $\land$-elimination rule derives
$E=D_0\mlor S$ from $D_0\mlor T$, where $S\subseteq T$.
Since $E$ contains no empty term, $S$ is nonempty,
and hence $T$ is also nonempty.
Let $S=\{\ell_1,\ldots,\ell_r\}$.
Starting from $Q_S$, we apply the resolution rule
with $P_{T,\ell_1},\ldots,P_{T,\ell_r}$ in order,
on $\ell_1,\ldots,\ell_r$, respectively, and obtain
$p_S\mlor\bigmlor\limits_{j=1}^{r}\neg p_T$.
By applying the contraction rule to this multiclause,
we derive $\neg p_T\mlor p_S$.
By applying the resolution rule to
$\neg p_T\mlor p_S$ and $\mathcal C(D_0)\mlor p_T$,
we derive $\mathcal C(D_0)\mlor p_S=\mathcal C(E)$.
The derivation has length $O(k)$.

Suppose that the $\land$-introduction rule derives
$E=D_1\mlor D_2\mlor T$ from
$D_1\mlor T_1$ and $D_2\mlor T_2$, where
$T=T_1\cup T_2$.
Neither $D_1$ nor $D_2$ contains an empty term,
and $T$ is nonempty.
If $T_1=\top$, then $T=T_2$.
By applying the weakening rule to
$\mathcal C(D_2)\mlor p_{T_2}$, we derive
$\mathcal C(D_1)\mlor\mathcal C(D_2)\mlor p_{T_2}
=\mathcal C(E)$ in one step.
If $T_2=\top$, then $T=T_1$.
By applying the weakening rule to
$\mathcal C(D_1)\mlor p_{T_1}$, we derive
$\mathcal C(D_1)\mlor\mathcal C(D_2)\mlor p_{T_1}
=\mathcal C(E)$ in one step.

Now suppose that $T_1$ and $T_2$ are nonempty.
Let $T=\{\ell_1,\ldots,\ell_r\}$.
For each $j$, choose $i(j)\in\{1,2\}$ such that
$\ell_j\in T_{i(j)}$.
Starting from $Q_T$, we apply the resolution rule
with $P_{T_{i(1)},\ell_1},\ldots,P_{T_{i(r)},\ell_r}$
in order, on $\ell_1,\ldots,\ell_r$, respectively,
and obtain
$p_T\mlor\bigmlor\limits_{j=1}^{r}\neg p_{T_{i(j)}}$.
Since every $\neg p_{T_{i(j)}}$ belongs to
$\{\neg p_{T_1},\neg p_{T_2}\}$, by applying the
contraction rule and the weakening rule to this
multiclause, we derive
$H:=\neg p_{T_1}\mlor\neg p_{T_2}\mlor p_T$.
By applying the resolution rule to $H$ and
$\mathcal C(D_1)\mlor p_{T_1}$, we obtain
$\mathcal C(D_1)\mlor\neg p_{T_2}\mlor p_T$.
By applying the resolution rule to this multiclause
and $\mathcal C(D_2)\mlor p_{T_2}$, we derive
$\mathcal C(D_1)\mlor\mathcal C(D_2)\mlor p_T
=\mathcal C(E)$.
The derivation has length $O(k)$.

Finally, suppose that the cut rule derives
$E=D_1\mlor D_2$ from
$D_1\mlor\ell_1\mlor\cdots\mlor\ell_q$ and
$D_2\mlor T$, where
$T=\neg\ell_1\land\cdots\land\neg\ell_q$
and $1\leq q\leq k$.
Since $E$ contains no empty term and $T$ is
nonempty, both mDNFs used in the cut rule contain
no empty term.
For each $j\in\{1,\ldots,q\}$, by applying the
resolution rule to $\mathcal C(D_2)\mlor p_T$ and
$P_{T,\neg\ell_j}$, we derive
$B_j:=\mathcal C(D_2)\mlor\neg\ell_j$.
Let
\[
\begin{aligned}
A_0&:=\mathcal C(D_1)\mlor\ell_1\mlor\cdots\mlor\ell_q,\\
A_j&:=\mathcal C(D_1)\mlor\mathcal C(D_2)
       \mlor\ell_{j+1}\mlor\cdots\mlor\ell_q
       \qquad(1\leq j\leq q).
\end{aligned}
\]
By applying the resolution rule to $A_0$ and $B_1$
on $\ell_1$, we derive $A_1$.
For $2\leq j\leq q$, by applying the resolution rule
to $A_{j-1}$ and $B_j$ on $\ell_j$, we obtain
$A_j\mlor\mathcal C(D_2)$.
Since every literal of $\mathcal C(D_2)$ belongs
to $A_j$, by applying the contraction rule to
$A_j\mlor\mathcal C(D_2)$, we derive $A_j$.
Thus we derive $A_q=\mathcal C(E)$.
The derivation has length $O(k)$.

In each case, we derive $\mathcal C(E)$ in $O(k)$
steps.
Since $D$ is obtained from $E$ by applications of
the contraction rule, we can obtain $\mathcal C(D)$
from $\mathcal C(E)$ by applying the contraction rule to $p_T$ for each application of the contraction rule to
a term $T$.
By the definition of an $\mathrm{mRes}$ derivation,
this requires at most one further step.
Thus we derive $\mathcal C(D)$ in $O(k)$ steps.

We have therefore constructed an $\mathrm{mRes}$
derivation $\pi'$ from $F'$ of length $O(k|\pi|)$.
Since $\pi$ ends with the empty mDNF and
$\mathcal C(\bot)=\bot$, $\pi'$ ends with the
empty multiclause.
Hence $\pi'$ is an $\mathrm{mRes}$ refutation of $F'$.
By Lemma~\ref{lem:mres-to-res}, we obtain a resolution
refutation of $F'$ of length at most $|\pi'|$,
and therefore of length
$k(n+m)2^{O(\mathrm{tw}^*(F))}$.
\end{proof}

\end{document}